\begin{document}

\title{Accelerating JavaScript Static Analysis via Dynamic Shortcuts (Extended Version)}

\author{Joonyoung Park}
\authornote{Both authors contributed equally to the paper.}
\affiliation{%
  \institution{Korea Advanced Institute of Science and Technology}
  \state{Daejeon}
  \country{South Korea}
}
\email{gmb55@kaist.ac.kr}

\author{Jihyeok Park}
\authornotemark[1]
\affiliation{%
  \institution{Korea Advanced Institute of Science and Technology}
  \state{Daejeon}
  \country{South Korea}
}
\email{jhpark0223@kaist.ac.kr}

\author{Dongjun Youn}
\affiliation{%
  \institution{Korea Advanced Institute of Science and Technology}
  \city{Daejeon}
  \country{South Korea}
}
\email{f52985@kaist.ac.kr}

\author{Sukyoung Ryu}
\affiliation{%
  \institution{Korea Advanced Institute of Science and Technology}
  \city{Daejeon}
  \country{South Korea}
}
\email{sryu.cs@kaist.ac.kr}

\renewcommand{\shortauthors}{Park and Park, et al.}

\begin{abstract}
JavaScript has become one of the most widely used programming languages for web
development, server-side programming, and even micro-controllers for IoT.
However, its extremely functional and dynamic features degrade the performance
and precision of static analysis.  Moreover, the variety of built-in functions
and host environments requires excessive manual modeling of their behaviors.  To
alleviate these problems, researchers have proposed various ways to leverage
dynamic analysis during JavaScript static analysis.  However, they do not fully
utilize the high performance of dynamic analysis and often sacrifice the
soundness of static analysis.

In this paper, we present \textit{dynamic shortcuts}, a new technique to
flexibly switch between abstract and concrete execution during JavaScript static
analysis in a sound way.  It can significantly improve the analysis performance
and precision by using highly-optimized commercial JavaScript engines and lessen
the modeling efforts for opaque code.  We actualize the technique via $\tool$,
an extended combination of SAFE and Jalangi, a static analyzer and a dynamic
analyzer, respectively.  We evaluated $\tool$ using 269 official tests of Lodash
4 library.  Our experiment shows that $\tool$ is 7.81$\x$ faster than the
baseline static analyzer, and it improves the precision to reduce failed
assertions by 12.31\% on average for 22 opaque functions.
\end{abstract}

\begin{CCSXML}
  <ccs2012>
  <concept>
  <concept_id>10011007.10011074.10011099.10011102.10011103</concept_id>
  <concept_desc>Software and its engineering~Software testing and debugging</concept_desc>
  <concept_significance>500</concept_significance>
  </concept>
  </ccs2012>
\end{CCSXML}

\ccsdesc[500]{Software and its engineering~Software testing and debugging}

\keywords{JavaScript, static analysis, dynamic analysis, dynamic shortcut,
{\sealed} execution}

\maketitle

\section{Introduction}\label{sec:intro}
Over the past decades, the rise of JavaScript as the de facto
language for web development has expanded its reach to diverse fields.
Node.js~\cite{nodejs} supports server-side programming, React
Native~\cite{react-native} and Electron~\cite{electron} produce cross-platform
applications, and Moddable~\cite{moddable} and Espruino~\cite{espruino}
provide JavaScript environments in micro-controllers for IoT.  Such wide prevalent
uses place JavaScript at \#7 programming language in the TIOBE Programming Community
index\footnote{https://www.tiobe.com/tiobe-index/}.  Thus, researchers have
developed static analyzers such as JSAI~\cite{jsai}, TAJS~\cite{tajs},
WALA~\cite{correlation}, and SAFE~\cite{safe,safe2} to understand behaviors of
JavaScript programs and to detect their bugs in a sound manner.

However, static analysis of real-world JavaScript programs suffers from immensely
functional and dynamic features of JavaScript such as callback
functions, first-class property names, and dynamic code generation.
While they provide flexibility in software development, it is
challenging to statically analyze such features.  To overcome these problems,
researchers have proposed several analysis techniques: advanced string
domains~\cite{string, regex, combining-string}, loop sensitivity~\cite{lsaECOOP,
lsaSPE}, analysis based on property relations~\cite{correlation, weaklyAPLAS,
weaklySPE, value-partitioning}, and on-demand backward
analysis~\cite{value-refinement}.

At the same time, JavaScript host environments require excessive
manual modeling of their behaviors for static analysis.  Because built-in functions and
host-dependent functions are implemented in native languages like C and
C++ instead of JavaScript, their code is \textit{opaque} during static
analysis.  Thus, static analyzers often model their behaviors
manually, which is error-prone, tedious, and labor-intensive.
While researchers have proposed automatic modeling
techniques~\cite{safewapi, safets}, since they utilize only type
information, they generate imprecise models compared with the manual approach.

\begin{figure}[t]
  \centering
  \vspace{2mm}
  \includegraphics[width=\linewidth]{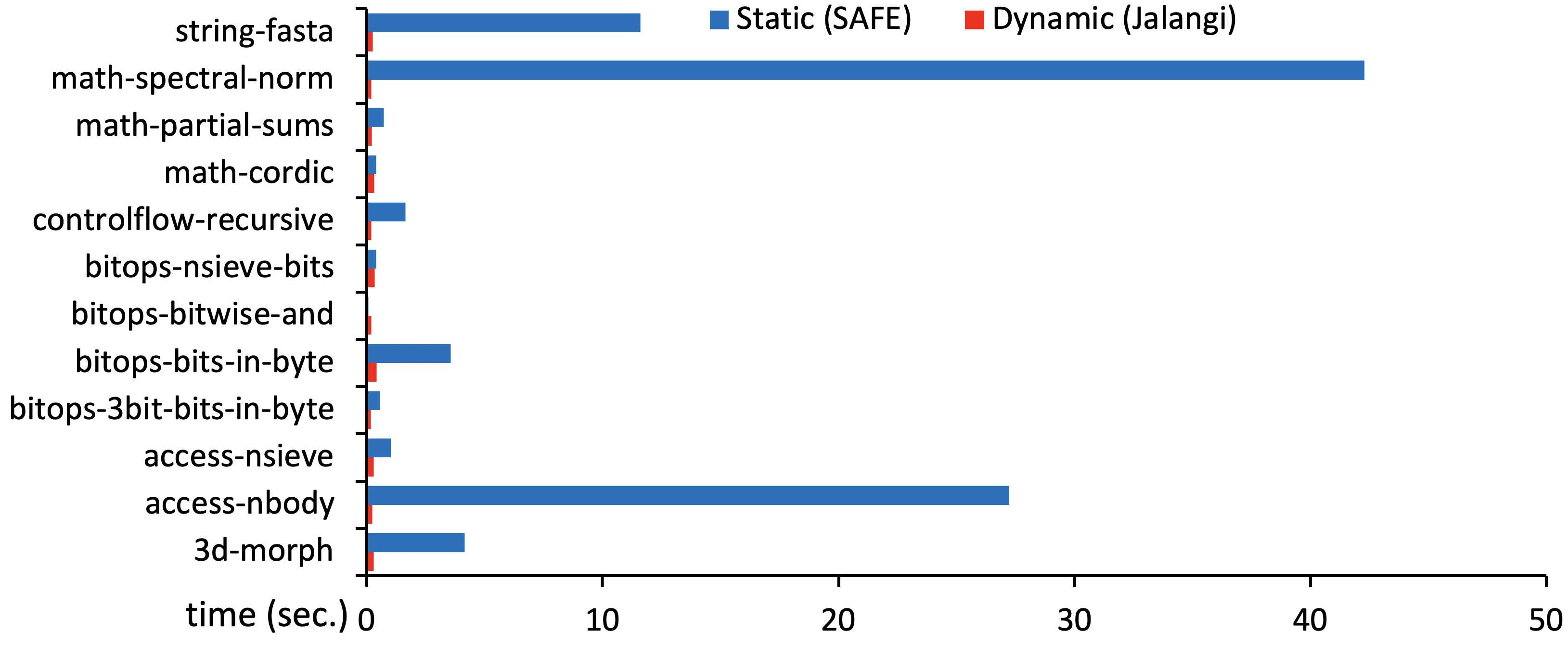}
  \vspace*{-1.5em}
  \caption{Performance of a dynamic analyzer and a static analyzer for a subset
  of the SunSpider benchmark}
  \label{fig:performance}
\end{figure}

To alleviate these problems, researchers have leveraged
dynamic analysis during static analysis.  Unlike static analyzers that
run on their own interpreters, dynamic analyzers such as Jalangi~\cite{jalangi} and DLint~\cite{dlint}
run on highly-optimized commercial JavaScript engines, which makes them
much faster than static analyzers.
Figure~\ref{fig:performance} shows that the dynamic analyzer
Jalangi is 34.8x faster than the static analyzer SAFE for a subset of the
SunSpider~\cite{sunspider} benchmark that is input-independent and deterministic.
Using high performance dynamic analysis, researchers have
reduced the scope of static analysis~\cite{determinacy, blendedJS} and constructed
initial abstract states~\cite{battles, eha} and automatic modeling of opaque
code~\cite{sra}.

Unfortunately, existing techniques using dynamic analysis for static analysis
have two limitations: 1) they do not fully utilize the high performance of dynamic
analysis, and 2) they sacrifice the soundness of static analysis.  Most of
them are \textit{staged analyses}, which first extract specific
information via dynamic analysis and utilize it in static analysis.
\citet{determinacy} identify determinate expressions that always have the same
values at given program points, \citet{blendedJS} extract dynamic values to
change expressions to certain literals, and Park et al.~\cite{battles,eha} dump
the initial states of a certain host environment or the entry of an event handler.
However, because they do not utilize dynamic analysis as soon as static
analysis begins, they do not get performance benefits since then.  Moreover,
they sacrifice the soundness of static analysis by performing dynamic analysis.
For example, the SRA model~\cite{sra} uses dynamic analysis for opaque code
with abstract arguments during static analysis.  When the abstract arguments
represent an infinite number of values, it randomly samples finite
concrete values for the abstract arguments, which makes the analysis
result unsound due to missing concrete values.

In this paper, we present \textit{dynamic shortcuts}, a new technique to
flexibly switch between abstract and concrete execution during JavaScript static
analysis in a sound way.  During static analysis, one can take a dynamic
shortcut, which consists of three parts: 1) converting the current abstract
state to its corresponding \textit{{\sealed} state}, 2) performing
\textit{{\sealed} execution} on the {\sealed} state, and 3) converting the
result of the {\sealed} execution to its corresponding abstract state.  Our key
observation is that we can use the fast concrete execution for specific program
parts while preserving the soundness if they do not use abstract values.  For
example, consider static analysis of the following JavaScript code:
\begin{lstlisting}[style=myJSstyle,numbers=none]
    var v = ... // an abstract value
    var obj = { p1: v }, y = "p";
    x = obj[y + 1];
\end{lstlisting}
Because \jscode{y} stores a string \jscode{"p"}, the expression \jscode{y + 1}
evaluates to a string \jscode{"p1"} and \jscode{x = obj[y + 1]} assigns the
abstract value of \jscode{v} stored in \jscode{obj.p1} to the variable
\jscode{x}.  Note that even though \jscode{obj} contains an abstract value
\jscode{v}, because the third line does not ``use'' the value of \jscode{v} but
only ``passes'' it to the variable \jscode{x}, we can concretely execute the code.
Based on this observation, we introduce {\sealed} execution, which is concrete
execution using \textit{{\sealed} values}.  A {\sealed} value is a symbol that
represents an abstract value in {\sealed} execution; it signals the end of the
current dynamic shortcut when the {\sealed} execution tries to access its value.
To evaluate our technique, we implemented $\tool$ using SAFE and Jalangi and
analyzed 269 official tests of Lodash 4 library.

The contributions of this paper include the following:
\begin{itemize}
\item We present a novel technique for JavaScript static
analysis to leverage the high performance of dynamic analysis using
dynamic shortcuts.  We formally define the technique and prove
its soundness and termination.
\item We actualize the proposed technique in $\tool$, an
extended combination of SAFE and Jalangi.
\item For empirical evaluation, we analyzed 269 official tests of
Lodash 4 library.  The experiment shows that $\tool$ outperforms
SAFE 7.81$\x$ on average.  Moreover, by using dynamic shortcuts
instead of manual modeling for 22 opaque functions,
$\tool$ improves the analysis precision to reduce failed assertions by 12.31\% on
average.
\end{itemize}

In the remainder of this paper, Section~\ref{sec:motivation} explains the
motivation of this work with a simple example.  Section~\ref{sec:formal}
formalizes the language-agnostic part of the technique in the abstract
interpretation framework.  Then, we extend the formalization with
JavaScript specific features in Section~\ref{sec:javascript}.
Section~\ref{sec:implementation} describes important details of the
$\tool$ implementation.  We explain the evaluation results of $\tool$
with real-world benchmarks in Section~\ref{sec:eval}.
Section~\ref{sec:related} discusses related work and
Section~\ref{sec:conclusion} concludes.

\section{Motivation}\label{sec:motivation}

\begin{figure}[t]
  \centering
  \vspace{2mm}
  \begin{subfigure}[t]{0.5\textwidth}
    \begin{lstlisting}[style=myJSstyle,xleftmargin=2.5em]
function concat() {
  var length = arguments.length;
  if (!length) return [];
  var array = arguments[0],
      args  = Array(length - 1),
      index = length;
  while (index--)
    args[index-1] = arguments[index];
  return arrayPush(isArray(array) ?
    copyArray(array) : [array],
    baseFlatten(args, 1));
}
    \end{lstlisting}
    \vspace*{-.5em}
    \caption{Lodash's \jscode{concat} function}
    \label{fig:concat}
  \end{subfigure}
  \begin{subfigure}[t]{0.5\textwidth}
    \begin{lstlisting}[style=myJSstyle,firstnumber=13,xleftmargin=2.5em]
function changeCountry(G) { ...
  if (G.selectedVal === "US" && state) {
    // deterministic arguments of `concat`
    state.items = _.concat([["Other", "Other"]],
      WebinarBase.questions.state.items);
    state.selectedVal = _.head(_.head(C.items));
  }
}
    \end{lstlisting}
    \vspace*{-.5em}
    \caption{Call of \jscode{concat} with concrete values}
    \label{fig:changeCountry}
  \end{subfigure}
  \begin{subfigure}[t]{0.5\textwidth}
    \begin{lstlisting}[style=myJSstyle,firstnumber=22,xleftmargin=2.5em]
function getData(e) {
  var option = ... // option for server connection
  post(option).then(function(e) {
    if (e.total_records && e.total_records > 0) {
      // non-deterministic arguments of `concat`
      this.pastEvents =
        _.concat(this.pastEvents, e.events);
      this.total = e.total_records;
    } else this.noPastData = !0
  })
}
    \end{lstlisting}
    \vspace*{-.5em}
    \caption{Call of \jscode{concat} with abstract values}
    \label{fig:getData}
  \end{subfigure}
  \vspace*{-.5em}
  \caption{Lodash library function and its uses in \code{zoom.us}}
  \label{fig:example}
\end{figure}

This section explains the motivation of dynamic shortcuts
using real-world examples in Figure~\ref{fig:example}.
We describe their behaviors and explain how we can utilize
dynamic shortcuts during static analysis.

Figure~\ref{fig:example}(a) shows the \jscode{concat} function defined
in Lodash library~\cite{lodash} (v4.17.20); it is
the most popular npm package\footnote{https://www.npmjs.com/browse/depended}
and 131,517 npm packages have a dependency on it.
The \jscode{concat} function creates a new array concatenating given arrays or values.
It first checks the length of \jscode{arguments} on lines 1--3.
Then, it stores the first argument to \jscode{array} on line 4 and
copies the remaining arguments to \jscode{args} on lines 5--8.
On line 9, it checks whether \jscode{array} is an array object using
the built-in function \jscode{isArray}.  If so, it creates a new array
by copying the given array via \jscode{copyArray}; otherwise,
it creates a singleton array \jscode{[array]}.  Finally, it flattens
\jscode{args} via \jscode{baseFlatten} and pushes the result to the
new array on line 11.

Figure~\ref{fig:example}(b) and Figure~\ref{fig:example}(c) show use
cases of the \jscode{concat} function in the \code{zoom.us}~\cite{zoom} site.
It is the homepage of Zoom, a videotelephony software by Zoom
Video Communications and it is ranked as the 15th popular web site according
to Alexa\footnote{https://www.alexa.com/siteinfo/zoom.us} in February 2021.

\paragraph{Dynamic shortcuts with concrete values.}
When a function is called with concrete values, we can perform
dynamic analysis instead of static analysis.
For example, \jscode{changeCountry} in Figure~\ref{fig:example}(b)
is invoked when a user selects a country from a drop-down list in the registration page.
It calls the \jscode{concat} function to update the drop-down list of
states or provinces on lines 16--17.  However, when the user selects ``United States of America,''
which is \jscode{"US"}, two arguments are pre-defined with
deterministic values; the first one is an array literal
\jscode{[["Other", "Other"]]} and the second one is an array of pairs
of abbreviations and names of the states defined as follows: \\[-.5em]
\begin{lstlisting}[style=myJSstyle,numbers=none]
WebinarBase.questions.state.items =
  [["AL","Alabama"], ..., ["WY", "Wyoming"]]
\end{lstlisting}
\noindent\smallskip
Moreover, \jscode{this} also has a concrete value, the Lodash top-level object~\jscode{\_}.
Thus, we can perform dynamic analysis by invoking \jscode{concat} with~\jscode{\_}
as its \jscode{this} value and the above concrete values as arguments. 
By skipping the analysis of the function call on lines 17--18 and
utilizing the result of dynamic analysis, it improved the analysis performance.

\paragraph{Dynamic shortcuts with abstract values.}

Even when a function is called with abstract values, we can still
perform dynamic analysis using {\sealed} execution.
For example, \jscode{getData} in Figure~\ref{fig:example}(c) is invoked
when a user clicks the ``Load More'' button to load more Zoom events in
the ``Webinars \& Events'' page.  It sends a POST
request to a server and receives additional events \jscode{e} on
line 24.  Then, eight events in \jscode{e.events} are appended to
\jscode{this.pastEvents} using \jscode{concat} on lines 27--28.
However, the arguments of \jscode{concat} are not deterministic because 1) the
event list stored in \jscode{this.pastEvents} is continuously grown for each
load and 2) the events stored in \jscode{e.events} are dependent on the
data given from the server.

\begin{figure}[t]
  \vspace{2mm}
  \begin{subfigure}{0.23\textwidth}
    \[
      \begin{array}{|c|c|}\hline
        \text{Property} & \text{Value}\\\hline
        \top & \symb_\jscode{evt}\\\hline
        \jscode{"length"} & \symb_\jscode{int}\\\hline
      \end{array}
    \]
    \caption{\jscode{this.pastEvents}}
    \label{fig:pastEvents}
  \end{subfigure}
  \begin{subfigure}{0.23\textwidth}
    \[
      \begin{array}{|c|c|}\hline
        \text{Property} & \text{Value}\\\hline
        \jscode{0} & \symb_\jscode{evt}\\\hline
        \cdots & \cdots\\\hline
        \jscode{7} & \symb_\jscode{evt}\\\hline
        \jscode{"length"} & \jscode{8}\\\hline
      \end{array}
    \]
    \caption{\jscode{e.events}}
    \label{fig:events}
  \end{subfigure}
  \vspace*{-.5em}
  \caption{Concrete objects with {\sealed} values}
  \vspace*{-1em}
  \label{fig:sealed}
\end{figure}

To perform dynamic analysis with abstract values, we seal abstract values
with {\sealed} values as in Figure~\ref{fig:sealed}.  Two {\sealed}
values $\symbevt$ and $\symbint$ represent an event
object and an integer, respectively.  Then, we can perform dynamic analysis
successfully until line 9.  On line 2, \jscode{length} is \jscode{2};
on line 4, \jscode{array} points to \jscode{this.pastEvents};
on lines 5--8, \jscode{args} stores an array with a single object stored in \jscode{e.events};
and on line 9, \jscode{isArray(array)} is \jscode{true}.
However, dynamic analysis fails for \jscode{copyArray(array)} on line 10
because the value of the \jscode{length} property of \jscode{array} is the {\sealed} value $\symbint$.
Then, we stop the {\sealed} execution, convert the current
{\sealed} state to its corresponding abstract state, and resume
the static analysis from line 10.  Because {\sealed} execution
leverages fast dynamic analysis as long as possible, the overall
analysis becomes more scalable.

\paragraph{Dynamic shortcuts for opaque functions.}
As the previous two examples additionally show, using dynamic shortcuts 
lessens the burden of modeling opaque functions from static analysis, and
it can even improve the analysis precision.
On line 9, since the \jscode{isArray} function is a JavaScript built-in library function,
it is implemented in a native language of the host environment, which
often requires manual modeling of its behaviors for JavaScript static analysis.
Assuming that a static analyzer models \jscode{isArray} to return the
boolean top value $\top_b$ that encompasses both \jscode{true} and \jscode{false},
static analysis of the ternary conditional expression on lines 9--10 analyzes
both branches \jscode{copyArray(array)} and \jscode{[array]}, even though
\jscode{[array]} is never reachable in the example code.  On the contrary,
using dynamic shortcuts, static analysis does not need to model \jscode{isArray}.
It can perform {\sealed} execution for \jscode{isArray}, which
returns a more precise result \jscode{true} than $\top_b$.

\section{Dynamic Shortcuts}\label{sec:formal}
In this section, we formally define static analysis using dynamic shortcuts by
introducing {\sealed} execution in the abstract interpretation framework.
We extend the formalization of abstract interpretation of \citet{abs-interp-1977,
abs-interp-1992} and views-based analysis sensitivity of \citet{sens-toplas}.
For dynamic shortcuts, we define {\sealed} execution with a
{\sealed} domain and abstract instantiation maps.  To combine
sensitive abstract interpretation and {\sealed} execution, we define
a combined domain of sensitive abstract domain and {\sealed} domain and
explain it with a simple example. Finally, we prove the soundness and
termination property of abstract interpretation using the combined domain.

\subsection{Concrete Semantics}

We define a program $\prog$ as a state transition system $(\stset, \trans,
\istset)$.  A program starts with an initial state in $\istset$ and the
transition relation $\trans \subseteq \stset \times \stset$ describes how states
are transformed to other states.  A \textit{collecting semantics} $\sem{\prog} =
\{ \st \in \stset \mid \ist~\in~\istset \wedge \ist \trans^* \st \}$ consists of
reachable states from initial states of the program $\prog$.  We can compute
it using a \textit{transfer function} $\transfer: \dom \rightarrow \dom$ as
follows:
\[
  \sem{\prog} = \underset{n \rightarrow \infty}{\lim}{\transfer^n(\ielem)}\\
  \qquad
  \transfer(\elem) = \elem \join \step(\elem)\\
\]
where the \textit{concrete domain} $\dom = \powerset{\stset}$ is a complete lattice
with $\cup$, $\cap$, and $\subseteq$ as its join($\join$), meet($\meet$), and
partial order($\order$) operators.  The set of states $\ielem$ denotes the
initial states $\istset$.  The \textit{one-step execution} $\step: \dom
\rightarrow \dom$ transforms states using the transition relation $\trans$:
$\step(\elem) = \{ \st' \mid \st \in \elem \wedge \st \trans \st' \}$.

\begin{figure}[t]
  \vspace{2mm}
  \[
    \begin{array}{r@{~}l@{~}c@{~}l}
      \labdot{0} & \kwif \; (\; \varx \geq 0 \;) & \labdot{1} & \varx = \varx ;\\
                 & \kwelse & \labdot{2} & \varx = -\varx ;\\
      \labdot{3} & \varx = -\varx ; & \labdot{4} \\
    \end{array}
  \]
  \vspace*{-.5em}
  \caption{Negation of the absolute value of $\varx$}
  \vspace*{-.5em}
  \label{fig:running-example}
\end{figure}

For example, the code in Figure~\ref{fig:running-example} is a simple program
that calculates the negation of the absolute value of the variable $\varx$.
States are pairs of labels and integers stored in $\varx$: $\stset = \labset
\times \mathbb{N}$.  Assume that the initial states are $\istset = \{ (\lab_0,
-42) \}$, which denotes that the program starts at $\lab_0$
with the variable $\varx$ of value $-42$.
Then, it executes with the following trace:
\[
  (\lab_0, -42) \trans (\lab_2, -42) \trans (\lab_3, 42) \trans (\lab_4, -42)
\]

\subsection{Abstract Interpretation}\label{sec:ai}
Abstract interpretation~\cite{abs-interp-1977, abs-interp-1992}
over-approximates the transfer function $\transfer$ as an \textit{abstract transfer
function} $\abstransfer: \absdom \rightarrow \absdom$ to get an
\textit{abstract semantics} $\abssem{\prog}$ in finite iterations as follows:
\[
    \abssem{\prog} = \underset{n \rightarrow
    \infty}{\lim}{(\abstransfer)^n(\iabselem)}\\
\]
We define a \textit{state abstraction} $\dom \galois{\alpha}{\gamma} \absdom$ as
a Galois connection between the concrete domain $\dom$ and an abstract domain
$\absdom$ with a \textit{concretization function} $\gamma$ and an
\textit{abstraction function} $\alpha$.  The initial abstract state $\iabselem
\in \absdom$ represents an abstraction of the initial state set: $\ielem
\subseteq \gamma(\iabselem)$.  The abstract transfer function $\abstransfer:
\absdom \rightarrow \absdom$ is defined as $\abstransfer(\abselem) = \abselem
\join \absstep(\abselem)$ with an \textit{abstract one-step execution}
$\absstep: \absdom \rightarrow \absdom$.  For a sound state abstraction, the
join operator and the abstract one-step execution should satisfy the following
conditions:
\begin{align}
  \forall \abselem_0, \abselem_1 \in \absdom & . \; \gamma(\abselem_0) \cup
  \gamma(\abselem_1) \subseteq \gamma(\abselem_0 \join
  \abselem_1) \label{equ:sound-join}\\
  \forall \abselem \in \absdom & . \; \step \circ \gamma(\abselem) \subseteq
  \gamma \circ \absstep(\abselem) \label{equ:sound-step}
\end{align}

A simple example abstract domain is $\absdom_\pm = \powerset{\{ -, +, 0 \}}$ with
set operators as domain operators; $-$ denotes negative integers, $+$ positive
integers, and $0$ zero.  Assume that we analyze the code in
Figure~\ref{fig:running-example} with the abstract domain and the initial abstract state $\iabselem =
\{ - \}$. Then, the analysis result is $\{ -, + \}$ because $\varx$ can
have a positive value by executing $\varx = -\varx$ but there is no
way for $\varx$ to have $0$ in this program.

\subsection{Analysis Sensitivity}\label{sec:sens}

Abstract interpretation is often defined with \textit{analysis sensitivity} to
increase the precision of static analysis.  A sensitive abstract domain
$\sabsdom: \viewset \rightarrow \absdom$ is defined with a \textit{view
abstraction} $\viewmap: \viewset \rightarrow \dom$ that provides multiple points
of views for reachable states during static analysis.  It maps a finite number
of views $\viewset$ to sets of states $\dom$. Each view $\view \in \viewset$
represents a set of states $\viewmap(\view)$ and each state is included
in a unique view: $\forall \st \in \stset. \; \st \in \viewmap(\view)
\Rightarrow \forall \view' \in \viewset. \st \in \viewmap(\view') \Rightarrow \view = \view'$.
A \textit{sensitive state
abstraction} $\dom \galois{\alpha_\viewmap}{\gamma_\viewmap} \sabsdom$ is a
Galois connection between the concrete domain $\dom$ and the sensitive abstract
domain $\sabsdom$ with the following concretization function:
\[
  \sgamma(\sabselem) = \underset{\view \in \viewset}{\bigcup}
  {\viewmap(\view) \cap \gamma \circ \sabselem(\view)}
\]

With analysis sensitivities, the abstract one-step execution $\sabsstep:
\sabsdom \rightarrow \sabsdom$ is defined as follows:
\[
  \sabsstep(\sabselem) = \lambda \view \in \viewset. \; \underset{\view' \in
  \viewset}{\bigjoin}{\viewtrans{\view'}{\view} \circ \sabselem(\view')}
\]
where $\viewtrans{\view'}{\view}: \absdom \rightarrow \absdom$ is an abstract
semantics of a \textit{view transition} from a view $\view'$ to another view
$\view$.  It should satisfy the following condition for the soundness of the
analysis:
\[
  \forall \abselem \in \absdom. \; \step(\gamma(\abselem) \cap \viewmap(\view'))
  \cap \viewmap(\view) \subseteq \gamma \circ
  \viewtrans{\view'}{\view}(\abselem)
\]

One of the most widely-used analysis sensitivity is \textit{flow sensitivity}
defined with a flow-sensitive view abstraction $\fsviewmap: \labset
\rightarrow \dom$ where:
\[
  \forall \lab\in\labset. \; \fsviewmap(\lab) = \{ \st \mid \st = (\lab, \_) \}
\]
If we apply the flow sensitivity for the above example with the initial abstract
state $[ \lab_0 \mapsto \{ -, 0, + \} ]$, the analysis result is as follows:
\vspace*{.5em}
\[
  \begin{array}{|c||c|c|c|c|c|}\hline
    \labset & \lab_0 & \lab_1 & \lab_2 & \lab_3 & \lab_4\\\hline
    \absdom_\pm & -, 0, + & 0, + & - & 0, + & -, 0\\\hline
  \end{array}
\]

\subsection{{\SealeD} Execution}

We define \textit{{\sealed} execution} by extending the transition
relation $\trans$ as a {\sealed} transition relation $\symbtrans$ on {\sealed}
states.  First, we extend concrete states $\stset$ to {\sealed} states
$\symbstset$ by extending values $\valset$ with \textit{{\sealed} values}
$\symbset$.  We also define the {\sealed} transition relation $\symbtrans
\subseteq \symbstset \times \symbstset$. We use the notation $\symbtrans^k$
for $k$ repetition of $\symbtrans$, and write $\symbst \symbtrans \excst$ when
$\symbst$ does not have any {\sealed} transitions to other {\sealed}
states.  We define the validity of {\sealed} execution as follows:
\begin{definition}[Validity]\label{def:valid-symbtrans}
  The {\sealed} transition relation is \textit{valid} when the following
  condition is satisfied for any {\sealed} states $\symbst$ and
  $\symbst'$:
  \[
    \symbst \symbtrans \symbst' \Leftrightarrow
    \forall \imap \in \imapset. \;
    \{ \st' \mid \instant{\symbst}{\imap} \trans \st' \}
    = \{ \instant{\symbst'}{\imap} \}
  \]
  where $\imapset: \symbset \rightarrow \valset$ represent \textit{instantiation
  maps} from {\sealed} values to concrete values, and $\instant{\symbst}{\imap}$
  denotes a state produced by replacing each {\sealed} value $\symb$ in
  $\symbst$ with its
  corresponding value $\imap(\symb)$ using the instantiation map $\imap \in
  \imapset$.
\end{definition}

{\Sealed} execution is different from traditional
symbolic execution~\cite{symbolic} in that it supports only {\sealed}
values instead of symbolic expressions and path constraints.  For example, the
following trace represents traditional symbolic execution of the running
example in Figure~\ref{fig:running-example}:
{
\small
\[
  \begin{array}{r@{~}c@{~}c@{~}c@{~}r@{~}c@{~}r}
    &&(\lab_1, \symb)[\symb \!\geq\! 0]
    &\trans& (\lab_3, \phantom{-}\symb)[\symb \!\geq\! 0]
    &\trans& (\lab_4, -\symb)[\symb \!\geq\! 0]
    \vspace*{-0.5em}\\
    &\rutrans&
    \vspace*{-0.5em}\\
    (\lab_0, \symb)[\varnothing]
    \vspace*{-0.5em}\\
    &\rdtrans&
    \vspace*{-0.5em}\\
    &&(\lab_2, \symb)[\symb \!<\! 0]
    &\trans& (\lab_3, -\symb)[\symb \!<\! 0]
    &\trans& (\lab_4, \phantom{-}\symb)[\symb \!<\! 0]\\
  \end{array}
\]
}
It first assigns a symbolic value $\symb$ to the variable $\varx$ at $\lab_0$.
For the conditional branch, it creates two symbolic states with
different path conditions $\symb \geq 0$ and $\symb < 0$ for true and false
branches, respectively.  After executing statements $\varx = \varx$ and $\varx =
-\varx$, the variable $\varx$ stores symbolic expressions $\symb$ and $-\symb$
at $\lab_3$, respectively. Similarly, $\varx$  stores $-\symb$ and $\symb$ at $\lab_4$.
However, {\sealed} execution stops at $\lab_0$ as follows:
\[
  (\lab_0, \symb) \; \symbtrans \; \excst
\]
because the branch requires the actual value of the {\sealed} value $\symb$.

To define an abstract domain that contains {\sealed} states, we define
\textit{abstract instantiation maps} $\absimapset: \symbset \rightarrow
\absvalset$ from {\sealed} values to abstract values.  Its concretization
function $\imapgamma: \absimapset \rightarrow \powerset{\imapset}$ is defined
with the concretization function $\valgamma: \absvalset \rightarrow
\powerset{\valset}$ for values as follows:
\[
  \imapgamma(\absimap) = \{
    \imap \mid \forall \symb \in \symbset. \;
    \imap(\symb) \in \gamma \circ \absimap(\symb)
  \}
\]
The instantiation of a given {\sealed} state $\symbst \in \symbstset$ with
an abstract instantiation map $\absimap \in \absimapset$ is defined as follows:
\[
  \instant{\symbst}{\absimap} = \{ \instant{\symbst}{\imap} \mid \imap \in
  \imapgamma(\absimap) \}
\]
Now, we define a \textit{{\sealed} domain} as follows:

\begin{definition}[{\SealeD} Domain]\label{def:symbdom}
  A \textit{{\sealed} domain} $\symbdom: \powerset{\absimapset \times
  \symbstset}$ is defined with the concretization function
  $\symbgamma: \symbdom \rightarrow \dom$ and the {\sealed} one-step execution
  $\symbstep: \symbdom \rightarrow \symbdom$ such that
  \begin{align}
    \symbgamma(\symbelem) &=
    \bigcup \{ \instant{\symbst}{\absimap} \mid (\absimap, \symbst) \in
    \symbelem\}\\
    \symbstep(\symbelem) &= \{ (\absimap, \symbst') \mid (\absimap, \symbst)
    \in \symbelem \wedge \symbst \symbtrans \symbst' \}
  \end{align}
\end{definition}

\subsection{Combined Domain}
We now define a \textit{combined domain} of a given sensitive abstract
domain with the {\sealed} domain and its one-step execution.
\begin{definition}[Combined Domain]
  A \textit{combined domain} is $\combdom = \sabsdom \times \symbdom$ and its
  concretization function $\combgamma: \combdom \rightarrow \dom$ and join
  operator are defined as follows:
  \begin{align}
    \combgamma((\sabselem, \symbelem)) &= \sgamma(\sabselem) \cup
      \symbgamma(\symbelem)\\
    (\sabselem, \symbelem) \join ({\sabselem}', \symbelem') &= (\sabselem \join
      {\sabselem}', \symbelem \cup \symbelem')
  \end{align}
\end{definition}

Before defining the one-step execution for the combined domain, we introduce
\textit{analysis elements} to easily configure different types of abstract
states in the sensitive abstract domain and the {\sealed} domain.
\begin{definition}[Analysis Elements]\label{def:aelem}
  An \textit{analysis element} $\aelem \in \aelemset = (\viewset \times \absdom)
  \uplus (\absimapset \times \symbstset)$ is either 1) a pair of a view and an
  abstract state in a sensitive abstract domain $\sabsdom$, or 2) a pair of an
  abstract instantiation map and a {\sealed} state in a {\sealed}
  domain $\symbdom$.  Its concretization function $\aelemgamma:
  \aelemset \rightarrow \dom$ is defined as follows:
  \[
    \aelemgamma(\aelem) = \left\{
      \begin{array}{ll}
        \viewmap(\view) \cap \gamma(\abselem) & \text{if} \; (\view, \abselem) = \aelem\\
        \instant{\symbst}{\absimap} & \text{if} \; (\absimap, \symbst) = \aelem\\
      \end{array}
    \right.
  \]
\end{definition}

\begin{figure*}[t]
  \vspace{2mm}
  \centering
  \begin{subfigure}[t]{0.15\textwidth}
    \includegraphics[height=3.2cm]{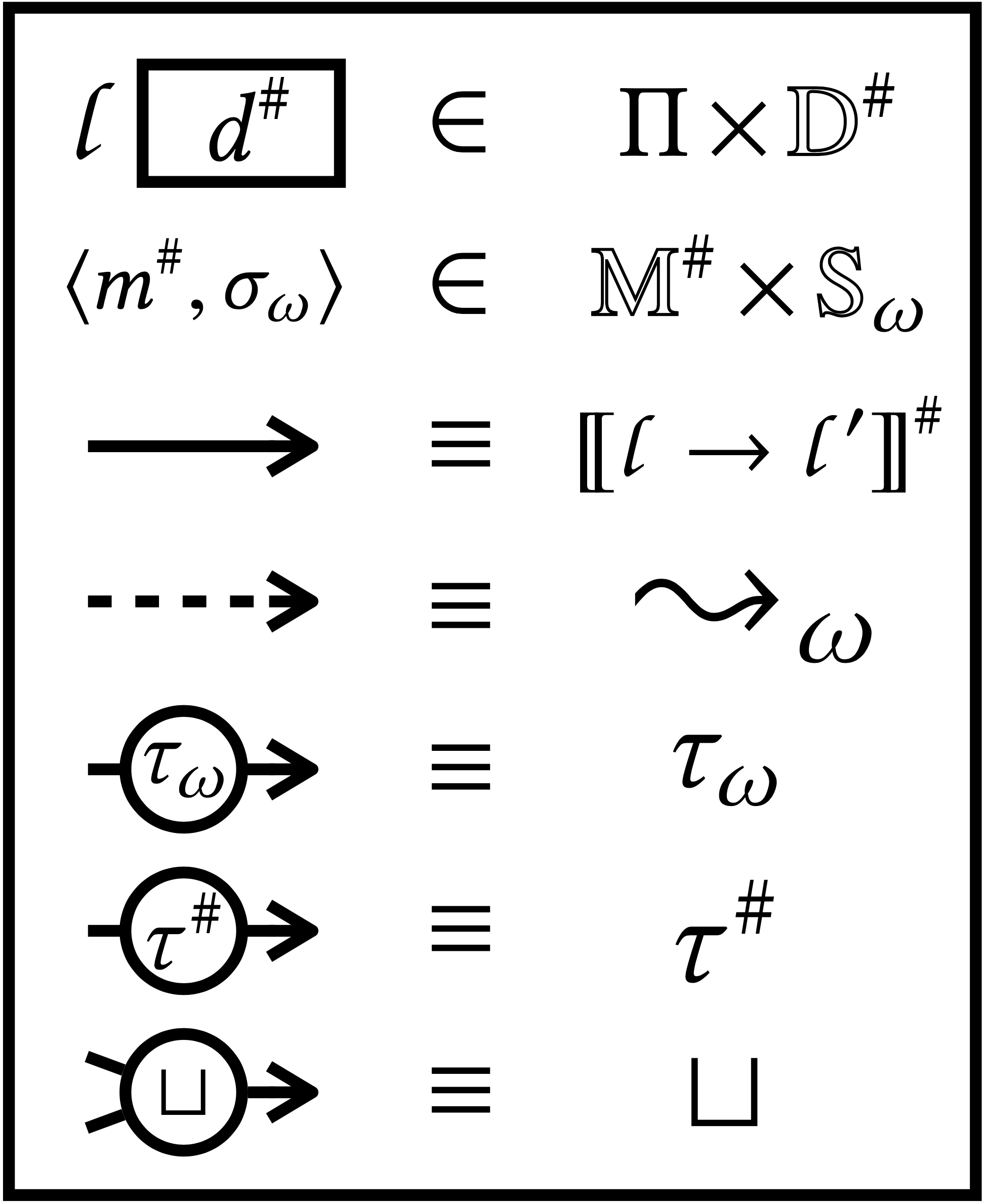}
    \caption{Notations}
    \label{fig:ds-example1}
  \end{subfigure}
  \quad
  \begin{subfigure}[t]{0.23\textwidth}
    \includegraphics[height=3.2cm]{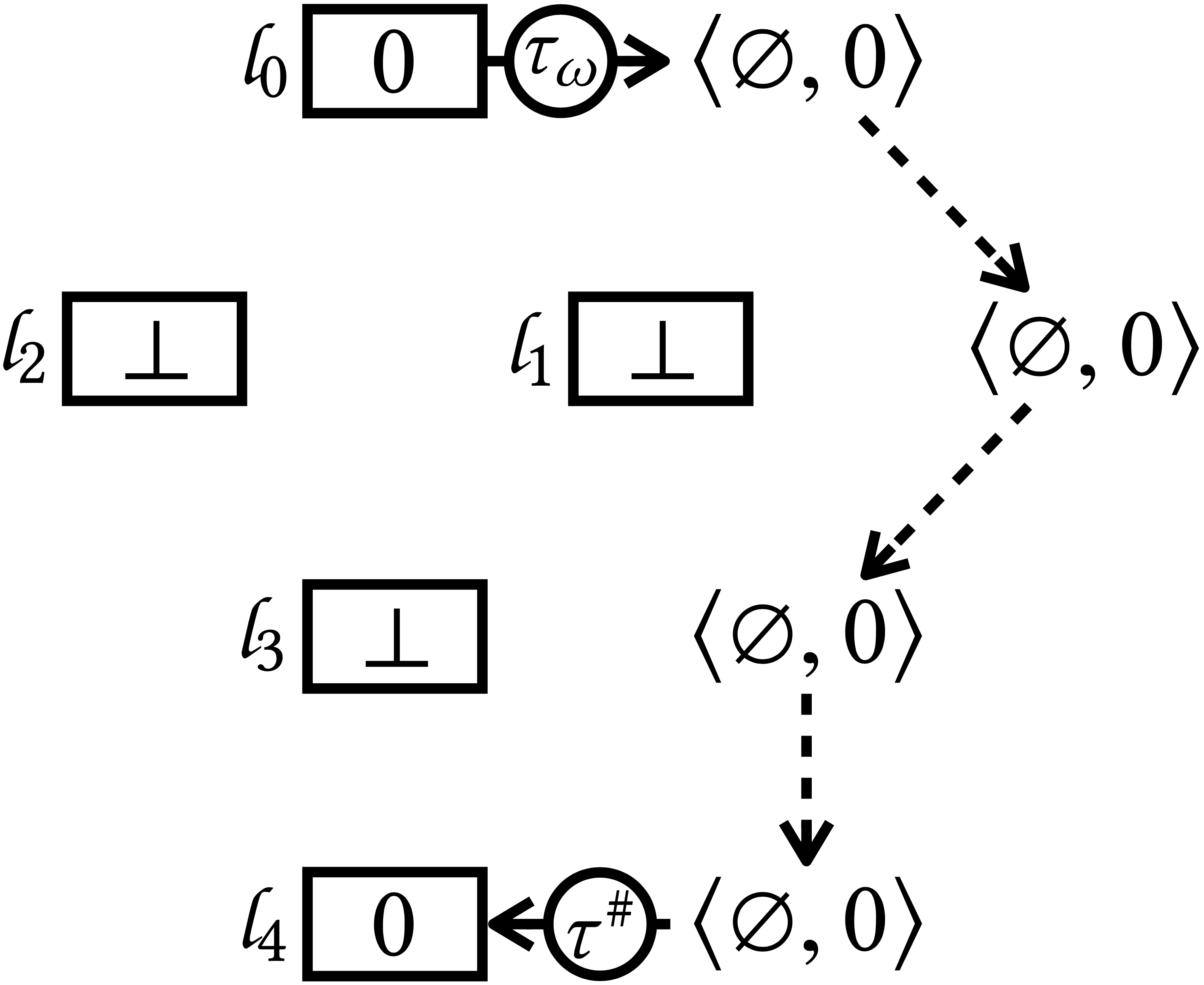}
    \caption{$\varx = 0$}
    \label{fig:ds-example2}
  \end{subfigure}
  \begin{subfigure}[t]{0.28\textwidth}
    \includegraphics[height=3.2cm]{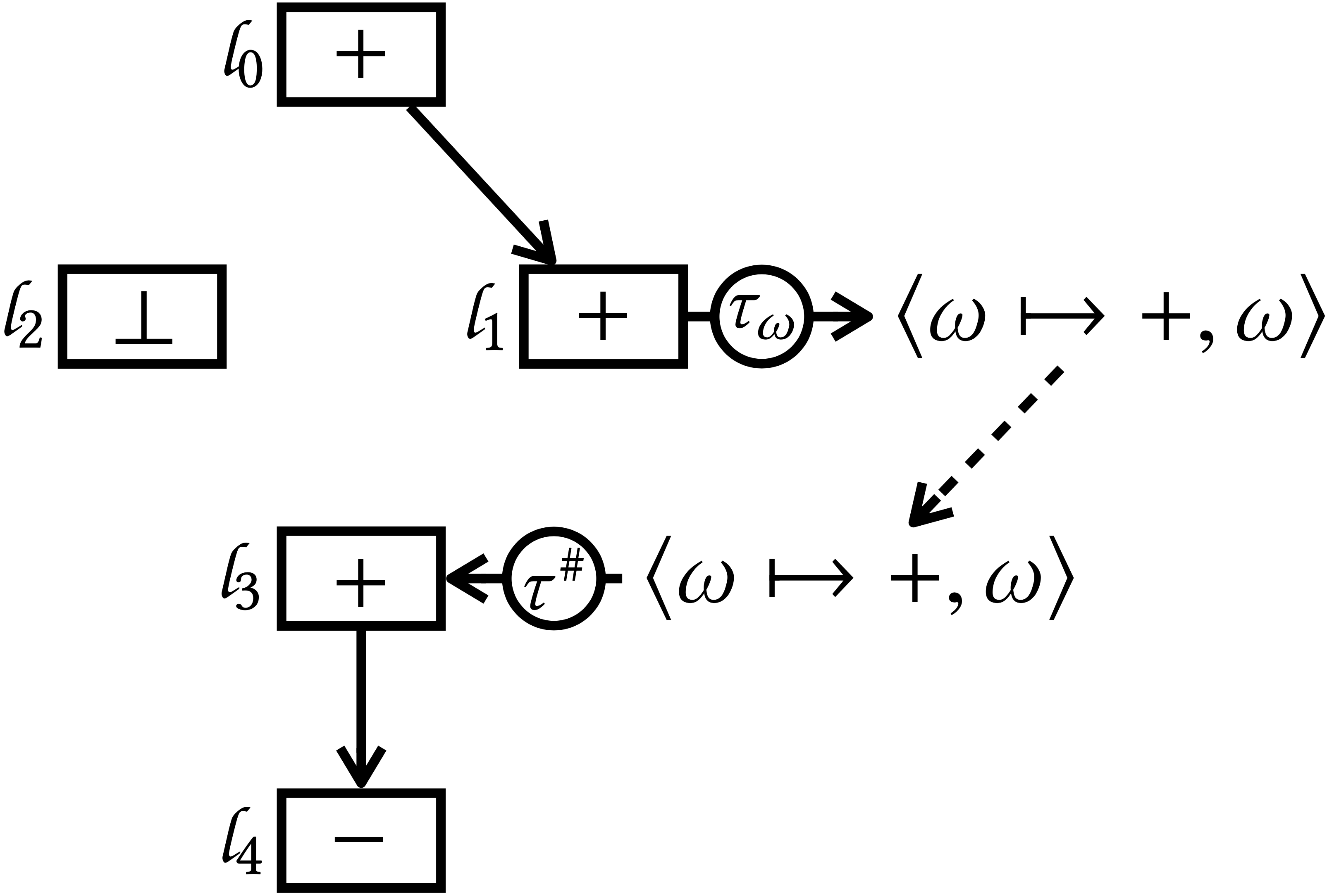}
    \caption{$\varx > 0$}
    \label{fig:ds-example3}
  \end{subfigure}
  \begin{subfigure}[t]{0.28\textwidth}
    \includegraphics[height=3.2cm]{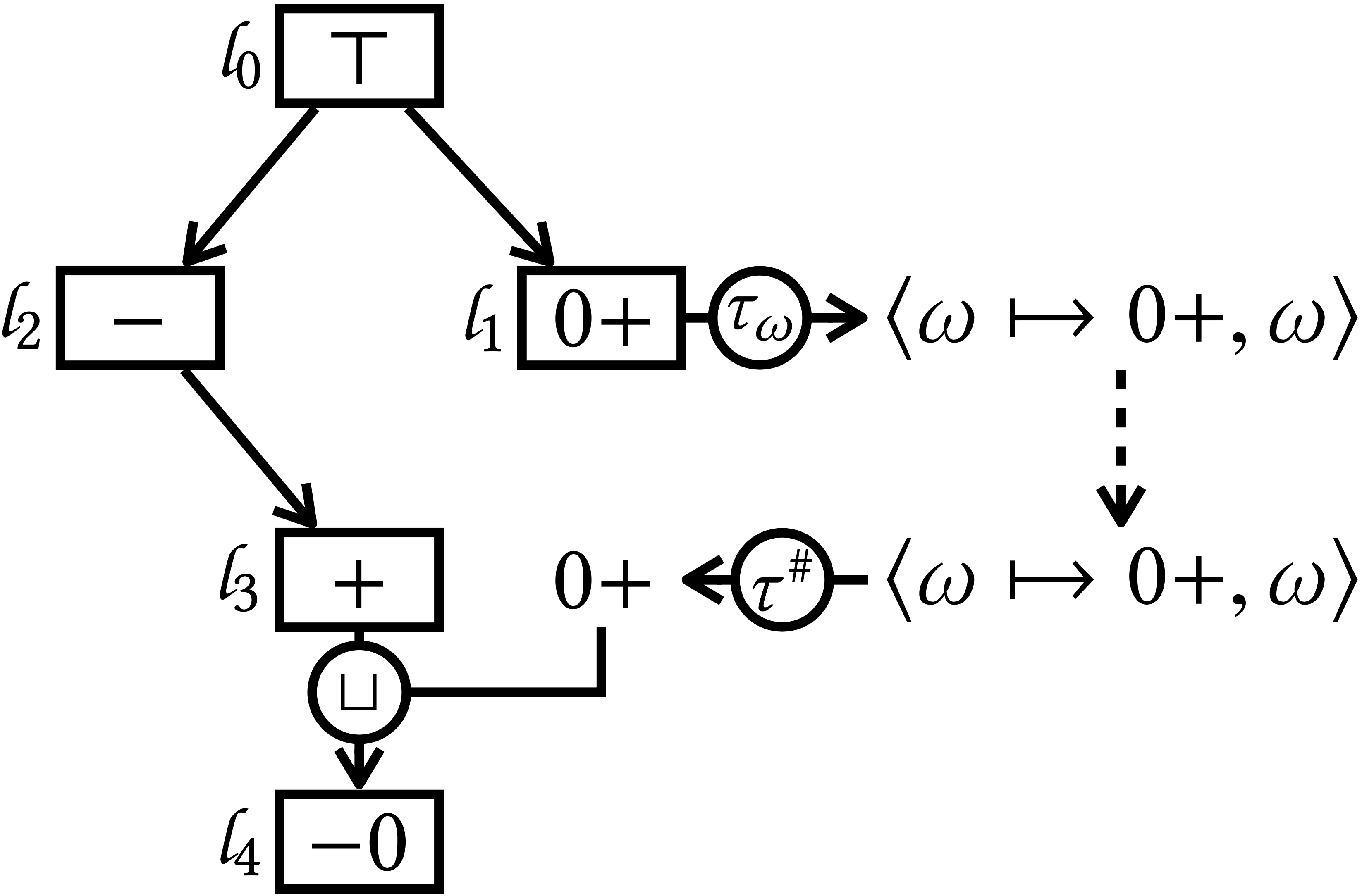}
    \caption{$\varx \in \mathbb{N}$}
    \label{fig:ds-example4}
  \end{subfigure}
  \caption{Abstract interpretation using a combined domain for the running
  example with different initial values for $\varx$.}
  \label{fig:ds-examples}
\end{figure*}

Moreover, to freely convert between different kinds of analysis elements, we define two converters:
\begin{align}
  \asconverter & : (\viewset \times \absdom) \hookrightarrow
    (\absimapset \times \symbstset)\\
  \saconverter & : (\viewset \times
    \absdom) \leftarrow (\absimapset \times \symbstset)
\end{align}
While the converter $\saconverter$ is total, the other one $\asconverter$ is
\textit{partial}. Thus, it is possible to convert an analysis element
$(\view, \abselem)$ in a sensitive abstract domain to another analysis element in
a {\sealed} domain only if the convert $\asconverter$ is defined: $(\view,
\abselem) \in \Dom(\asconverter)$.  In addition, they should convert given
analysis elements without loss of information for all $\aelem \in \aelemset$:
\[
  \asconverter(\aelem) = \aelem' \Rightarrow \left\{
  \begin{array}{l}
    \aelem = \saconverter(\aelem')\\
    \aelemgamma(\aelem) = \aelemgamma(\aelem')\\
  \end{array}
  \right.
\]

Now, we define the \textit{combined one-step execution} $\combstep: \combdom
\rightarrow \combdom$ with two converters $\asconverter$ and $\saconverter$.
It consists of two steps: 1) the \textit{\reformname} step converts
analysis elements if a new {\sealed} execution starts or an
existing one stops, and 2) the \textit{execution} step performs execution of each
analysis element using the abstract one-step execution $\sabsstep$ in the sensitive
abstract domain and the {\sealed} one-step execution $\symbstep$ in the {\sealed} domain.
\begin{definition}[Combined One-Step Execution]
A \textit{combined one-step execution} $\combstep: \combdom \rightarrow
\combdom$ is define as follows:
  \[
    \combstep(\combelem) = (\sabsstep(\sabselem), \symbstep(\symbelem))
  \]
where $(\sabselem, \symbelem) = \reform(\combelem)$.
\end{definition}

From a given combined state $\combelem$, the $\reform$ function makes analysis elements
and converts them if a new {\sealed} execution
begins or an existing {\sealed} execution terminates.
Specifically, for an analysis element $(\view, \abselem)$ in the sensitive abstract domain,
if the converter $\asconverter$ is defined for it, $\reform$ introduces a new {\sealed} execution
by converting the analysis element to its corresponding one $(\absimap, \symbst) =
\asconverter((\view, \abselem))$ in the {\sealed} domain.
On the other hand, for an analysis element $(\absimap, \symbst)$ in the {\sealed} domain,
if it does not have any {\sealed} states to transit to, $\symbst \symbtrans \excst$,
the {\sealed} execution for $(\absimap, \symbst)$ terminates.
It converts the analysis element to its corresponding one $(\view, \abselem) =
\saconverter((\absimap, \symbst))$ in the sensitive abstract domain and
merges the current abstract state stored in $\view$ with $\abselem$.

To formally define the $\reform$ function, we first define a $\areform$ function
for analysis elements using two converters.
\begin{definition}[$\areform$]\label{def:areform}
  The function $\areform: \aelemset \rightarrow \aelemset$ for analysis elements
  is defined as follows:
  \[
    \areform(\aelem) = \left\{
      \begin{array}{ll}
        \asconverter(\aelem)
        & \text{if} \; \aelem = (\view, \abselem) \wedge \aelem \in
        \Dom(\asconverter)\\
        \saconverter(\aelem)
        & \text{if} \; \aelem = (\absimap, \symbst) \wedge \symbst \symbtrans
        \bot\\
        \aelem
        & \text{Otherwise}
      \end{array}
    \right.
  \]
\end{definition}
\begin{definition}[$\reform$]\label{def:reform}
  The \reformname function $\reform: \combdom \rightarrow \combdom$ for combined
  states is defined as follows:
  \[
    \reform((\sabselem, \symbelem)) = \left(
      \lambda \view. \bigjoin \{ \abselem \!\mid\! (\view, \abselem) \in E \},
      E \cap (\absimapset \times \symbstset)
    \right)
  \]
  where
  \[
    E = \dot{\areform}(\{ (\view, \sabselem(\view)) \mid \view \in \viewset \} \cup \symbelem)
  \]
and the dot notation $\dot{f}$ denotes the element-wise extended function of a
function $f$.
\end{definition}

\subsection{Examples}
Now, we show examples of abstract interpretation with a combined domain.
Figure~\ref{fig:ds-examples} depicts the flow of analysis for the running
example in Figure~\ref{fig:running-example} with three different initial sets of
values for the variable $\varx$.  In this example, we use the abstract domain
$\{ -, 0, + \}$ for integers stored in $\varx$ as introduced in
Section~\ref{sec:ai}, and the \textit{flow sensitivity} that utilizes the
labels of states as their views as introduced in Section~\ref{sec:sens}.
For brevity, we use concatenation of abstract values so that
$-0$ denotes the set $\{ -, 0 \}$.

Figure~\ref{fig:ds-examples}(a) presents notations used in each graph. A solid
box denotes an analysis element that is a pair of a label $\lab$ and an abstract
state $\abselem$.  A pair enclosed by angle brackets denotes an analysis
element that is a pair of an abstract instantiation map $\absimap$ and a {\sealed}
state $\symbst$.  In fact, the {\sealed} state part (right) of
each pair in graphs contains only the value of the variable of $\varx$ without
its label.  For brevity, we represent its label by locating it next to
a node with its label.  A solid line is a view transition
$\viewtrans{\lab}{\lab'}$ from a label $\lab$ to another one $\lab'$.  A dotted
line is a {\sealed} transition $\symbtrans$.  Three solid lines with
circled labels denote two converters $\saconverter$, $\asconverter$ and the join
operator $\join$.

Figure~\ref{fig:ds-examples}(b) shows the analysis with the combined domain when
the initial value of $\varx$ is $0$.  First, in the \reformname step,
the converter $\asconverter$ converts the analysis element $(\lab_0, 0)$ to
another analysis element $\langle \varnothing, 0 \rangle$ with the label
$\lab_0$.  It does not introduce any {\sealed} values because
the value represents only a single value.  Until the end of the program, the
{\sealed} execution from $\langle \varnothing, 0 \rangle$ successfully
continues.  Because there is no more possible {\sealed} transition for the
{\sealed} state $\langle \varnothing, 0 \rangle$ with $\lab_4$,
it is converted to $(\lab_4, 0)$ via the converter $\saconverter$.

Instead of a single value, assume that the initial value of $\varx$ is one of
any positive integers.  Figure~\ref{fig:ds-examples}(c) describes the analysis
flow for the case.  The initial abstract value at the label $\lab_0$ is
$+$ and it is impossible to convert it to any {\sealed} values because the
next program statement requires the actual value stored in the variable $\varx$
for the branch condition $\varx \geq 0$.  Thus, it performs view transition
$\viewtrans{\lab_0}{\lab_1}$ from the label $\lab_0$ to another one $\lab_1$ for
the abstract value $+$ and the result is also $+$.  Now, the analysis element
$(\lab_1, +)$ can be converted to $\langle \symb \mapsto +, \symb \rangle$
with the label $\lab_1$.  This {\sealed} execution step terminates in the
label $\lab_3$ because the next statement is $\varx = -\varx$ and the negation
operator requires the actual value of $\varx$.  It is converted to $(\lab_3, +)$ via $\saconverter$,
performs the view transition, and results in $(\lab_4, -)$.

For the last case, we assume that all integers are possible for the initial
value of the variable $\varx$ as described in Figure~\ref{fig:ds-examples}(d).
While it reaches the false branch in the label $\lab_2$ unlike previous cases,
it cannot perform dynamic shortcuts because the statement in the false
branch is $\varx = -\varx$, which requires the actual value of $\varx$.
At the label $\lab_3$, there are two analysis
elements: 1) $(\lab_3, +)$ introduced by the view transition from the label $\lab_2$
with $-$, and 2) $\langle \symb \mapsto 0+, \symb \rangle$ with $\lab_3$
introduced by {\sealed} execution started at $\lab_1$.  Since it
is not possible to perform {\sealed} execution for both elements, the
second one is converted to $(\lab_3, 0+)$ and merged with $+$ at $\lab_3$ via the
join operator $\join$.  Finally, the view transition
$\viewtrans{\lab_3}{\lab_4}$ from $\lab_3$ to $\lab_4$ is performed to the
merged abstract state $0+$ and the result is $-0$.

\subsection{Soundness and Termination}
The converter $\asconverter$ and the {\sealed} transition $\symbtrans$ are
keys to configure the introduction and termination of {\sealed}
execution.  To ensure the \textit{soundness} and \textit{termination} of an
abstract interpretation defined with a combined domain of a sensitive abstract
domain and a {\sealed} domain, the following conditions should hold.

\begin{theorem}[Soundness and Termination]\label{theorem:shortcut}
An abstract interpretation with dynamic shortcuts is \textbf{sound} and
\textbf{terminates} in a finite time if:
  \begin{itemize}
    \item the abstract transfer function $\abstransfer$ is sound,
    \item the sensitive abstract domain $\sabsdom$ has a finite height,
    \item the {\sealed} transition $\symbtrans$ is valid, and
    \item there exists $N < \infty$ such that
      \begin{equation}\label{equ:asc-cond}
        \forall \aelem \in \aelemset. \; \asconverter(\aelem) = (\absimap,
        \symbst) \Rightarrow \symbst
        \symbtrans^k \excst \wedge 1 < k \leq N
      \end{equation}
  \end{itemize}
\end{theorem}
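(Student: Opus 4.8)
The goal is to establish both claims of Theorem~\ref{theorem:shortcut} for the fixpoint iteration over the combined domain. Write $\combtransfer(\combelem) = \combelem \join \combstep(\combelem)$ for the combined transfer function, iterated from a combined state $\combelem_0$ with $\ielem \subseteq \combgamma(\combelem_0)$ and empty {\sealed} component; since $\combtransfer$ joins its result with $\combelem$ it is extensive, so the iterates $\combelem_n = \combtransfer^n(\combelem_0)$ form a chain that is increasing in each component. The plan is to lift the standard abstract-interpretation transfer argument to $\combdom$ for soundness, and then to argue that the iteration stabilizes in finitely many steps, isolating condition~\eqref{equ:asc-cond} as the ingredient that tames the otherwise-infinite {\sealed} domain.

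For soundness I would first show $\combgamma \circ \reform \supseteq \combgamma$. By Definition~\ref{def:reform} this reduces, view by view, to the fact that $\areform$ does not shrink $\aelemgamma$: on $\Dom(\asconverter)$ we have $\aelemgamma \circ \asconverter = \aelemgamma$ by the converter specification, on stuck {\sealed} elements $\aelemgamma \circ \saconverter \supseteq \aelemgamma$ (soundness of the back-conversion), the ``Otherwise'' branch is the identity, and collapsing several views into one only enlarges $\sgamma$ by~\eqref{equ:sound-join}. Next, $\step \circ \combgamma \subseteq \combgamma \circ \combstep$: because $\step$ acts pointwise on states it distributes over $\combgamma((\sabselem,\symbelem)) = \sgamma(\sabselem)\cup\symbgamma(\symbelem)$, so it is enough to treat the two components of $\reform(\combelem)$ separately. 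The sensitive component is handled by soundness of $\abstransfer$ together with the view-transition soundness condition of Section~\ref{sec:sens}; for the {\sealed} component, after $\reform$ every pair $(\absimap,\symbst)$ has $\symbst\not\symbtrans\bot$ and hence a successor $\symbst\symbtrans\symbst'$ that is unique since $\symbtrans$ is deterministic by validity, and then validity turns any concrete step $\instant{\symbst}{\imap}\trans\st'$ with $\imap \in \imapgamma(\absimap)$ into $\st'=\instant{\symbst'}{\imap}\in\symbgamma(\symbstep(\symbelem))$. Combining the two facts gives $\transfer \circ \combgamma \subseteq \combgamma \circ \combtransfer$; a routine induction using monotonicity of $\transfer$ then yields $\transfer^n(\ielem)\subseteq\combgamma(\combelem_n)$ for all $n$, and evaluating at the fixpoint provided by the termination argument gives $\sem{\prog}\subseteq\combgamma(\combsem{\prog})$.

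Termination is the main obstacle, because $\symbdom = \powerset{\absimapset\times\symbstset}$ is in general infinite, so the finite height of $\sabsdom$ does not by itself bound the iteration; the point is to confine the iterates to a finite sub-poset. The sensitive component of $\combelem_n$ is increasing in $\sabsdom$, which has finite height, so it freezes at some step $m$ to a value $\sabselem_\infty$. Since $\viewset$ is finite, each call to $\reform$ spawns at most $|\viewset|$ fresh {\sealed} elements $\asconverter((\view,\sabselem(\view)))$, and by~\eqref{equ:asc-cond} each such element becomes stuck after at most $N$ {\sealed} steps, so (using that $\symbtrans$ is deterministic) the set of pairs forward-reachable from one fresh element has size at most $N$. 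Hence only finitely many distinct {\sealed} pairs can ever occur: at most $m\,|\viewset|$ fresh ones are born before step $m$, and once $\sabselem$ has frozen only the same finite family $S_0$ of fresh elements (those produced from $\sabselem_\infty$) recurs, each contributing its bounded forward chain. The {\sealed} component of $\combelem_n$ is $\subseteq$-increasing and confined to this finite set, so it freezes too, and therefore $\combelem_n$ reaches a fixpoint in finitely many iterations. I expect the delicate part to be exactly this bookkeeping: arguing that once $\sabselem$ is frozen no genuinely new {\sealed} execution is born, and that every {\sealed} pair that appears traces back along $\symbtrans$ either to a conversion performed before step $m$ or to $S_0$; the remaining steps are routine lattice-theoretic reasoning.
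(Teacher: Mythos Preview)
Your soundness argument follows the paper's line almost exactly: you isolate the same two properties of $\reform$ (it does not shrink $\combgamma$, and every sealed pair it outputs has a $\symbtrans$-successor), derive soundness of $\symbstep$ from validity, and combine with soundness of $\sabsstep$ to obtain $\step\circ\combgamma\subseteq\combgamma\circ\combstep$. This is precisely the content of the paper's Lemmas~\ref{lemma:reform}--\ref{lemma:sound-combstep}.

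For termination you take a genuinely different route. The paper introduces \emph{time-to-live} functions $\ttl_i$ on the difference sets $\diffset_i=\symbelem^{i+1}\setminus\symbelem^i$, proves $0<\ttl_i(\symbaelem)<N$ for every $\symbaelem\in\diffset_i$ (Corollary~\ref{corollary:bounded-ttl}), and shows that once the sensitive component has frozen the supremum $\sup(\dot\ttl_i(\diffset_i))$ strictly decreases at each step (Lemma~\ref{lemma:dec-ttl}); after at most $N$ further iterations $\diffset_i=\varnothing$ and the sealed component stabilizes. Your argument instead bounds the \emph{total universe} of sealed pairs that can ever occur: only finitely many seeds $\asconverter((\view,\sabselem^i(\view)))$ are ever produced (since $\sabselem^i$ ranges over a finite ascending chain and $\viewset$ is finite), and each seed's forward-$\symbtrans$-reachable set has size at most $N$ by~\eqref{equ:asc-cond} together with determinism of $\symbtrans$; the $\subseteq$-increasing sealed component is then confined to this finite set and must freeze. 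Your approach is more elementary and sidesteps the TTL bookkeeping entirely; the paper's approach, by reasoning with suprema of integer-valued TTLs rather than cardinalities of reachable sets, does not depend on $\symbtrans$ being deterministic (or even finitely branching) and yields the explicit iteration bound $n+N$. Your claim that validity forces determinism is correct provided distinct sealed states are separated by some instantiation map, which holds for the paper's structural notion of $\instant{\symbst}{\imap}$ when $|\valset|\geq 2$; you should make that step explicit rather than leave it implicit.
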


For soundness proof, we should prove
two conditions presented in Section~\ref{sec:ai}:
(\ref{equ:sound-join}) for the join operator $\join$ and
(\ref{equ:sound-step}) for the combined one-step execution.
The core idea of the proof is to use Lemma~\ref{lemma:sound-symbstep} and
Lemma~\ref{lemma:reform} for the {\sealed} one-step execution
$\symbstep$ and the $\reform$ function, respectively.
On the other hand, the core idea of the termination proof is to use the property
that the second and the fourth conditions provide upper bounds of the number of
sensitive abstract states and the number of {\sealed} states, respectively.  We
formally define and prove the property using \textit{time to live (TTL)}
functions of sealed states, $\ttl_i$ for each iteration $i \geq 0$, and prove the
termination using them.  Now, we assume that its all conditions in
Theorem~\ref{theorem:shortcut} are hold and rephrase the \textit{soundness} as
Theorem~\ref{theorem:soundness} and \textit{termination} as
Theorem~\ref{theorem:termination}.

\subsubsection{Soundness}

\begin{theorem}[Soundness]\label{theorem:soundness}
  The abstract interpretation using the combined domain $\combdom$ is
  \textbf{sound} if
  \begin{equation}\label{equ:sound-join}
    \forall \combelem_0, \combelem_1 \in \combdom. \; \combgamma(\combelem_0) \cup
    \combgamma(\combelem_1) \subseteq \combgamma(\combelem_0 \join \combelem_1)
  \end{equation}
  \begin{equation}\label{equ:sound-combstep}
    \forall \combelem \in \combdom. \; \step \circ \combgamma(\combelem) \subseteq
    \combgamma \circ \combstep(\combelem)\\
  \end{equation}
\end{theorem}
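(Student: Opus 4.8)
The plan is to derive Theorem~\ref{theorem:soundness} from the standard fixpoint-transfer argument of abstract interpretation (Section~\ref{sec:ai}): once the combined domain $\combdom$ carries a concretization $\combgamma$, a join, and a one-step execution $\combstep$ satisfying the two displayed conditions --- and the initial combined state concretizes to a superset of $\ielem$ --- the limit of $\combtransfer(\combelem) = \combelem \join \combstep(\combelem)$ over-approximates $\sem{\prog}$. So the real content is exactly~(\ref{equ:sound-join}) and~(\ref{equ:sound-combstep}), which I would establish in that order.

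\emph{The join condition.} Writing $\combelem_i = (\sabselem_i, \symbelem_i)$ and unfolding $\combgamma$ and the join on $\combdom$, the goal splits into $\sgamma(\sabselem_0) \cup \sgamma(\sabselem_1) \subseteq \sgamma(\sabselem_0 \join \sabselem_1)$ and $\symbgamma(\symbelem_0) \cup \symbgamma(\symbelem_1) \subseteq \symbgamma(\symbelem_0 \cup \symbelem_1)$. The first is monotonicity of $\sgamma$ together with soundness of the underlying join on $\absdom$ (part of the assumption that $\abstransfer$ is sound); the second is in fact an equality, immediate from Definition~\ref{def:symbdom} since $\symbgamma(\symbelem)$ is a union indexed by the pairs of $\symbelem$. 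This part is routine.

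\emph{The step condition.} This is the substance. I unfold $\combstep(\combelem) = (\sabsstep(\sabselem'), \symbstep(\symbelem'))$ with $(\sabselem', \symbelem') = \reform(\combelem)$, use that $\step$ distributes over unions, and chain three facts. (i) $\reform$ does not shrink the concretization, $\combgamma(\combelem) \subseteq \combgamma(\reform(\combelem))$ (the content of Lemma~\ref{lemma:reform}): write $\combgamma(\combelem) = \bigcup_{\aelem \in E_0} \aelemgamma(\aelem)$ over the analysis elements $E_0 = \{ (\view, \sabselem(\view)) \mid \view \in \viewset \} \cup \symbelem$ of Definition~\ref{def:aelem}; every clause of $\areform$ --- the two converters $\asconverter, \saconverter$ and the identity --- preserves $\aelemgamma$ by their stated information-preservation property, so $\bigcup_{\aelem \in E} \aelemgamma(\aelem) = \bigcup_{\aelem \in E_0} \aelemgamma(\aelem)$ for $E = \dot{\areform}(E_0)$; and recomposing $E$ into $(\sabselem', \symbelem')$ loses nothing because the per-view $\bigjoin$ satisfies $\gamma(\bigjoin_i \abselem_i) \supseteq \bigcup_i \gamma(\abselem_i)$. (ii) $\step(\sgamma(\sabselem')) \subseteq \sgamma(\sabsstep(\sabselem'))$ is the standard soundness computation for sensitive abstract interpretation (Section~\ref{sec:sens}), using the view-transition soundness condition and that $\viewset$ partitions $\stset$. (iii) $\step(\symbgamma(\symbelem')) \subseteq \symbgamma(\symbstep(\symbelem'))$ (Lemma~\ref{lemma:sound-symbstep}): for $\st = \instant{\symbst}{\imap} \in \instant{\symbst}{\absimap}$ with $(\absimap, \symbst) \in \symbelem'$, $\imap \in \imapgamma(\absimap)$, and $\st \trans \st'$, note that since $\symbelem'$ is post-$\reform$ we have $\symbst \not\symbtrans \excst$, i.e.\ $\symbst \symbtrans \symbst''$ for some $\symbst'' \neq \excst$; validity (Definition~\ref{def:valid-symbtrans}) then gives $\{ \st'' \mid \instant{\symbst}{\imap} \trans \st'' \} = \{ \instant{\symbst''}{\imap} \}$, hence $\st' = \instant{\symbst''}{\imap} \in \instant{\symbst''}{\absimap} \subseteq \symbgamma(\symbstep(\symbelem'))$. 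Concatenating (i)--(iii): $\step \circ \combgamma(\combelem) \subseteq \step \circ \combgamma(\reform(\combelem)) = \step(\sgamma(\sabselem')) \cup \step(\symbgamma(\symbelem')) \subseteq \sgamma(\sabsstep(\sabselem')) \cup \symbgamma(\symbstep(\symbelem')) = \combgamma(\combstep(\combelem))$.

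\emph{Main obstacle.} I expect Lemma~\ref{lemma:reform} and its interaction with step~(iii) to be the delicate part. Two points need care. First, $E$ may contain several sensitive elements for the same view $\view$ --- the unconverted $\sabselem(\view)$ plus whichever sealed executions just terminated at $\view$ through $\saconverter$ --- so the recomposition must use $\bigjoin$, and only ``$\supseteq$'' rather than equality is available there; happily that is the direction soundness needs. Second, step~(iii) uses that every $(\absimap, \symbst) \in \symbelem'$ is \emph{live}, i.e.\ $\symbst \symbtrans \symbst''$ for some $\symbst'' \neq \excst$: this holds because $\areform$ sends any dead sealed element to the sensitive side via $\saconverter$, and because condition~(\ref{equ:asc-cond}) of Theorem~\ref{theorem:shortcut} --- through its ``$1 < k$'' --- guarantees that a sealed element freshly produced by $\asconverter$ still has at least one sealed transition (without this, $\symbstep$ would drop concrete successors that the dead sealed state still has). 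It is also worth recording that validity forces each $\instant{\symbst}{\imap}$ to have exactly one $\trans$-successor, so step~(iii) really does capture \emph{all} concrete successors, not merely some.
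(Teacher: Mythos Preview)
Your proposal is correct and mirrors the paper's own development almost exactly: the paper reduces Theorem~\ref{theorem:soundness} to the standard fixpoint-transfer argument and then discharges~(\ref{equ:sound-join}) and~(\ref{equ:sound-combstep}) via Lemmas~\ref{lemma:sound-join}, \ref{lemma:reform}, \ref{lemma:sound-symbstep}, and~\ref{lemma:sound-combstep}, which correspond precisely to your join step and your steps~(i)--(iii). Your identification of the two delicate points --- the need for $\bigjoin$ when recomposing $E$ per view, and the liveness of post-$\reform$ sealed elements via the ``$1<k$'' clause of condition~(\ref{equ:asc-cond}) --- matches the paper's Lemma~\ref{lemma:reform} and its use in Lemma~\ref{lemma:sound-symbstep}.
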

\begin{proof}
  First, we prove that the abstract transfer function $\combtransfer: \combdom
  \rightarrow \combdom$ defined as $\combtransfer(\combelem) = \combelem \join
  \combstep(\combelem)$ is sound
  \[
    \begin{array}{rcll}
      \transfer \circ \combgamma(\combelem)
      &=& \combgamma(\combelem) \cup \step \circ \combgamma(\combelem)\\
      &\subseteq& \combgamma(\combelem) \cup \combgamma \circ \combstep(\combelem)
      & (\because \; \text{condition~(\ref{equ:sound-combstep})})\\
      &\subseteq& \combgamma(\combelem \join \combstep(\combelem))
      & (\because \; \text{condition~(\ref{equ:sound-join})})\\
      &=& \combgamma \circ \combtransfer(\combelem)\\
    \end{array}
  \]
  Then, the abstract semantics $\combsem{\prog} = \underset{n \rightarrow
  \infty}{\lim}{(\combtransfer)^n(\icombelem)} $ is also sound because it is
  defined with a sound abstract transfer function $\combtransfer$ using the
  combined one-step execution $\combstep$.
\end{proof}

Now, we should show that two conditions about the soundness of the join
operator (\ref{equ:sound-join}) and the soundness of the combined one-step
execution (\ref{equ:sound-combstep}) in Theorem~\ref{theorem:soundness} hold.

First, we prove the soundness of the join operator (\ref{equ:sound-join}) in
Lemma~\ref{lemma:sound-join}.
\begin{lemma}[Soundness of $\join$]\label{lemma:sound-join}
  \[
    \forall \combelem_0, \combelem_1 \in \combdom. \; \combgamma(\combelem_0) \cup
    \combgamma(\combelem_1) \subseteq \combgamma(\combelem_0 \join \combelem_1)
  \]
\end{lemma}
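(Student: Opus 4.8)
The plan is to decompose both sides along the two components of the combined domain and treat the sensitive part and the {\sealed} part independently. Write $\combelem_0 = (\sabselem_0, \symbelem_0)$ and $\combelem_1 = (\sabselem_1, \symbelem_1)$. Unfolding the definitions of $\combgamma$ and of the combined join, the goal becomes
\[
  \sgamma(\sabselem_0) \cup \symbgamma(\symbelem_0) \cup \sgamma(\sabselem_1) \cup \symbgamma(\symbelem_1)
  \;\subseteq\;
  \sgamma(\sabselem_0 \join \sabselem_1) \cup \symbgamma(\symbelem_0 \cup \symbelem_1),
\]
so it suffices to prove the two sub-inclusions $\sgamma(\sabselem_0) \cup \sgamma(\sabselem_1) \subseteq \sgamma(\sabselem_0 \join \sabselem_1)$ and $\symbgamma(\symbelem_0) \cup \symbgamma(\symbelem_1) \subseteq \symbgamma(\symbelem_0 \cup \symbelem_1)$, and then recombine.

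For the {\sealed} component the inclusion is immediate, and in fact an equality: since $\symbgamma(\symbelem) = \bigcup\{\instant{\symbst}{\absimap} \mid (\absimap,\symbst)\in\symbelem\}$ is merely a union indexed by membership in $\symbelem$, a union over $\symbelem_0 \cup \symbelem_1$ splits as the union over $\symbelem_0$ together with the union over $\symbelem_1$, with no information gained or lost.

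For the sensitive component I would argue pointwise over views. Recalling that the join on the function lattice $\sabsdom$ is computed pointwise, $(\sabselem_0 \join \sabselem_1)(\view) = \sabselem_0(\view) \join \sabselem_1(\view)$, and using distributivity of intersection over union in the powerset lattice $\dom$, for each $\view \in \viewset$ we get
\[
  \big(\viewmap(\view) \cap \gamma(\sabselem_0(\view))\big) \cup \big(\viewmap(\view) \cap \gamma(\sabselem_1(\view))\big)
  = \viewmap(\view) \cap \big(\gamma(\sabselem_0(\view)) \cup \gamma(\sabselem_1(\view))\big)
  \subseteq \viewmap(\view) \cap \gamma\big((\sabselem_0 \join \sabselem_1)(\view)\big),
\]
where the final step invokes the soundness of the abstract join in $\absdom$ (condition~(\ref{equ:sound-join}) of Section~\ref{sec:ai}) together with monotonicity of $\viewmap(\view)\cap(-)$. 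Taking the union of both sides over all $\view \in \viewset$ and folding the definition of $\sgamma$ back in yields $\sgamma(\sabselem_0) \cup \sgamma(\sabselem_1) \subseteq \sgamma(\sabselem_0 \join \sabselem_1)$. Combining this with the {\sealed} sub-inclusion establishes the lemma.

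I do not expect a genuine obstacle: the statement is essentially bookkeeping, since each concretization is assembled from unions and each underlying join (the pointwise abstract join and plain set union) is already sound for its own concretization. The only points deserving care are the interchange of $\cap$ with the union over the view index set — valid because $\dom = \powerset{\stset}$ is completely distributive — and the observation that, unlike the sensitive part, the {\sealed} part contributes an exact equality rather than a proper inclusion, which keeps the merged bound tight.
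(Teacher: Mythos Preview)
Your proof is correct and follows essentially the same decomposition as the paper: split $\combgamma$ into its $\sgamma$ and $\symbgamma$ summands, handle the {\sealed} part as an exact equality via the union-over-index-set form of $\symbgamma$, and handle the sensitive part via soundness of its join. The only difference is granularity: the paper simply invokes ``$\sabsdom$ is sound'' for the inclusion $\sgamma(\sabselem_0)\cup\sgamma(\sabselem_1)\subseteq\sgamma(\sabselem_0\join\sabselem_1)$, whereas you unfold that step pointwise over views and derive it from condition~(\ref{equ:sound-join}) on the base domain $\absdom$ together with the pointwise definition of the join on $\sabsdom$.
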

\begin{proof}
  \[
    \begin{array}{cl}
      \multicolumn{2}{l}{
        \combgamma((\sabselem, \symbelem)) \cup \combgamma((\sabselem', \symbelem'))
      }\\
      =& \sgamma(\sabselem) \cup \symbgamma(\symbelem)
      \cup \sgamma(\sabselem') \cup \symbgamma(\symbelem')\\
      =& (\sgamma(\sabselem)\cup \sgamma(\sabselem'))
      \cup (\symbgamma(\symbelem) \cup \symbgamma(\symbelem'))\\
      \subseteq& \sgamma(\sabselem \join \sabselem')
      \cup (\symbgamma(\symbelem) \cup \symbgamma(\symbelem'))\\
      & \multicolumn{1}{r}{(\because \; \sabsdom \; \text{is sound})}\\
      =& \sgamma(\sabselem \join \sabselem')
      \cup \symbgamma(\symbelem \cup \symbelem')\\
      =& \combgamma((\sabselem \join \sabselem', \symbelem \cup \symbelem'))\\
      =& \combgamma((\sabselem, \symbelem) \join (\sabselem', \symbelem'))\\
    \end{array}
  \]
\end{proof}

For the condition (\ref{equ:sound-combstep}), we first prove two properties of the
$\reform$ function in Lemma~\ref{lemma:reform}.  Using the properties, we prove
the soundness of the sealed one-step execution in
Lemma~\ref{lemma:sound-symbstep}.  Finally, we prove the soundness of the
combined one-step execution (\ref{equ:sound-combstep}) in
Lemma~\ref{lemma:sound-combstep}.

\begin{lemma}[Properties of $\reform$]\label{lemma:reform}
  For a given combined state $\combelem \in \combdom$, the $\reform$ function
  satisfies the following two properties:
  \begin{itemize}
    \item $\combgamma(\combelem) \subseteq \combgamma \circ \reform(\combelem)$
    \item $\forall (\absimap, \symbst) \in \symbelem. \; \exists \symbst' \in
      \symbstset.  \; \text{s.t.} \; \symbst \symbtrans \symbst'$
  \end{itemize}
  where $(\sabselem, \symbelem) = \reform(\combelem)$
\end{lemma}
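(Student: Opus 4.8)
The plan is to first rewrite $\combgamma$ as a union of analysis‑element concretizations, so that both properties reduce to statements about how $\dot{\areform}$ acts elementwise. Fix $\combelem \in \combdom$ and, as in the statement, write $(\sabselem, \symbelem) = \reform(\combelem)$. Let $A$ be the set of analysis elements on which $\reform$ operates, namely one pair $(\view, \abselem_\view)$ for each $\view \in \viewset$ (with $\abselem_\view$ the sensitive abstract state $\combelem$ assigns to $\view$) together with every sealed element occurring in $\combelem$, and put $E = \dot{\areform}(A)$. Unfolding the definitions of $\combgamma$, $\sgamma$, $\symbgamma$ and $\aelemgamma$ gives the decomposition $\combgamma(\combelem) = \bigcup_{\aelem \in A}\aelemgamma(\aelem)$, while unfolding $\reform$ gives $\combgamma(\reform(\combelem)) = \bigl(\bigcup_{\view \in \viewset}\viewmap(\view)\cap\gamma(\bigjoin\{\abselem \mid (\view,\abselem)\in E\})\bigr)\cup\bigcup\{\instant{\symbst}{\absimap}\mid(\absimap,\symbst)\in E\cap(\absimapset\times\symbstset)\}$, whose second term is exactly the union of $\aelemgamma$ over the sealed elements of $E$.

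For the first property, the key fact is that $\areform$ leaves the concretization of each analysis element unchanged: in the ``otherwise'' branch this is trivial, and in the $\asconverter$ and $\saconverter$ branches it is the no‑loss‑of‑information requirement imposed on the converters, which gives $\aelemgamma(\aelem) = \aelemgamma(\asconverter(\aelem))$ and, symmetrically, $\aelemgamma(\aelem)=\aelemgamma(\saconverter(\aelem))$. Hence $\combgamma(\combelem) = \bigcup_{\aelem\in A}\aelemgamma(\aelem) = \bigcup_{\aelem\in A}\aelemgamma(\areform(\aelem)) = \bigcup_{\aelem'\in E}\aelemgamma(\aelem')$, and it remains to show $\bigcup_{\aelem'\in E}\aelemgamma(\aelem')\subseteq\combgamma(\reform(\combelem))$. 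Splitting $E$ into its sealed and non‑sealed parts, the sealed part coincides verbatim with the second term of $\combgamma(\reform(\combelem))$; for a non‑sealed element $(\view,\abselem)\in E$ we have $\abselem\order\bigjoin\{\abselem'\mid(\view,\abselem')\in E\}$, so monotonicity of the concretization $\gamma$ yields $\viewmap(\view)\cap\gamma(\abselem)\subseteq\viewmap(\view)\cap\gamma(\bigjoin\{\ldots\})$, which lies inside the first term of $\combgamma(\reform(\combelem))$. Chaining the two inclusions gives $\combgamma(\combelem)\subseteq\combgamma\circ\reform(\combelem)$.

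For the second property, take $(\absimap,\symbst)\in\symbelem = E\cap(\absimapset\times\symbstset)$; then $(\absimap,\symbst) = \areform(\aelem)$ for some $\aelem\in A$, and since the result is a sealed element it must come from a branch of $\areform$ whose output lies in $\absimapset\times\symbstset$. If $\aelem$ is itself a sealed element of $\combelem$, the only such branch is the ``otherwise'' one, and it fires exactly when $\symbst \symbtrans \bot$ does \emph{not} hold, i.e.\ when $\symbst$ admits a transition $\symbst\symbtrans\symbst'$ for some $\symbst'\in\symbstset$ — had $\symbst\symbtrans\bot$ held, $\areform$ would have produced $\saconverter(\aelem)\in\viewset\times\absdom$ instead. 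If $\aelem = (\view,\abselem)$ is a non‑sealed element, it can only become sealed via the $\asconverter$ branch, so $\aelem\in\Dom(\asconverter)$ and $(\absimap,\symbst) = \asconverter(\aelem)$; then condition~(\ref{equ:asc-cond}) of Theorem~\ref{theorem:shortcut} gives $\symbst\symbtrans^k\bot$ with $1<k$, so the first of those $k$ steps is a genuine transition $\symbst\symbtrans\symbst'$ with $\symbst'\in\symbstset$. No abstract element becomes sealed by any other route, so in every case $\symbst$ has an outgoing sealed transition.

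I expect the main obstacle to be the bookkeeping in the first property rather than anything deep: after $\dot{\areform}$ several non‑sealed elements of $E$ may share the same view (a stuck sealed element sent back by $\saconverter$ can land on a view that already carries an abstract state), which is precisely why $\reform$ joins over all matching pairs, and pushing that join through $\gamma$ has to be justified by monotonicity; one also has to be careful to read the converters' no‑loss property as applying to both directions. The second property is then essentially syntactic once one reads off which branches of $\areform$ can emit a sealed element and uses the strict lower bound $k>1$ in condition~(\ref{equ:asc-cond}) — the clause of Theorem~\ref{theorem:shortcut} engineered exactly to guarantee that a freshly introduced sealed execution can take at least one step before getting stuck.
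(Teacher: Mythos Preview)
Your proposal is correct and follows essentially the same route as the paper: both arguments rewrite $\combgamma$ as a union of $\aelemgamma$ over the underlying analysis elements, use that $\areform$ preserves $\aelemgamma$ elementwise (via the no-loss property of the converters), and then absorb the per-view join with monotonicity/soundness of $\gamma$; the second property is handled by the identical two-case analysis on which branch of $\areform$ produced the sealed element, invoking condition~(\ref{equ:asc-cond}) for the $\asconverter$ case. Your write-up is in fact a bit cleaner about distinguishing the input $\combelem$ from the output $(\sabselem,\symbelem)$, which the paper's proof conflates notationally.
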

\begin{proof}
  \[
    \fbox{$\combgamma(\combelem) \subseteq \combgamma \circ \reform(\combelem)$}
  \]
  \[
    \begin{array}{cl}
      \multicolumn{2}{l}{\combgamma((\sabselem, \symbelem))}\\
      =& \sgamma(\sabselem) \cup \symbgamma(\symbelem)\\

      =& \left( \underset{\view \in \viewset}{\bigcup} {\viewmap(\view) \cap
      \gamma \circ \sabselem(\view)} \right) \cup \left( \underset{(\absimap,
      \symbst) \in \symbelem}{\bigcup} \instant{\symbst}{\absimap} \right) \\

      =& \left( \underset{\view \in \viewset}{\bigcup} \aelemgamma((\view,
      \sabselem(\view))) \right) \cup \left( \underset{(\absimap, \symbst) \in
      \symbelem}{\bigcup} \aelemgamma((\absimap, \symbst)) \right) \\

      =& \dot\aelemgamma(\{ (\view, \sabselem(\view)) \mid \view \in \viewset \}
      \cup \symbelem)\\

      =& \dot\aelemgamma(\dot{\areform}(\{ (\view, \sabselem(\view)) \mid \view
      \in \viewset \} \cup \symbelem))\\

       & \multicolumn{1}{r}{\because \; (\text{Trivially,} \; \forall \aelem \in
       \aelemset.  \; \aelemgamma(\aelem) = \aelemgamma \circ \areform(\aelem))}\\

      =& \dot\aelemgamma(E)\\
       & \multicolumn{1}{r}{\because \; (\text{See the definition of $E$ in
       Definition~\ref{def:reform}})}\\
    \end{array}
  \]
  \[
    \begin{array}{cl}
      =& \left( \underset{(\view, \abselem) \in E}{\bigcup} \aelemgamma((\view,
      \abselem)) \right) \cup \left( \underset{(\absimap, \symbst) \in
      E}{\bigcup} \aelemgamma((\absimap, \symbst)) \right)\\

      =& \left( \underset{(\view, \abselem) \in E}{\bigcup} \viewmap(\view) \cap
      \gamma(\abselem) \right) \cup \left( \underset{(\absimap, \symbst) \in
      E}{\bigcup} \instant{\symbst}{\absimap} \right)\\

      =& \left( \underset{\view \in \viewset}{\bigcup} { \underset{(\view,
      \abselem) \in E}{\bigcup} \viewmap(\view) \cap \gamma(\abselem) } \right)
      \cup \left( \underset{(\absimap, \symbst) \in E}{\bigcup}
      \instant{\symbst}{\absimap} \right)\\

      =& \left( \underset{\view \in \viewset}{\bigcup} {\viewmap(\view) \cap
        \left(\underset{(\view, \abselem) \in
      E}{\bigcup}{\gamma(\abselem)}\right)} \right) \cup \left(
      \underset{(\absimap, \symbst) \in E}{\bigcup} \instant{\symbst}{\absimap}
      \right)\\

      \subseteq& \left( \underset{\view \in \viewset}{\bigcup} {\viewmap(\view)
        \cap \gamma\left( \underset{(\view, \abselem) \in E}{\bigjoin}\abselem
      \right)} \right) \cup \left( \underset{(\absimap, \symbst) \in E}{\bigcup}
      \instant{\symbst}{\absimap} \right)\\

      =& \sgamma\left( \lambda \view. \underset{(\view, \abselem) \in
      E}{\bigjoin}\abselem \right) \cup \symbgamma(E \cap (\absimapset \times
      \symbstset))\\

      =& \combgamma\left( \lambda \view. \underset{(\view, \abselem) \in
      E}{\bigjoin}\abselem, E \cap (\absimapset \times \symbstset) \right)\\

      =& \combgamma \circ \reform((\sabselem, \symbelem))
    \end{array}
  \]
  \[\]
  \[
    \fbox{$\forall (\absimap, \symbst) \in \symbelem. \; \exists \symbst' \in
    \symbstset.  \; \text{s.t.} \; \symbst \symbtrans \symbst'$}
  \]

  For a given $(\absimap, \symbst) \in \symbelem$, there exists an analysis
  element $\aelem \in \aelemset$ such that $\areform(\aelem) = (\absimap,
  \symbst)$.  According to the definition of $\areform$ in
  Definition~\ref{def:areform}, there are two possible cases: $\aelem =
  (\absimap, \symbst) \wedge \exists \symbst' \in \symbstset. \; \text{s.t} \;
  \symbst \symbtrans \symbst'$ or $\aelem = (\view, \abselem) \wedge \aelem \in
  \Dom(\asconverter)$. We separately consider those two cases:
  \begin{itemize}
    \item $\aelem = (\absimap, \symbst) \wedge \exists \symbst' \in \symbstset.
      \; \text{s.t} \; \symbst \symbtrans \symbst'$\\
        By definition, $\exists \symbst' \in \symbstset.  \; \text{s.t} \;
        \symbst \symbtrans \symbst'$
    \item $\aelem = (\view, \abselem) \wedge \aelem \in \Dom(\asconverter)$\\
      By the condition~\ref{equ:asc-cond} in the Theorem~\ref{theorem:shortcut},\\
      $\exists k > 1. \symbst \symbtrans^k \excst$.  Thus, $\exists \symbst' \in
      \symbstset.  \; \text{s.t} \; \symbst \symbtrans \symbst'$
  \end{itemize}
\end{proof}

\begin{lemma}[Soundness of $\symbstep$]\label{lemma:sound-symbstep}
  The sealed one-step execution $\symbstep$ is sound:
  \[
    \step \circ \symbgamma(\symbelem) \subseteq
    \symbgamma \circ \symbstep(\symbelem)\\
  \]
\end{lemma}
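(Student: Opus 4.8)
The plan is a straightforward element chase through the definitions of $\symbgamma$ and $\symbstep$ (Definition~\ref{def:symbdom}), with the crux being a single application of the validity of $\symbtrans$ (Definition~\ref{def:valid-symbtrans}) together with the second property of Lemma~\ref{lemma:reform}. Concretely, I would fix an arbitrary $\st' \in \step \circ \symbgamma(\symbelem)$; by the definition of $\step$ there is some $\st \in \symbgamma(\symbelem)$ with $\st \trans \st'$, and unfolding $\symbgamma(\symbelem) = \bigcup\{\instant{\symbst}{\absimap} \mid (\absimap,\symbst) \in \symbelem\}$ and $\instant{\symbst}{\absimap} = \{\instant{\symbst}{\imap} \mid \imap \in \imapgamma(\absimap)\}$, I obtain a pair $(\absimap, \symbst) \in \symbelem$ and an instantiation map $\imap \in \imapgamma(\absimap) \subseteq \imapset$ with $\st = \instant{\symbst}{\imap}$.

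The main obstacle is that I then need $\symbst$ to admit an actual sealed transition, i.e. $\symbst \symbtrans \symbst'$ for some $\symbst' \in \symbstset$ rather than $\symbst \symbtrans \excst$: if $\symbst$ had no sealed transition while $\instant{\symbst}{\imap}$ still had a concrete successor, $\symbstep$ would simply drop that successor and the inclusion would fail. This means the lemma is really about the sealed component produced by $\reform$ inside $\combstep$, not an arbitrary $\symbelem$; I would make this explicit by invoking the second bullet of Lemma~\ref{lemma:reform}, which guarantees that every $(\absimap, \symbst)$ occurring in that component satisfies $\symbst \symbtrans \symbst'$ for some $\symbst'$. Fixing such a $\symbst'$ discharges the obstacle.

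Once $\symbst \symbtrans \symbst'$ is in hand, validity gives, for our $\imap$, the equality $\{\,\st'' \mid \instant{\symbst}{\imap} \trans \st''\,\} = \{\,\instant{\symbst'}{\imap}\,\}$ — note this simultaneously supplies the determinism of $\trans$ on instantiated states that the argument needs — so from $\instant{\symbst}{\imap} = \st \trans \st'$ we conclude $\st' = \instant{\symbst'}{\imap}$. Finally, $(\absimap, \symbst) \in \symbelem$ and $\symbst \symbtrans \symbst'$ yield $(\absimap, \symbst') \in \symbstep(\symbelem)$ by Definition~\ref{def:symbdom}, and since $\imap \in \imapgamma(\absimap)$ we get $\st' = \instant{\symbst'}{\imap} \in \instant{\symbst'}{\absimap} \subseteq \symbgamma(\symbstep(\symbelem))$. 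As $\st'$ was arbitrary, $\step \circ \symbgamma(\symbelem) \subseteq \symbgamma \circ \symbstep(\symbelem)$. The only genuinely subtle point is the one flagged above; everything else is routine definition-unfolding plus the single use of validity.
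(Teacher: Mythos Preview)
Your proposal is correct and follows essentially the same route as the paper: unfold $\symbgamma$, invoke the second property of Lemma~\ref{lemma:reform} to guarantee a sealed successor $\symbst \symbtrans \symbst'$, apply validity of $\symbtrans$ to identify $\st'$ with $\instant{\symbst'}{\imap}$, and fold back into $\symbgamma \circ \symbstep$. Your explicit remark that the lemma is really about the $\symbelem$ produced by $\reform$ (since otherwise $\symbst \symbtrans \excst$ would break the argument) is exactly the implicit assumption the paper also makes when it cites Lemma~\ref{lemma:reform} mid-proof; you have simply made that dependency visible.
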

\begin{proof}
  \[
    \begin{array}{cl}
      \multicolumn{2}{l}{\step \circ \symbgamma(\symbelem)}\\
      =& \step(\bigcup \{ \instant{\symbst}{\absimap} \mid (\absimap, \symbst) \in
      \symbelem\})\\
      =& \{ \st' \mid (\absimap, \symbst) \in \symbelem \wedge \st \in
      \instant{\symbst}{\absimap} \wedge \st \trans \st'\})\\
      =& \{ \st' \mid (\absimap, \symbst) \in \symbelem \wedge \imap \in
      \imapgamma(\absimap) \wedge \instant{\symbst}{\imap} \trans \st'\})\\
      =& \{ \st' \mid (\absimap, \symbst) \in \symbelem \wedge \imap \in
      \imapgamma(\absimap) \wedge \instant{\symbst}{\imap} \trans \st'\\
       & \phantom{\{ \st' \mid (\absimap, \symbst) \in \symbelem \wedge \imap \in
      \imapgamma(\absimap)} \wedge \symbst \symbtrans \symbst' \})\\
       & \multicolumn{1}{r}{(\because \; \text{Second property in
       Lemma~ref{lemma:reform}})}\\
    \end{array}
  \]
  \[
    \begin{array}{cl}
      =& \{ \instant{\symbst'}{\imap} \mid (\absimap, \symbst)
      \in \symbelem \wedge \imap \in \imapgamma(\absimap) \wedge \symbst
      \symbtrans \symbst'\}\\
      & \multicolumn{1}{r}{(\because \; \text{Validity of} \; \symbtrans)}\\
      =& \bigcup \{ \instant{\symbst'}{\absimap} \mid (\absimap, \symbst)
      \in \symbelem \wedge \symbst \symbtrans \symbst' \}\\
      =& \symbgamma(\{ (\absimap, \symbst') \mid (\absimap, \symbst)
      \in \symbelem \wedge \symbst \symbtrans \symbst' \})\\
      =& \symbgamma \circ \symbstep(\symbelem)\\
    \end{array}
  \]
\end{proof}

\begin{lemma}[Soundness of $\combstep$]\label{lemma:sound-combstep}
  The combined one-step execution $\combstep$ is sound:
  \[
    \forall \combelem \in \combdom. \; \step \circ \combgamma(\combelem) \subseteq
    \combgamma \circ \combstep(\combelem)\\
  \]
\end{lemma}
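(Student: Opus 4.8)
The plan is to reduce this to the two soundness facts already established — the first property of Lemma~\ref{lemma:reform} for the \reformname step and Lemma~\ref{lemma:sound-symbstep} for the sealed one-step execution — together with the soundness of the sensitive abstract one-step execution $\sabsstep$ that comes from the views-based framework of Section~\ref{sec:sens}. The structural fact that glues these together is that the concrete one-step execution is additive and monotone: since $\step(\elem) = \{ \st' \mid \st \in \elem \wedge \st \trans \st' \}$, we have $\step(X \cup Y) = \step(X) \cup \step(Y)$ for all $X, Y \in \dom$ and $X \order Y \Rightarrow \step(X) \order \step(Y)$.

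First I would unfold the definitions. Writing $(\sabselem, \symbelem) = \reform(\combelem)$, the definition of $\combstep$ gives $\combstep(\combelem) = (\sabsstep(\sabselem), \symbstep(\symbelem))$, hence $\combgamma \circ \combstep(\combelem) = \sgamma(\sabsstep(\sabselem)) \cup \symbgamma(\symbstep(\symbelem))$. On the left-hand side I would apply the first property of Lemma~\ref{lemma:reform}, namely $\combgamma(\combelem) \subseteq \combgamma \circ \reform(\combelem) = \sgamma(\sabselem) \cup \symbgamma(\symbelem)$, and then use monotonicity and additivity of $\step$ to obtain
\[
  \step \circ \combgamma(\combelem) \;\subseteq\; \step(\sgamma(\sabselem)) \;\cup\; \step(\symbgamma(\symbelem)).
\]

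Then I would bound the two summands separately. For the sealed part, Lemma~\ref{lemma:sound-symbstep} gives directly $\step(\symbgamma(\symbelem)) \subseteq \symbgamma(\symbstep(\symbelem))$. For the sensitive part, I would invoke soundness of $\sabsstep$ with respect to $\sgamma$, i.e. $\step \circ \sgamma(\sabselem) \subseteq \sgamma \circ \sabsstep(\sabselem)$; this is the standard consequence of the view-transition soundness condition stated in Section~\ref{sec:sens}, obtained by unfolding $\sgamma$ and $\sabsstep$, distributing $\step$ over the union indexed by views, applying that condition view-by-view, and using soundness of the join on $\absdom$ (Section~\ref{sec:ai}). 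Combining the two bounds yields
\[
  \step \circ \combgamma(\combelem) \;\subseteq\; \sgamma(\sabsstep(\sabselem)) \cup \symbgamma(\symbstep(\symbelem)) \;=\; \combgamma \circ \combstep(\combelem),
\]
which is exactly condition~(\ref{equ:sound-combstep}).

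The only delicate point — and the step I would be most careful about — is the appeal to soundness of $\sabsstep$ at the boundary between the two domains: one must be sure that the $\reform$ step fed a legitimate sensitive component $\sabselem$ into $\sabsstep$ and that no concrete state silently migrated out of (or into) the sensitive part unaccounted for. This is precisely what the first property of Lemma~\ref{lemma:reform} guarantees, so no concrete behavior is dropped when $\reform$ splits work between $\sabsstep$ and $\symbstep$. Everything else is routine set manipulation.
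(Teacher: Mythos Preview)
Your proposal is correct and follows essentially the same route as the paper's own proof: apply the first property of Lemma~\ref{lemma:reform} to pass to $(\sabselem,\symbelem)=\reform(\combelem)$, split $\step$ over the union $\sgamma(\sabselem)\cup\symbgamma(\symbelem)$, and then bound the two summands via soundness of $\sabsstep$ and Lemma~\ref{lemma:sound-symbstep}, respectively. Your explicit remark about additivity/monotonicity of $\step$ and your final caveat about $\reform$ feeding a legitimate $\symbelem$ into $\symbstep$ are exactly the justifications the paper leaves implicit (the latter is what the second property of Lemma~\ref{lemma:reform} is for, and it is what makes Lemma~\ref{lemma:sound-symbstep} applicable here).
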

\begin{proof}
  \[
    \begin{array}{cl}
      \multicolumn{2}{l}{
        \step \circ \combgamma(\combelem)
      }\\
      \subseteq& \step \circ \combgamma((\sabselem, \symbelem))\\
       & \multicolumn{1}{r}{(\because \; \text{First property in
       Lemma~\ref{lemma:reform}}}\\
       & \multicolumn{1}{r}{\text{where} \; (\sabselem, \symbelem)
       = \reform(\combelem).)}\\

      =& \step(\sgamma(\sabselem) \cup \symbgamma(\symbelem))\\
      =& \step(\sgamma(\sabselem)) \cup \step(\symbgamma(\symbelem))\\
      \subseteq& \sgamma \circ \sabsstep(\sabselem) \cup \step(\symbgamma(\symbelem))\\
      & \multicolumn{1}{r}{(\because \; \sabsstep \; \text{is sound.})}\\
      \subseteq& \sgamma \circ \sabsstep(\sabselem) \cup \symbgamma \circ
      \symbstep((\symbelem))\\
               & \multicolumn{1}{r}{(\because \; \text{and
               Lemma~\ref{lemma:sound-symbstep}})}\\
      =& \combgamma((\sabsstep(\sabselem), \symbstep(\symbelem)))\\
      =& \combgamma \circ \combstep(\combelem)\\
    \end{array}
  \]
\end{proof}

\subsection{Termination}

Before proving the termination of the abstract interpretation using the combined
domain $\combdom$, we define several notations. The initial abstract state
$\icombelem = (\isabselem, \varnothing)$ is pair of the initial abstract state of
the sensitive abstract domain $\sabsdom$ and an empty set. For each iteration $i
\geq 0$, we define the $i$-th result of abstract interpretation
$\combtransfer^i(\icombelem) = \combelem^i = (\sabselem^i, \symbelem^i)$ and the
\textit{difference set} $\diffset_i = \symbelem^{i+1} \setminus \symbelem^i$.
For simplicity, we define $\diffset_i$ as $\varnothing$ for $i < 0$.  Moreover, we
define a lifted version of sealed relation $\liftsymbtrans \subseteq
(\absimapset \times \symbstset) \times (\absimapset \times \symbstset)$ as
follows:
\[
  (\absimap, \symbst) \liftsymbtrans (\absimap, \symbst') \Leftrightarrow
  \symbst \symbtrans \symbst'
\]
Using the lifted relation, we define the \textit{time to live (TTL)} function of
sealed states $\ttl_i: \diffset_i \rightarrow \numset$ for each iteration $i
\geq 0$ as follows:
\begin{definition}[TTL Function]
  \[
    \begin{array}{c}
      \ttl_i(\symbaelem) = \left \{
      \begin{array}{l}
        N - 1 \;\; ( \text{if} \; D = \varnothing)\\
        \text{min}(\dot\ttl_{i-1}(D)) - 1
        \;\; ( \text{otherwise})
      \end{array}
      \right. \\
      \\
      \text{where} \; D =
      \{\symbaelem' \in \diffset_{i-1} \mid \symbaelem' \liftsymbtrans \symbaelem\}
    \end{array}
  \]
\end{definition}

Based on the notations, we formally prove the termination property as follows:
\begin{theorem}[Termination]\label{theorem:termination}
  The abstract interpretation using the combined domain $\combdom$
  \textbf{terminates} in a finite time if
  \begin{equation}\label{equ:termination-sai}
    \exists n. \; \forall m \geq n. \; \sabselem^m = \sabselem^n
  \end{equation}
  \begin{equation}\label{equ:bounded-ttl}
    \forall i \geq 0. \; \forall \symbaelem \in \diffset_i. \;
    0 < \ttl_i(\symbaelem) < N
  \end{equation}
  \begin{equation}\label{equ:dec-ttl}
    \begin{array}{c}
      \forall i > 0. \; \sabselem^{i-1} = \sabselem^i \Rightarrow\\
      \sup(\dot \ttl_i(\diffset_i)) \leq \sup(\dot \ttl_{i-1}(\diffset_{i-1})) - 1
    \end{array}
  \end{equation}
\end{theorem}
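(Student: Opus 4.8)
Since the statement takes conditions~(\ref{equ:termination-sai}), (\ref{equ:bounded-ttl}) and~(\ref{equ:dec-ttl}) as hypotheses, the plan is to show directly that the iteration sequence $\combelem^i=\combtransfer^i(\icombelem)$ becomes stationary after finitely many steps. First I would record that, under the product order $(\sabselem,\symbelem)\order(\sabselem',\symbelem')\Leftrightarrow\sabselem\order\sabselem'\wedge\symbelem\subseteq\symbelem'$, the join definition gives $\combtransfer(\combelem)=\combelem\join\combstep(\combelem)\sqsupseteq\combelem$, so $\combelem^0\order\combelem^1\order\cdots$ is an increasing chain; hence it suffices to find an index $k$ with $\combtransfer(\combelem^k)=\combelem^k$, since then $\combelem^j=\combelem^k$ for all $j\ge k$ and $\combsem{\prog}=\combelem^k$ is attained in finitely many iterations. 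Because the chain is componentwise increasing, $\combelem^{k+1}=\combelem^k$ is equivalent to $\sabselem^{k+1}=\sabselem^k$ together with $\diffset_k=\varnothing$ (recall $\diffset_k=\symbelem^{k+1}\setminus\symbelem^k$, and monotonicity of the sealed component turns $\diffset_k=\varnothing$ into $\symbelem^{k+1}=\symbelem^k$), so the goal reduces to producing such a $k$.

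Next I would invoke~(\ref{equ:termination-sai}) to fix $n$ with $\sabselem^m=\sabselem^n$ for all $m\ge n$, so that from iteration $n$ onward the sensitive component is frozen and the premise $\sabselem^{i-1}=\sabselem^i$ of~(\ref{equ:dec-ttl}) holds for every $i>n$. Suppose, towards a contradiction, that $\diffset_i\ne\varnothing$ for all $n\le i\le n+N-1$, and put $s_i=\sup(\dot\ttl_i(\diffset_i))$ for those $i$. By~(\ref{equ:bounded-ttl}) each $\dot\ttl_i(\diffset_i)$ is a nonempty subset of $\{1,\dots,N-1\}$, so $s_i$ is a natural number with $1\le s_i\le N-1$; and by~(\ref{equ:dec-ttl}), $s_i\le s_{i-1}-1$ for every $n<i\le n+N-1$.

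Iterating this bound gives $s_{n+j}\le s_n-j$ for $0\le j\le N-1$; taking $j=N-1$ and using $s_n\le N-1$ yields $s_{n+N-1}\le 0$, contradicting $s_{n+N-1}\ge1$. Hence some $k$ with $n\le k\le n+N-1$ has $\diffset_k=\varnothing$; since $k,k+1\ge n$ we also have $\sabselem^{k+1}=\sabselem^k$, so $\combelem^{k+1}=\combelem^k$, and the abstract interpretation terminates after at most $n+N-1$ iterations.

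The part I expect to be the real obstacle is not this combinatorial descent but justifying the three hypotheses from the premises of Theorem~\ref{theorem:shortcut}: (\ref{equ:termination-sai}) from the finite height of $\sabsdom$ plus monotonicity of the chain, and~(\ref{equ:bounded-ttl}) and~(\ref{equ:dec-ttl}) from the definition of $\ttl_i$ together with validity of $\symbtrans$ and condition~(\ref{equ:asc-cond}). The delicate point is the bookkeeping behind $\ttl_i$: one must show that every $\symbaelem\in\diffset_i$ enters in exactly one of two ways --- as a $\liftsymbtrans$-successor of an element of $\diffset_{i-1}$, so that its TTL is one less than the least TTL of its predecessors, or as the $\symbstep$-image of a freshly reformed sensitive element $\asconverter((\view,\sabselem^i(\view)))$, whose sealed run~(\ref{equ:asc-cond}) bounds by at most $N$ steps so that its TTL starts at $N-1$ --- and that these two cases are exhaustive and mutually exclusive. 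Once $\sabselem$ is frozen, no element of the second kind can still be \emph{new} in $\diffset_i$ (its $\symbstep$-image already lies in $\symbelem^i$), so only successors contribute and the supremum is forced strictly down, which is exactly~(\ref{equ:dec-ttl}); carrying out this case analysis and the ``already present'' argument carefully is where most of the effort lies.
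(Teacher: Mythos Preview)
Your proposal is correct and follows essentially the same route as the paper: fix $n$ from~(\ref{equ:termination-sai}), then use~(\ref{equ:dec-ttl}) to force $\sup(\dot\ttl_i(\diffset_i))$ strictly downward after step~$n$ until the lower bound in~(\ref{equ:bounded-ttl}) is violated, concluding that some $\diffset_k$ is empty and hence the iteration stabilizes. The only cosmetic differences are that you phrase the descent as a proof by contradiction and then invoke determinism of $\combtransfer$ to propagate the single fixed point $\combelem^{k+1}=\combelem^k$ forward, whereas the paper argues directly that $\diffset_{n+j}=\varnothing$ for \emph{every} $j\ge N$ via the same $\inf>\sup$ clash; your version even yields the slightly sharper bound $n+N-1$ in place of the paper's $n+N$. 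Your closing paragraph correctly anticipates that the substantive work lies not in this theorem but in establishing~(\ref{equ:termination-sai})--(\ref{equ:dec-ttl}) from the hypotheses of Theorem~\ref{theorem:shortcut}, which the paper indeed handles separately in Lemmas~\ref{lemma:termination-sai}--\ref{lemma:dec-ttl}.
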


\begin{proof}
  By the condition (\ref{equ:termination-sai}), there exists $n \in \numset$
  such that $\sabselem^m = \sabselem^n$ for all $m \geq n$.  By the condition
  (\ref{equ:bounded-ttl}), the TTL of each sealed state in $\diffset_n$ is
  bounded by $N$:
  \[
    \sup(\dot \ttl_n(\diffset_n)) < N
  \].
  Then, the upper bound of TTL for sealed states in each difference set after
  the $n-$th iteration is decreased by the condition (\ref{equ:dec-ttl}):
  \[
    \forall i > 0. \; \sup(\dot \ttl_{n+i}(\diffset_{n+i})) \leq \sup(\dot
    \ttl_{n+i-1}(\diffset_{n+i-1})) - 1
  \].
  which implies that
  \[
    \sup(\dot \ttl_{n+i}(\diffset_{n+i})) \leq \sup(\dot \ttl_n(\diffset_n)) - i < N - i
  \]
  Therefore, for $j \geq N$,
  \[
    \sup(\dot \ttl_{n+j}(\diffset_{n+j})) < N - j \leq 0
  \]
  Notice that again by the condition (\ref{equ:bounded-ttl}),
  \[
    inf(\dot \ttl_{n+j}(\diffset_{n+j})) > 0
  \]
  meaning that
  \[
    inf(\dot \ttl_{n+j}(\diffset_{n+j})) > \sup(\dot \ttl_{n+j}(\diffset_{n+j}))
  \]
  which implies $\diffset_{n+j} = \varnothing$ and $\symbelem^{n+j+1} =
  \symbelem^{n+j}$.
  Therefore, for all $m \geq n + N$,
  \[
    \sabselem^m = \sabselem^{n+N} \wedge \symbelem^m = \symbelem^{n+N}
  \]
  and
  \[
    \combelem^m = \combelem^{n+N}
  \]
  which means the abstract interpretation using the combined domain
  $\combdom$ terminates in $n+N$ iterations.
\end{proof}

Now, we should show that three conditions about the termination of the sensitive
abstract interpretation (\ref{equ:termination-sai}), the bound of TTL for
sealed states in difference sets (\ref{equ:bounded-ttl}), and the decrease of
their upper bounds (\ref{equ:dec-ttl}) in Theorem~\ref{theorem:termination}
hold.

First, we prove the termination of the sensitive abstract interpretation
(\ref{equ:termination-sai}) in Lemma~\ref{lemma:termination-sai}.

\begin{lemma}[Termination of Sensitive Abstract Interpretation]\label{lemma:sabs-term}
\label{lemma:termination-sai}
  \[
    \exists n. \; \forall m \geq n. \;
    \sabselem^m = \sabselem^n
  \]
\end{lemma}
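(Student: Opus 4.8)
The plan is to get this from two facts: (i) the sequence $(\sabselem^i)_{i \geq 0}$ of sensitive components of the fixpoint iterates $\combelem^i = (\sabselem^i, \symbelem^i) = \combtransfer^i(\icombelem)$ is an ascending chain in $\sabsdom$, and (ii) $\sabsdom$ has finite height, which is one of the standing hypotheses of Theorem~\ref{theorem:shortcut}. An ascending chain in a poset of finite height is eventually constant, so there is an $n$ with $\sabselem^m = \sabselem^n$ for all $m \geq n$, which is exactly the statement. So the real work is entirely in establishing (i); (ii) is assumed and the concluding argument is standard.

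For (i), I would simply unfold the definitions. We have $\combelem^{i+1} = \combtransfer(\combelem^i) = \combelem^i \join \combstep(\combelem^i)$, and since the combined join acts componentwise, $\sabselem^{i+1} = \sabselem^i \join \sabsstep(\hat\sabselem^i)$ where $(\hat\sabselem^i, \hat\symbelem^i) = \reform(\combelem^i)$. In particular $\sabselem^i \order \sabselem^{i+1}$ \emph{regardless of what $\reform$ and $\sabsstep$ do}: $\combtransfer$ is extensive by construction ($\combelem \order \combelem \join \combstep(\combelem) = \combtransfer(\combelem)$), so the whole sequence $(\combelem^i)_i$ — hence its first projection $(\sabselem^i)_i$ — is ascending. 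Note that no monotonicity of $\combstep$, $\reform$, or $\sabsstep$ is required; only extensivity of $\combtransfer$, which is immediate.

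I expect the only genuine pitfall is the temptation to analyze $\reform$ in detail. Inside a single $\combstep$, the $\reform$ step can actually \emph{decrease} the sensitive component — e.g. when a view $(\view, \abselem)$ with $(\view, \abselem) \in \Dom(\asconverter)$ is moved into the sealed domain via $\asconverter$, leaving the empty join $\bigjoin\emptyset$ in that view of $\hat\sabselem^i$ — but because the outer $\combtransfer$ re-joins with $\combelem^i$, this transient shrinkage never appears in the iterates themselves. (This is also precisely why the companion sequence $(\symbelem^i)_i$ is \emph{not} monotone and needs the separate TTL argument of Theorem~\ref{theorem:termination}, whereas the sensitive half proven here is easy.) With (i) in hand I would close by invoking finite height of $\sabsdom$: a strictly ascending chain has length bounded by the height, so $(\sabselem^i)_i$ can strictly increase only finitely often and is therefore eventually constant, yielding the required $n$.
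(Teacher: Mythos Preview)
Your proof is correct and follows essentially the same route as the paper: extensivity of $\combtransfer$ (via the outer join with $\combelem^i$) makes $(\sabselem^i)_i$ an ascending chain, and the assumed finite height of $\sabsdom$ forces eventual stabilization. One small inaccuracy in your parenthetical remark: since the combined join acts as set union on the sealed component, $(\symbelem^i)_i$ is in fact monotone as well---the TTL argument in Theorem~\ref{theorem:termination} is needed because $\symbdom$ lacks finite height, not because that sequence fails to be ascending.
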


\begin{proof}
Note that for all $\sabselem, \sabselem' \in \sabsdom$ that satisfies
$\combtransfer((\sabselem, \_)) = (\sabselem', \_)$,
\[
  \begin{array}{rcl}
  \combtransfer((\sabselem, \_))
  &=& (\sabselem, \_) \join \combstep((\sabselem, \_))\\
  &=& (\sabselem, \_) \join (\_, \_) = (\sabselem \join \_, \_)\\
  &=& (\sabselem', \_)
  \end{array}
\]
which implies $\sabselem \order \sabselem'$.  Since
$\combtransfer((\sabselem^i, \_)) = (\sabselem^{i+1}, \_), \sabselem^i \order
\sabselem^{i+1}$ holds for all $i \geq 0$.  Then, $\sabselem^0 \order \sabselem^1
\order \sabselem^2 \cdots$ is an ascending chain.  Since the height of the
sensitive abstract domain $\sabsdom$ is finite, the ascending chain condition is
also hold. Therefore, there exists n such that for all $m \geq n, \sabselem^m =
\sabselem^n$.
\end{proof}

Then, we prove two remaining conditions (\ref{equ:bounded-ttl}) and
(\ref{equ:dec-ttl}).  We first prove two properties of difference sets in
Lemma~\ref{lemma:diffset_prop} and Corollary~\ref{corollary:only-from-diffset},
and a property of TTL in Lemma~\ref{lemma:prop-ttl}.  Using them, we prove the
bound of TTL for sealed states in difference sets (\ref{equ:bounded-ttl}) in
Corollary~\ref{corollary:bounded-ttl} and the decrease of their upper bounds
(\ref{equ:dec-ttl}) in Lemma~\ref{lemma:dec-ttl}.

\begin{lemma}\label{lemma:diffset_prop}
  \[
    \begin{array}{c}
      \forall i \geq 0. \; \forall \symbaelem \in \diffset_i. \\
      \exists \view. \; \asconverter((\view,\sabselem^i(\view))) \liftsymbtrans \symbaelem 
      \lor \exists \symbaelem' \in \diffset_{i-1} . \; \symbaelem' \liftsymbtrans \symbaelem
    \end{array}
  \]
\end{lemma}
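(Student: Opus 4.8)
The plan is to unfold the fixpoint iteration together with the $\reform$ function, and then do a case analysis on where the sealed state that produces $\symbaelem$ comes from. Write $E_i = \dot{\areform}(\{ (\view, \sabselem^i(\view)) \mid \view \in \viewset \} \cup \symbelem^i)$, so that by Definition~\ref{def:reform} the sealed component of $\reform(\combelem^i)$ is $E_i \cap (\absimapset \times \symbstset)$. Since $\combtransfer(\combelem) = \combelem \join \combstep(\combelem)$, since $\combstep(\combelem^i) = (\sabsstep(\cdot), \symbstep(E_i \cap (\absimapset \times \symbstset)))$, and since the join on $\combdom$ takes the union of the sealed components, we get $\symbelem^{i+1} = \symbelem^i \cup \symbstep(E_i \cap (\absimapset \times \symbstset))$. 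Because $\symbaelem \in \diffset_i = \symbelem^{i+1} \setminus \symbelem^i$, in particular $\symbaelem \notin \symbelem^i$, hence $\symbaelem \in \symbstep(E_i \cap (\absimapset \times \symbstset))$. By the definition of $\symbstep$ this means $\symbaelem = (\absimap, \symbst')$ for some $(\absimap, \symbst) \in E_i \cap (\absimapset \times \symbstset)$ with $\symbst \symbtrans \symbst'$; equivalently $(\absimap, \symbst) \liftsymbtrans \symbaelem$, and in particular $\symbst$ possesses a sealed successor.

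Next I would trace $(\absimap, \symbst)$ back through $\dot{\areform}$: there is some $\aelem \in \{ (\view, \sabselem^i(\view)) \mid \view \in \viewset \} \cup \symbelem^i$ with $\areform(\aelem) = (\absimap, \symbst) \in \absimapset \times \symbstset$. If $\aelem = (\view, \sabselem^i(\view))$ for some $\view$, then since $\areform(\aelem)$ landed in $\absimapset \times \symbstset$ while $\aelem$ itself lies in the summand $\viewset \times \absdom$, Definition~\ref{def:areform} forces the first branch of $\areform$, i.e. $\aelem \in \Dom(\asconverter)$ and $\asconverter((\view, \sabselem^i(\view))) = (\absimap, \symbst)$; together with $(\absimap, \symbst) \liftsymbtrans \symbaelem$ this is exactly the left disjunct. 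Otherwise $\aelem \in \symbelem^i$; then by Definition~\ref{def:areform} $\areform(\aelem)$ is either $\saconverter(\aelem)$, which lies in $\viewset \times \absdom$, or $\aelem$ itself, so landing in $\absimapset \times \symbstset$ forces $\areform(\aelem) = \aelem$, i.e. $(\absimap, \symbst) = \aelem \in \symbelem^i$.

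It remains, in this last case, to establish $(\absimap, \symbst) \in \diffset_{i-1}$; since $\diffset_{i-1} = \symbelem^i \setminus \symbelem^{i-1}$ and we already have $(\absimap, \symbst) \in \symbelem^i$, this reduces to showing $(\absimap, \symbst) \notin \symbelem^{i-1}$. I would argue this by contradiction: if $(\absimap, \symbst) \in \symbelem^{i-1}$, then because $\symbst$ has a sealed successor it is not handled by the $\saconverter$ branch of $\areform$, so $\reform$ at iteration $i-1$ keeps $(\absimap, \symbst)$ unchanged inside the sealed component $E_{i-1} \cap (\absimapset \times \symbstset)$; applying $\symbstep$ there and using $(\absimap, \symbst) \liftsymbtrans \symbaelem$ yields $\symbaelem \in \symbstep(E_{i-1} \cap (\absimapset \times \symbstset)) \subseteq \symbelem^i$, contradicting $\symbaelem \notin \symbelem^i$. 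Hence $(\absimap, \symbst) \in \symbelem^i \setminus \symbelem^{i-1} = \diffset_{i-1}$, and $\symbaelem' := (\absimap, \symbst)$ witnesses the right disjunct. The case $i = 0$ needs no separate treatment: there $\symbelem^0 = \varnothing$, so the second case above cannot arise, only the left disjunct is ever produced, and $\diffset_{-1} = \varnothing$ is never invoked.

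The step I expect to be the main obstacle is the contradiction in the last paragraph: one must argue that a sealed state already present in $\symbelem^i$ and generating a genuinely new successor $\symbaelem$ could not itself already have been present one iteration earlier. The key fact making this work is that $\reform$ only ever drops a sealed state when it has no outgoing sealed transition ($\symbst \symbtrans \excst$), so every sealed state possessing a successor persists across a $\reform$ step, and therefore all of its $\symbtrans$-successors would already have been produced at the previous iteration.
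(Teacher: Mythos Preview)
Your argument is correct and follows essentially the same route as the paper: unfold $\combtransfer$ to write $\symbelem^{i+1} = \symbelem^i \cup \symbstep$ of the sealed component after $\reform$, pull back $\symbaelem$ through $\symbstep$ and then through $\dot{\areform}$, and in the $\aelem \in \symbelem^i$ case derive a contradiction from the assumption $\aelem \in \symbelem^{i-1}$ by showing its $\liftsymbtrans$-successor would already lie in $\symbelem^i$. Your treatment is in fact slightly more explicit than the paper's---you spell out why landing in $\absimapset \times \symbstset$ forces the relevant branch of $\areform$, and why the existence of a sealed successor of $\symbst$ is what makes $(\absimap,\symbst)$ survive the $\reform$ at iteration $i-1$---but the overall structure is the same.
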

\begin{proof}
  Let $i \in \numset$ and $\symbaelem \in \diffset_i = \symbelem^{i+1} \setminus \symbelem^i$ given.
  By definition,
  \[
    \symbelem^{i+1} = \symbelem^i \cup \symbstep({\symbelem^i}')
  \]
  where
  \[
    (\_, {\symbelem^i}') = \reform(\sabselem^i, \symbelem^i)
  \]
  Note that $\symbaelem \in \symbstep({\symbelem^i}')$,
  and by definition of $\symbstep$, there exists some
  $\symbaelem' \in {\symbelem^i}'$ that satisfies $\symbaelem' \liftsymbtrans \symbaelem$.
  Now, by definition of $\reform$,
  \[
    {\symbelem^i}' =
    \dot{\areform}(\{ (\view, \sabselem^i(\view)) \mid \view \in \viewset \} \cup \symbelem^i)
    \cap (\absimapset \times \symbstset)
  \]
  This means there exists
  $\aelem \in \{ (\view, \sabselem^i(\view)) \mid \view \in \viewset \} \cup \symbelem^i$
  that satisfies $\areform(\aelem) = \symbaelem'$. We have two possible cases for $\aelem$.
  \begin{itemize}
 
  \item $\aelem \in \{ (\view, \sabselem^i(\view)) \mid \view \in \viewset \}$

  In this case, $\areform(\aelem) = \asconverter(\aelem) = \symbaelem'$
  and the left condition for conclusion is satisfied.
  
  \item $\aelem \in \symbelem^i$

  In this case, $\areform(\aelem) = \aelem = \symbaelem'$.
  Now, let's assume that $\aelem \in \symbelem^{i-1}$.
  In that case, $\aelem$ would be preserved after reform step, that is,
  $\aelem \in {\symbelem^{i-1}}'$. Then, by definition of $\symbstep$,
  $\symbaelem \in \symbstep({\symbelem^{i-1}}') \subseteq \symbelem^i$
  which contradicts to the fact that $\symbaelem \in \diffset_i$.
  Therefore, $\aelem \notin \symbelem^{i-1}$, that is,
  $\aelem \in \symbelem^i \setminus \symbelem^{i-1} = \diffset_i$,
  and the right condition for conclusion is satisfied.
  \end{itemize}
\end{proof}

\begin{corollary}\label{corollary:only-from-diffset}
  \[
    \begin{array}{c}
      \forall i > 0. \; \sabselem^{i-1} = \sabselem^i \Rightarrow
      \forall \symbaelem \in \diffset_i. \\
      \exists \symbaelem' \in \diffset_{i-1} . \; \symbaelem' \liftsymbtrans \symbaelem
    \end{array}
  \]
\end{corollary}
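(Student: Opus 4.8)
The plan is to obtain this corollary as an immediate consequence of Lemma~\ref{lemma:diffset_prop}: under the extra hypothesis $\sabselem^{i-1} = \sabselem^i$ the first disjunct of that lemma becomes impossible, so only the second disjunct survives, and that disjunct is exactly the conclusion claimed here.

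Concretely, I would fix $i > 0$ with $\sabselem^{i-1} = \sabselem^i$, take an arbitrary $\symbaelem \in \diffset_i$, and invoke Lemma~\ref{lemma:diffset_prop}. It yields two cases: either $\asconverter((\view, \sabselem^i(\view))) \liftsymbtrans \symbaelem$ for some view $\view$, or there is $\symbaelem' \in \diffset_{i-1}$ with $\symbaelem' \liftsymbtrans \symbaelem$. In the latter case the corollary holds directly, so the whole task reduces to ruling out the former.

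So I would assume the first case and set $\symbaelem' = \asconverter((\view, \sabselem^i(\view)))$, which is a well-defined sealed element since it appears on the left of $\liftsymbtrans$ in the lemma, i.e.\ $(\view, \sabselem^i(\view)) \in \Dom(\asconverter)$. Using $\sabselem^{i-1} = \sabselem^i$ we get $(\view, \sabselem^{i-1}(\view)) = (\view, \sabselem^i(\view)) \in \Dom(\asconverter)$, so by the definition of $\areform$ (Definition~\ref{def:areform}) the reform step at iteration $i-1$ maps $(\view, \sabselem^{i-1}(\view))$ to $\symbaelem'$; hence $\symbaelem' \in {\symbelem^{i-1}}'$, where $(\_, {\symbelem^{i-1}}') = \reform(\sabselem^{i-1}, \symbelem^{i-1})$. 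Since $\symbaelem' \liftsymbtrans \symbaelem$, the definition of $\symbstep$ gives $\symbaelem \in \symbstep({\symbelem^{i-1}}')$, and therefore $\symbaelem \in \symbelem^{i-1} \cup \symbstep({\symbelem^{i-1}}') = \symbelem^i$ (the same recurrence used in the proof of Lemma~\ref{lemma:diffset_prop}, shifted by one index). But $\symbaelem \in \diffset_i = \symbelem^{i+1} \setminus \symbelem^i$ forces $\symbaelem \notin \symbelem^i$ — a contradiction. Hence the first case is impossible, only the second remains, and the corollary follows.

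The only step that needs care — and the one I would spell out most explicitly — is the middle link: verifying that $\asconverter((\view, \sabselem^i(\view)))$ being defined, together with $\sabselem^{i-1} = \sabselem^i$, really does put the same sealed element $\symbaelem'$ into ${\symbelem^{i-1}}'$ after the reform step of the previous iteration. This is just an unfolding of Definitions~\ref{def:areform} and~\ref{def:reform}, but it is where the hypothesis $\sabselem^{i-1} = \sabselem^i$ is consumed, and the assumption $i > 0$ is precisely what makes iteration $i-1$ (and the set $\diffset_{i-1}$) available to reason about.
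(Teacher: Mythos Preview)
Your proposal is correct and follows essentially the same route as the paper: both arguments rule out the first disjunct of Lemma~\ref{lemma:diffset_prop} by using $\sabselem^{i-1} = \sabselem^i$ to place $\asconverter((\view,\sabselem^{i-1}(\view)))$ into ${\symbelem^{i-1}}'$, then applying $\symbstep$ to conclude $\symbaelem \in \symbelem^i$, contradicting $\symbaelem \in \diffset_i$. The only cosmetic difference is that the paper re-runs the lemma's proof and branches on the origin of the analysis element before reform, whereas you invoke the lemma as a black box and branch on its conclusion; the substantive step is identical.
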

\begin{proof}
  The proof goes same as the previous lemma, until the point where we divide
  the case for $\aelem$. Let's assume that the first case holds, that is,
  \[
    \aelem \in \{ (\view, \sabselem^i(\view)) \mid \view \in \viewset \}
  \]
  Since $\sabselem^{i-1} = \sabselem^i$,
  \[
    \aelem \in \{ (\view, \sabselem^{i-1}(\view)) \mid \view \in \viewset \}
  \]
  In that case, $\aelem$ would be transformed after reform step, that is,
  $\asconverter(\aelem) = \symbaelem' \in {\symbelem^{i-1}}'$.
  Then, by definition of $\symbstep$,
  $\symbaelem \in \symbstep({\symbelem^{i-1}}') \subseteq \symbelem^i$
  which contradicts to the fact that $\symbaelem \in \diffset_i$.
  Therefore, only second case holds and the right conclusion in previous lemma is satisfied.
\end{proof}

\begin{lemma}[Property of TTL]\label{lemma:prop-ttl}
  \[
    \begin{array}{c}
      \forall i \geq 0. \; \forall \symbaelem \in \diffset_i.
      \ttl_i(\symbaelem) = k \Rightarrow \\
      k < N \wedge
      \exists (\view, \abselem). \; (\asconverter((\view,\abselem))
      \liftsymbtrans^{(N - k)} \symbaelem) \\
    \end{array}
  \]
\end{lemma}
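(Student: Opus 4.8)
The plan is to argue by induction on the iteration index~$i$, using Lemma~\ref{lemma:diffset_prop} to locate, for every newly produced sealed element $\symbaelem \in \diffset_i$, either a fresh $\asconverter$-image at iteration~$i$ or a predecessor in $\diffset_{i-1}$, and using the recursive clause defining $\ttl_i$ to carry the exponent $N-k$ along. Throughout I will write $D = \{\symbaelem' \in \diffset_{i-1} \mid \symbaelem' \liftsymbtrans \symbaelem\}$ for the set occurring in the definition of $\ttl_i(\symbaelem)$; the crucial bookkeeping observation is that the disjunct ``$\exists \symbaelem' \in \diffset_{i-1}.\ \symbaelem' \liftsymbtrans \symbaelem$'' of Lemma~\ref{lemma:diffset_prop} holds \emph{exactly} when $D \neq \varnothing$, so that $D = \varnothing$ forces the other disjunct, namely $\exists \view.\ \asconverter((\view, \sabselem^i(\view))) \liftsymbtrans \symbaelem$.

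For the base case $i = 0$ we have $\diffset_{-1} = \varnothing$ by convention, hence $D = \varnothing$ and $\ttl_0(\symbaelem) = N - 1$; setting $k = \ttl_0(\symbaelem)$ gives $k < N$, and the forced disjunct of Lemma~\ref{lemma:diffset_prop} supplies a view $\view$ with $\asconverter((\view, \sabselem^0(\view))) \liftsymbtrans \symbaelem$, which, since $N - k = 1$, is precisely the required witness $(\view, \sabselem^0(\view))$. For the inductive step I fix $i > 0$ and $\symbaelem \in \diffset_i$ and split on whether $D$ is empty. If $D = \varnothing$ the argument is identical to the base case at level~$i$: $\ttl_i(\symbaelem) = N - 1 = k < N$, the $\asconverter$-disjunct of Lemma~\ref{lemma:diffset_prop} holds, and $N - k = 1$ makes $(\view, \sabselem^i(\view))$ a witness. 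If $D \neq \varnothing$, I choose $\symbaelem^{*} \in D$ realizing $\text{min}(\dot\ttl_{i-1}(D))$, so that $k = \ttl_{i-1}(\symbaelem^{*}) - 1$ with $\symbaelem^{*} \liftsymbtrans \symbaelem$ and $\symbaelem^{*} \in \diffset_{i-1}$; the induction hypothesis applied to $\symbaelem^{*}$ yields $\ttl_{i-1}(\symbaelem^{*}) < N$ (whence $k < N$) together with some $(\view, \abselem)$ with $\asconverter((\view, \abselem)) \liftsymbtrans^{(N - \ttl_{i-1}(\symbaelem^{*}))} \symbaelem^{*}$, and appending the single step $\symbaelem^{*} \liftsymbtrans \symbaelem$ gives $\asconverter((\view, \abselem)) \liftsymbtrans^{(N - \ttl_{i-1}(\symbaelem^{*})) + 1} \symbaelem = \liftsymbtrans^{(N - k)} \symbaelem$, the required witness.

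The hard part—really the only subtle point—will be this case split: I have to verify carefully that emptiness of $D$ is exactly the negation of the second disjunct of Lemma~\ref{lemma:diffset_prop}, so that whenever $\ttl_i$ ``resets'' to $N-1$ there is a freshly converted analysis element exactly one lifted step away from $\symbaelem$, and so that a predecessor chain can always be extended by one step when $D \neq \varnothing$. Once that correspondence is pinned down, the rest is arithmetic on the exponent, together with the observation that $k \in \numset$ and $k < N$ keep $N - k \geq 1$, so each $\liftsymbtrans$-chain constructed is a genuine nonempty iterate and its composition with $\symbaelem^{*} \liftsymbtrans \symbaelem$ is well-formed. Note that condition~(\ref{equ:asc-cond}) of Theorem~\ref{theorem:shortcut} is not needed for this lemma itself—it will be used afterwards (in Corollary~\ref{corollary:bounded-ttl}) to upgrade $k < N$ to $0 < k < N$.
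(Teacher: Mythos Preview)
Your proposal is correct and follows essentially the same approach as the paper: induction on $i$, invoking Lemma~\ref{lemma:diffset_prop} to handle the $D=\varnothing$ case (including the base case $i=0$), and in the $D\neq\varnothing$ case selecting an element realizing $\min(\dot\ttl_{i-1}(D))$, applying the induction hypothesis, and extending the resulting $\liftsymbtrans$-chain by one step. Your explicit observation that $D\neq\varnothing$ is equivalent to the second disjunct of Lemma~\ref{lemma:diffset_prop}, and your remark that condition~(\ref{equ:asc-cond}) is deferred to Corollary~\ref{corollary:bounded-ttl}, are both accurate and slightly more explicit than the paper's presentation.
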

\begin{proof}
  We prove by induction on $i$.
  Let $\symbaelem \in \diffset_i$.
  \begin{itemize}
  \item If $i = 0$, $\ttl_0(\symbaelem) = N - 1 < N$
  and since only left conclusion of lemma~\ref{lemma:diffset_prop} can hold,
  there exists view $\view$ s.t.
  $\asconverter(\view, \sabselem^0(\view)) \liftsymbtrans^1 \symbaelem$.
  \item If $i > 0$, we have two cases for $D = 
    \{\symbaelem' \in \diffset_{i-1} \mid \symbaelem' \liftsymbtrans \symbaelem\}$.
  If $D = \varnothing$, the argument is similar as $i = 0$ case.
  Otherwise, let $\symbaelem' = \underset{\x \in D}{argmin}{\ttl_{i-1}(x)}$.
  
  By induction hypothesis, we have
  \[
    k' = \ttl_{i-1}(\symbaelem') < N
  \]
  and there exists $(\view, \abselem)$ such that
  \[
    \symbaelem'' = \asconverter((\view,\abselem)) \liftsymbtrans^{(N - k')} \symbaelem'.
  \]
  By definition of $\ttl_i$,
  $\ttl_i(\symbaelem) = \ttl_i(\symbaelem') - 1$, and $k = k' - 1$.
  Then,
  \[
    k = k' - 1 < N - 1 < N
  \]
  and
  $\symbaelem'' \liftsymbtrans^{(N - k - 1)} \symbaelem'$ with
  $\symbaelem' \liftsymbtrans \symbaelem$ implies that
  \[
    \symbaelem'' \liftsymbtrans^{(N - k)} \symbaelem.
  \]
  \end{itemize}
\end{proof}
\begin{corollary}\label{corollary:bounded-ttl}
  \[
    \forall i \geq 0. \; \forall \symbaelem \in \diffset_i. \;
    0 < \ttl_i(\symbaelem) < N
  \]
\end{corollary}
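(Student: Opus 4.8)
The plan is to derive both inequalities directly from Lemma~\ref{lemma:prop-ttl} together with condition~(\ref{equ:asc-cond}) of Theorem~\ref{theorem:shortcut}; no fresh induction is needed. Fix $i \geq 0$ and $\symbaelem \in \diffset_i$ and set $k = \ttl_i(\symbaelem)$. Lemma~\ref{lemma:prop-ttl} already gives $k < N$, which is exactly the upper bound $\ttl_i(\symbaelem) < N$, so the only remaining work is the strict lower bound $\ttl_i(\symbaelem) > 0$.

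For the lower bound I would use the second half of Lemma~\ref{lemma:prop-ttl}: there is an analysis element $(\view,\abselem)$ with $\asconverter((\view,\abselem)) \liftsymbtrans^{(N-k)} \symbaelem$. In particular $\asconverter$ is defined at $(\view,\abselem)$, so writing $\asconverter((\view,\abselem)) = (\absimap, \symbst)$, condition~(\ref{equ:asc-cond}) tells us $\symbst \symbtrans^{k'} \excst$ for some $k'$ with $1 < k' \leq N$ --- i.e., the sealed run out of $\symbst$ dies after exactly $k'$ states, hence after $k'-1$ sealed transitions. Because $\symbtrans$ is valid (Definition~\ref{def:valid-symbtrans}) it is a partial function: if $\symbst \symbtrans \symbst_1$ and $\symbst \symbtrans \symbst_2$ then $\instant{\symbst_1}{\imap} = \instant{\symbst_2}{\imap}$ for every $\imap \in \imapset$, forcing $\symbst_1 = \symbst_2$; and $\liftsymbtrans$ inherits this determinism since it merely carries along a fixed instantiation map. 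So the chain out of $(\absimap,\symbst)$ is unique and no $\liftsymbtrans$-chain from it can have more than $k'-1$ steps. But Lemma~\ref{lemma:prop-ttl} provides exactly such a chain of $N-k$ steps reaching $\symbaelem$, so $N - k \leq k' - 1 \leq N - 1$, which rearranges to $k \geq 1 > 0$. Together with $k < N$ this is $0 < \ttl_i(\symbaelem) < N$.

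I expect the one genuinely delicate step to be the inequality $N - k \leq k' - 1$. Its correctness rests on reading condition~(\ref{equ:asc-cond}) not just as ``some run terminates within $N$ steps'' but as a bound on the length of \emph{every} sealed chain issuing from $\asconverter((\view,\abselem))$; that strengthening is legitimate precisely because validity makes $\symbtrans$ deterministic, so I would state and prove the determinism of $\symbtrans$ (and hence of $\liftsymbtrans$) as a preliminary observation before invoking it here. A minor point worth checking is the degenerate case $N - k = 0$, which would make the witnessed chain empty and void the argument; but Lemma~\ref{lemma:prop-ttl} already delivers $k < N$, so $N - k \geq 1$ and the witnessed chain is non-trivial, and one also notes that $1 < k' \leq N$ forces $N \geq 2$, so all the quantities involved are sensible.
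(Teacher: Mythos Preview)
Your proposal is correct and follows essentially the same route as the paper: both take the upper bound $k<N$ directly from Lemma~\ref{lemma:prop-ttl} and obtain the lower bound by confronting the lemma's witnessed $\liftsymbtrans^{(N-k)}$-chain with the length bound coming from condition~(\ref{equ:asc-cond}). The paper phrases the lower bound as a proof by contradiction (assume $k\le 0$, get a chain of length $\ge N$) and leaves the determinism of $\symbtrans$ implicit, whereas you argue directly via $N-k\le k'-1\le N-1$ and spell out why validity makes $\symbtrans$ a partial function; these are presentational, not substantive, differences.
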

\begin{proof}
We already proved $k = \ttl_i(\symbaelem) < N$.
Now, let's assume that $k \leq 0$.
By previous lemma, there exists $(\view, \abselem)$ such that
\[
  (\asconverter((\view,\abselem)) \liftsymbtrans^{(N - k)} \symbaelem)
\]
Since $N - k \geq N$, this implies that there exists $\symbaelem'$ such that
\[
  (\asconverter((\view,\abselem)) \liftsymbtrans^N \symbaelem')
\]
However, this contradicts to the condition~(\ref{equ:asc-cond}) of $\asconverter$ that says
if $(\view,\abselem)$ is in domain of $\asconverter$,
the number of possible $\symbtrans$ from state of $\asconverter((\view,\abselem))$
is at most $N - 1$.
Therefore, $k > 0$.
\end{proof}

\begin{lemma}\label{lemma:dec-ttl}
  \[
    \begin{array}{c}
      \forall i > 0. \; \sabselem^{i-1} = \sabselem^i \Rightarrow \\
      \sup(\dot \ttl_i(\diffset_i)) \leq \sup(\dot \ttl_{i-1}(\diffset_{i-1})) - 1
    \end{array}
  \]
\end{lemma}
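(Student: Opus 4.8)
The plan is to derive this directly from Corollary~\ref{corollary:only-from-diffset} and the recursive definition of $\ttl_i$. Fix $i > 0$ and assume the stationarity hypothesis $\sabselem^{i-1} = \sabselem^i$. If $\diffset_i = \varnothing$ there is nothing to prove (the left-hand side is a supremum over the empty set), so I would assume $\diffset_i \neq \varnothing$ and pick an arbitrary $\symbaelem \in \diffset_i$; it suffices to show $\ttl_i(\symbaelem) \leq \sup(\dot\ttl_{i-1}(\diffset_{i-1})) - 1$, since then taking the supremum over all such $\symbaelem$ yields the claim.

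The key observation is that, under the hypothesis $\sabselem^{i-1} = \sabselem^i$, Corollary~\ref{corollary:only-from-diffset} applies and supplies some $\symbaelem' \in \diffset_{i-1}$ with $\symbaelem' \liftsymbtrans \symbaelem$. Hence the set $D = \{ \symbaelem' \in \diffset_{i-1} \mid \symbaelem' \liftsymbtrans \symbaelem \}$ appearing in the definition of $\ttl_i(\symbaelem)$ is non-empty, so we land in the ``otherwise'' branch and $\ttl_i(\symbaelem) = \min(\dot\ttl_{i-1}(D)) - 1$. Since $D \subseteq \diffset_{i-1}$, the quantity $\min(\dot\ttl_{i-1}(D))$ is an element of $\dot\ttl_{i-1}(\diffset_{i-1})$ and therefore at most $\sup(\dot\ttl_{i-1}(\diffset_{i-1}))$. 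Substituting gives $\ttl_i(\symbaelem) \leq \sup(\dot\ttl_{i-1}(\diffset_{i-1})) - 1$, which finishes the argument.

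There is essentially no hard part here; the whole content is recognizing that the hypothesis $\sabselem^{i-1} = \sabselem^i$ is precisely what forces every new sealed state in $\diffset_i$ to arise as a $\liftsymbtrans$-successor of a state already in $\diffset_{i-1}$ (rather than from a freshly converted analysis element via $\asconverter$), which is exactly what Corollary~\ref{corollary:only-from-diffset} states. If one wants to be fully careful about degenerate cases, the only thing to note is that $\diffset_i \neq \varnothing$ forces $\diffset_{i-1} \neq \varnothing$ as well (again by Corollary~\ref{corollary:only-from-diffset}, applied to any $\symbaelem \in \diffset_i$), so the right-hand side $\sup(\dot\ttl_{i-1}(\diffset_{i-1}))$ is a supremum over a non-empty set and the subtraction of $1$ is meaningful.
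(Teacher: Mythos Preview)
Your proof is correct and follows essentially the same approach as the paper: both invoke Corollary~\ref{corollary:only-from-diffset} to guarantee that the predecessor set $D$ is non-empty, then use the recursive branch of the $\ttl$ definition to bound $\ttl_i(\symbaelem)$ by $\sup(\dot\ttl_{i-1}(\diffset_{i-1})) - 1$ and take the supremum. Your version is in fact slightly more careful than the paper's, as you explicitly address the degenerate case $\diffset_i = \varnothing$ and spell out why $\min(\dot\ttl_{i-1}(D)) \leq \sup(\dot\ttl_{i-1}(\diffset_{i-1}))$.
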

\begin{proof}
  Let $\symbaelem \in \diffset_i$.
  By Corollary~\ref{corollary:only-from-diffset}, the set
  \[
    D = \{\symbaelem' \in \diffset_{i-1} \mid \symbaelem' \liftsymbtrans \symbaelem\}
  \]
  is non-empty, and for some $\symbaelem' \in \diffset_{i-1}$,
  \[
    \ttl_i(\symbaelem) = \ttl_{i-1}(\symbaelem') - 1 \leq \sup(\dot \ttl_{i-1}(\diffset_{i-1})) - 1
  \]
  Since it holds for every $\symbaelem \in \diffset_i$,
  \[
    \sup(\dot \ttl_i(\diffset_i)) \leq \sup(\dot \ttl_{i-1}(\diffset_{i-1})) - 1
  \]
\end{proof}

\section{Dynamic Shortcuts for JavaScript}\label{sec:javascript}
In this section, we introduce the core language of JavaScript that supports
first-class functions, open objects, and first-class property names, and define
{\sealed} execution of the core language for dynamic shortcuts.

\subsection{Core Language of JavaScript}
\begin{figure*}[t]
  \centering

  \fbox{$\st \trans \st$}
  \begin{mathpar}
    \inferrule*[width=0.48\textwidth]
    {
      \prog(\lab) = \refer = \expr\\
      \referrule{\st}{\refer}{\loc}\\
      \exprrule{\st}{\expr}{\val}\\
    }
    {
      \st = (\lab, \mem, \ctxt, \addr)
      \trans
      (\labnext(\lab), \mem[\loc \mapsto \val], \ctxt, \addr)
    }

    \inferrule*[width=0.48\textwidth]
    {
      \prog(\lab) = \refer = \kwobj\\
      \referrule{\st}{\refer}{\loc}\\
      \addr' = \text{(a fresh object address)}
    }
    {
      \st = (\lab, \mem, \ctxt, \addr)
      \trans
      (\labnext(\lab), \mem[\loc \mapsto \addr'], \ctxt, \addr)
    }

    \inferrule*[width=0.48\textwidth]
    {
      \prog(\lab) = \refer = \expr_f ( \expr_a )\\
      \referrule{\st}{\refer}{\loc}\\
      \exprrule{\st}{\expr_f}{\fval{x}{\lab_b}}\\
      \exprrule{\st}{\expr_a}{\val_a}\\
      \addr' = \text{(a fresh environment address)}
    }
    {
      \st = (\lab, \mem, \ctxt, \addr)
      \trans
      (\lab_b, \mem[(\addr', x) \mapsto \val_a], \ctxt[\addr' \mapsto (\addr,
      \labnext(\lab), \loc)], \addr')
    }

    \inferrule*[width=0.48\textwidth]
    {
      \prog(\lab) = \kwret \; \expr\\
      \exprrule{\st}{\expr}{\val}\\
      \ctxt(\addr) = (\addr', \lab', \loc)
    }
    {
      \st = (\lab, \mem, \ctxt, \addr)
      \trans
      (\lab', \mem[\loc \mapsto \val], \ctxt, \addr')
    }

    \inferrule*[width=0.48\textwidth]
    {
      \prog(\lab) = \kwif \; \expr \; \lab'\\
      \exprrule{\st}{\expr}{\kwtrue}\\
    }
    {
      \st = (\lab, \mem, \ctxt, \addr)
      \trans
      (\lab', \mem, \ctxt, \addr)
    }

    \inferrule*[width=0.48\textwidth]
    {
      \prog(\lab) = \kwif \; \expr \; \lab'\\
      \exprrule{\st}{\expr}{\kwfalse}\\
    }
    {
      \st = (\lab, \mem, \ctxt, \addr)
      \trans
      (\labnext(\lab), \mem, \ctxt, \addr)
    }
  \end{mathpar}

  \fbox{$\referrule{\st}{\refer}{\loc}$}
  \begin{mathpar}
    \inferrule*[width=0.48\textwidth]
    {}
    {
      \referrule{\st = (\lab, \mem, \ctxt, \addr)}{x}{(\addr, x)}\\
    }

    \inferrule*[width=0.48\textwidth]
    {
      \exprrule{\st}{\expr_0}{\addr_0}\\
      \exprrule{\st}{\expr_1}{\val_1}\\
      \val_1 \in \strset\\
    }
    {
      \referrule{\st = (\lab, \mem, \ctxt, \addr)}{\expr_0 [ \expr_1
      ]}{(\addr_0, \val_1)}
    }
  \end{mathpar}

  \fbox{$\exprrule{\st}{\expr}{\val}$}
  \begin{mathpar}
    \inferrule*[width=0.48\textwidth]
    {}
    {
      \exprrule{\st = (\lab, \mem, \ctxt, \addr)}{\pval}{\pval}
    }

    \inferrule*[width=0.48\textwidth]
    {}
    {
      \exprrule{\st = (\lab, \mem, \ctxt,
      \addr)}{\fval{x}{\lab'}}{\fval{x}{\lab'}}
    }

    \inferrule*[width=0.48\textwidth]
    {
      \referrule{\st}{\refer}{\loc}\\
      \loc \in \Dom(\mem)
    }
    {
      \exprrule{\st = (\lab, \mem, \ctxt, \addr)}{\refer}{\mem(\loc)}
    }

    \inferrule*[width=0.48\textwidth]
    {
      \exprrule{\st}{\expr_1}{\val_1}\\
      \cdots\\
      \exprrule{\st}{\expr_n}{\val_n}
    }
    {
      \exprrule{\st = (\lab, \mem, \ctxt, \addr)}
      {\op(\expr_1, \cdots, \expr_n)}{\op(\val_1, \cdots, \val_n)}
    }
  \end{mathpar}
  \vspace*{-1.5em}
  \caption{The transition relation for the core language of JavaScript}
  \vspace*{-.5em}
  \label{fig:core-trans-rel}
\end{figure*}

\begin{figure*}[t]
  \centering

  \fbox{$\viewtrans{\view}{\view'}: \absdom \rightarrow \absdom$}
  \begin{mathpar}
    \inferrule*[width=0.48\textwidth]
    {
      \abselem = (\absmem, \absctxt, \absaddr, \abscount)\\
      \prog(\lab) = \refer = \expr\\
      \referabssem{\refer}(\abselem) = L\\
      \exprabssem{\expr}(\abselem) = \absval
    }
    {
      \viewtrans{\lab}{\labnext(\lab)}(\abselem) =
      (\absmem[L \mapstos \absval], \absctxt, \absaddr, \abscount)
    }

    \inferrule*[width=0.48\textwidth]
    {
      \abselem = (\absmem, \absctxt, \absaddr, \abscount)\\
      \prog(\lab) = \refer = \kwobj\\
      \referabssem{\refer}(\abselem) = L\\
      \oabsaddr = \lab
    }
    {
      \viewtrans{\lab}{\labnext(\lab)}(\abselem) = (\absmem[L \mapstos \{
      \oabsaddr \}], \absctxt, \absaddr, \inc(\abscount, \oabsaddr))
    }

    \inferrule*[width=0.48\textwidth]
    {
      \abselem = (\absmem, \absctxt, \absaddr, \abscount)\\
      \prog(\lab) = \refer = \expr_f ( \expr_a )\\
      \referabssem{\refer}(\abselem) = L\\
      \fval{x}{\lab_b} \in \exprabssem{\expr_f}(\abselem)\\
      \exprabssem{\expr_a}(\abselem) = \absval_a\\
      \eabsaddr = \lab_b\\
      \eabsctxt = \absctxt[\eabsaddr \mapsto \absctxt(\eabsaddr) \cup \{
      (\absaddr, \labnext(\lab), L) \} ]
    }
    {
      \viewtrans{\lab}{\lab_b}(\abselem) = (\absmem[(\eabsaddr, x) \mapsto
      \absval_a], \eabsctxt, \eabsaddr, \inc(\abscount, \eabsaddr))
    }

    \inferrule*[width=0.48\textwidth]
    {
      \abselem = (\absmem, \absctxt, \absaddr, \abscount)\\
      \prog(\lab) = \kwret \; \expr\\
      \exprabssem{\expr}(\abselem) = \absval\\
      (\rabsaddr, \lab', L) \in \absctxt(\absaddr)
    }
    {
      \viewtrans{\lab}{\lab'}(\abselem) = (\absmem[L \mapstos \absval],
      \absctxt, \rabsaddr, \abscount)
    }

    \inferrule*[width=0.48\textwidth]
    {
      \prog(\lab) = \kwif \; \expr \; \lab'\\
      \kwtrue \in \exprabssem{\expr}(\abselem)
    }
    {
      \viewtrans{\lab}{\lab'}(\abselem) = \abselem
    }

    \inferrule*[width=0.48\textwidth]
    {
      \prog(\lab) = \kwif \; \expr \; \lab'\\
      \kwfalse \in \exprabssem{\expr}(\abselem)
    }
    {
      \viewtrans{\lab}{\labnext(\lab)}(\abselem) = \abselem
    }
  \end{mathpar}

  \fbox{$\referabssem{\refer}: \absdom \rightarrow \powerset{\abslocset}$}
  \begin{mathpar}
    \inferrule*[width=0.48\textwidth]
    {
      \abselem = (\absmem, \absctxt, \absaddr, \abscount)
    }
    {
      \referabssem{x}(\abselem) = \{ (\absaddr, x) \}
    }

    \inferrule*[width=0.48\textwidth]
    {
      \abselem = (\absmem, \absctxt, \absaddr, \abscount)\\
      A = \exprabssem{\expr_0}(\abselem) \cap \absaddrset\\
      S = \exprabssem{\expr_1}(\abselem) \cap \strset\\
    }
    {
      \referabssem{\expr_0 [ \expr_1 ]}(\abselem) = A \times S
    }
  \end{mathpar}

  \fbox{$\exprabssem{\expr}: \absdom \rightarrow \absvalset$}
  \begin{mathpar}
    \inferrule*[width=0.48\textwidth]
    {}
    {
      \exprabssem{\pval}(\abselem) = \{ \pval \}
    }

    {}
    {
      \exprabssem{\fval{x}{\lab}}(\abselem) = \{ \fval{x}{\lab} \}
    }

    \inferrule*[width=0.48\textwidth]
    {
      \abselem = (\absmem, \absctxt, \absaddr, \abscount)\\
      \absval = \bigjoin \{ \absmem(\absloc) \mid \absloc \in
      \referabssem{\refer}(\abselem) \}
    }
    {
      \exprabssem{\refer}(\abselem) = \absval
    }

    \inferrule*[width=0.48\textwidth]
    {
      \exprabssem{\expr_1}(\abselem) = \absval_1\\
      \cdots\\
      \exprabssem{\expr_n}(\abselem) = \absval_n
    }
    {
      \exprabssem{\op(\expr_1, \cdots, \expr_n)}(\abselem) = \ops(\absval_1,
      \cdots, \absval_n)
    }
  \end{mathpar}
  \vspace*{-1.5em}
  \caption{The semantics of view transition for the core language of JavaScript}
  \vspace*{-.5em}
  \label{fig:core-view-trans}
\end{figure*}

\[
  \begin{array}{ll@{~}c@{~}l}
    \text{Programs} & \prog &::=& (\lab: \inst)^*\\

    \text{Labels} & \lab &\in& \labset\\

    \text{Instructions} & \inst &::=&
    \refer = \expr \mid
    \refer = \kwobj \mid
    \refer = \expr ( \expr ) \mid
    \kwret \; \expr \mid
    \kwif \; \expr \; \lab\\

    \text{References} & \refer &::=&
    x \mid
    \expr [ \expr ]\\

    \text{Expressions} & \expr &::=&
    \pval \mid
    \lambda x. \; \lab \mid
    \refer \mid
    \op(\expr^*)\\
  \end{array}
\]

A program $\prog$ is a sequence of labeled instructions. An instruction $\inst$
is an expression assignment, an object creation, a function call, a return
instruction, or a branch.  A reference $\refer$ is a variable or a property
access of an object.  An expression $\expr$ is a primitive, a lambda function, a
reference, or an operation between other expressions.
\[
  \begin{array}{lr@{~}c@{~}l@{~}c@{~}l}
    \text{States} & \st &\in& \stset &=& \labset \times \memset \times
    \ctxtset \times \eaddrset\\
    \text{Memories} & \mem &\in& \memset &=& \locset \finmap \valset\\
    \text{Contexts} & \ctxt &\in& \ctxtset &=& \eaddrset \finmap (\eaddrset
    \times \labset \times \locset)\\
    \text{Locations} & \loc &\in& \locset &=& (\eaddrset \times \varset) \uplus
    (\oaddrset \times \strset)\\
    \text{Values} & \val &\in& \valset &=& \pvalset \uplus \oaddrset \uplus
    \fvalset\\
    \text{Primitives} & \pval &\in& \pvalset &=& \strset \uplus \cdots\\
    \text{Addresses} & \addr &\in& \addrset &=& \eaddrset \uplus \oaddrset\\
    \text{Functions} & \fval{x}{\lab} &\in& \fvalset &=& \varset \times
    \labset\\
  \end{array}
\]

States $\stset$ consist of labels $\labset$, memories $\memset$, contexts
$\ctxtset$, and environment addresses $\eaddrset$.  A memory $\mem \in \memset$
is a finite mapping from locations to values.  A context $\ctxt \in \ctxtset$ is
a finite mapping from environment addresses to tuple of environment addresses,
return labels, and left-hand side locations.  A location $\loc \in \locset$ is a
variable or an object property; a variable location consists of an environment
address and its name, and an object property location consists of an object
address and a string value.  A value $\val \in \valset$ is a primitive, an
address, or a function value.  An address $\addr \in \addrset$ is an environment
address or an object address.  A function value $\fval{x}{\lab} \in \fvalset$
consists of a parameter name and a body label.  In the core language, the closed
scoping is used for functions for brevity, thus only parameters and local
variables are accessible in a function body.

We formulate the concrete semantics of the core language as described in
Figure~\ref{fig:core-trans-rel}.  The transition
relation between concrete states is defined with the semantics of references and
expressions using two different forms \fbox{$\referrule{\st}{\refer}{\loc}$} and
\fbox{$\exprrule{\st}{\expr}{\val}$}, respectively.  The initial states are
$\istset = \{ (\ilab, \varnothing, \epsilon, \tladdr) \}$ where $\ilab$ denotes
the initial label, $\epsilon$ empty map, and $\tladdr$ the top-level environment
address.  The function $\labnext$ returns the next label of a given label in the
current program $\prog$.

\subsection{Abstract Semantics}
In the abstract semantics of the core language, we use the flow sensitivity with a
flow sensitive view abstraction $\fsviewmap: \labset \rightarrow \dom$ that
discriminates states using their labels: $\forall \lab \in \labset. \;
\fsviewmap(\lab) = \{ \st \in \stset \mid \st = (\lab, \_, \_, \_) \}$. Thus, the
sensitive abstract domain is defined as $\sabsdom = \labset \rightarrow
\absdom$.  We define an abstract state $\abselem \in \absdom$ as a tuple of an
abstract memory, an abstract context, an abstract address, and an
abstract counter as follows:
\[
  \begin{array}{l@{~}r@{~}c@{~}l@{~}c@{~}l}
\text{Abstract states} & \abselem &\in& \absdom &=& \absmemset \times \absctxtset
\times \absaddrset \times \abscountset\\
\text{Abstract memories} & \absmem &\in& \absmemset &=& \abslocset \finmap
\absvalset\\
\text{Abstract locations} & \absloc &\in& \abslocset &=& (\absaddrset \times
\varset) \uplus (\absaddrset \times \strset)\\
\text{Abstract addresses} & \absaddr &\in& \absaddrset &=& \labset\\
\text{Abstract contexts} & \absctxt &\in& \absctxtset &=& \absaddrset \finmap
\powerset{\absaddrset \times \viewset \times \powerset{\abslocset}}\\
\text{Abstract counters} & \abscount &\in& \abscountset &=& \absaddrset
\rightarrow \{ \abszero, \absone, \absmany \}\\
\text{Abstract values} & \absval &\in& \absvalset &=& \powerset{\pvalset
\uplus \absaddrset \uplus \fvalset}\\
  \end{array}
\]

An abstract memory $\absmem \in \absmemset$ is a finite mapping from abstract
locations $\abslocset$ to abstract values $\absvalset$.  Abstract locations
$\abslocset$ are pairs of abstract addresses with variable names or string
values. Abstract addresses $\absaddrset$ are defined with the
\textit{allocation-site abstraction} that partitions concrete addresses
$\addrset$ based on their allocation sites $\labset$.  Abstract contexts
$\absctxtset$ are finite maps from abstract addresses to powersets of triples of
abstract addresses, views, and powerset of abstract locations.  For abstract
counting~\cite{abstract-gc-counting, revisit-recency} in static analysis, we
define abstract counters $\abscountset$ that are mappings from abstract addresses to
their abstract counts representing how many times each abstract address has been
allocated; $\abszero$ denotes that it has never been allocated, $\absone$ once,
and $\absmany$ more than or equal to twice.

We define the semantics of the view transition for the core language.  For abstract
memories, we use the notation $\absmem[L \mapstos \absval]$ to represent the
update of multiple abstract locations in $L$ with the abstract value $\absval$.
It performs the strong update if the abstract address for an abstract location
$(\absaddr, \_) \in L$ is singleton: $\abscount(\absaddr) = \absone$.
Otherwise, it performs the weak update for the analysis soundness.  We use
the increment function $\inc: \abscountset \times \absaddrset \rightarrow
\abscountset$ of the abstract counter defined as follows:
\[
  \inc(\abscount)(\absaddr_0) = \lambda \absaddr \in \absaddrset. \; \left\{
    \begin{array}{ll}
      \absone & \text{if} \; \absaddr = \absaddr_0 \wedge
      \abscount(\absaddr_0) = \abszero\\
      \absmany & \text{if} \; \absaddr = \absaddr_0 \wedge
      \abscount(\absaddr_0) = \absone\\
      \abscount(\absaddr) & \text{otherwise}
    \end{array}
  \right.
\]

\subsection{{\SealeD} Execution}

We define {\sealed} states by not only extending the concrete values
$\valset$ with {\sealed} values $\symbset$ but also adding the abstract counters~
$\abscountset$:
% as follows:
\[
  \begin{array}{r@{~}c@{~}l}
    \symbstset &=& \labset \times \memset \times \ctxtset \times \eaddrset
    \times \abscountset\\
    \ctxtset &=& \eaddrset \finmap ((\eaddrset \times \labset \times \locset)
    \uplus \symbset)\\
    \valset &=& \pvalset \uplus \oaddrset \uplus \fvalset \uplus \symbset\\
    \abscountset &=& \oaddrset \rightarrow \{ \abszero, \absone, \absmany \}\\
  \end{array}
\]
Because JavaScript provides open objects, the properties of objects can be dynamically added or deleted.
Moreover, since object properties are string values that can be constructed at run time,
it is difficult to perform sound strong updates in static analysis.
To check the possibility of strong updates during {\sealed} execution,
we augment its states with the abstract counters $\abscountset$.

For each abstract value in a given abstract state,
if the abstract value denotes a single concrete value,
the converter $\asconverter: (\viewset \times
\absdom) \rightarrow (\absimapset \times \symbstset)$
keeps it; otherwise, $\asconverter$ replaces the abstract
value with its unique identifier and maintains the mapping from the
unique identifier to the abstract value to construct an abstract instantiation map.
The opposite converter $\saconverter: (\absimapset \times
  \symbstset)  \rightarrow (\viewset \times \absdom)$
recovers abstract values from their unique identifiers using the abstract instantiation map.
We define the {\sealed} transition relation $\symbtrans$
only if the next step does not require actual values of any {\sealed} values.
Otherwise, a given {\sealed} state does not have any {\sealed}
transitions to apply.  For example, we add the following rule:
\begin{mathpar}
  \inferrule
  {
    \prog(\lab) = \kwret \; \expr\\
    \exprrule{\symbst}{\expr}{\val}\\
    \ctxt(\addr) \in \symbset
  }
  {
    \symbst = (\lab, \mem, \ctxt, \addr, \abscount)\
    \symbtrans \excst
  }
\end{mathpar}
for the $\kwret$ statement. We extend each rule of the concrete semantics
to support such behaviors of {\sealed} values.

\section{Implementation}\label{sec:implementation}
We implemented JavaScript static analysis using dynamic shortcuts
presented in Section~\ref{sec:javascript} in a prototype implementation dubbed
$\tool$.  The tool is an extension of an existing state-of-the-art JavaScript
static analyzer SAFE~\cite{safe, safe2} with a dynamic analyzer
Jalangi~\cite{jalangi}, and it is an open-source project and available
online~\footnote{https://github.com/kaist-plrg/safe-ds}.  In this
section, we introduce challenges and solutions in implementing dynamic
shortcuts on existing JavaScript analyzers.

\paragraph{{\SealeD} Values.}
The main challenge of implementing dynamic shortcuts is to support {\sealed} execution on an existing JavaScript engine.  To represent an abstract
value, we use the \jscode{Proxy} object introduced in ECMAScript 6
(2015, ES6)~\cite{es6}, which allows developers to handle internal behaviors
of specific objects such as property reads and writes and implicit conversions.
We are inspired by \textsc{Mimic}~\cite{mimic}, which used \jscode{Proxy} to
capture accesses from internals of opaque functions.  When the dynamic analyzer
constructs an execution environment at the start of a dynamic shortcut, it
creates \jscode{Proxy} objects to represent abstract values via the
following \jscode{getSealedValue} function:
\begin{lstlisting}[style=myJSstyle]
function getSealedValue() {
  function detect() { /* access detection */ }
  return new Proxy(function() {}, {
    getPrototypeOf: detect,  ...
    construct     : detect
  }); }
var x = getSealedValue();
var y = x;
var z = x + 1;
\end{lstlisting}
The function creates a sealed value as a proxy object with a dummy
function object and a handler for all 13 traps using an access detection
function \jscode{detect}.  A sealed value invokes the function \jscode{detect}
when any of 13 pre-defined traps are operated on the object, which enables us to
determine whether an object is sealed or not.  For example, the variable
\jscode{y} successfully points to the same sealed value stored in \jscode{x}, but the
program invokes the function \jscode{detect} on line 9 because \jscode{x + 1} requires
the actual value of the sealed value.  In addition, we instrument unary and binary
operations in Jalangi so that we can detect all the accesses on the
sealed value beyond the 13 traps provided by \jscode{Proxy}.
Using this idea, we successfully extended the
JavaScript engine to support {\sealed} execution.

\paragraph{Synchronization of Control Points.}
For seamless interaction between static analysis and {\sealed} execution,
synchronization of control points in both sides is necessary.
The SAFE static analyzer and the Jalangi dynamic analyzer
have their own notations for control points that are not directly
compatible.
We use the source-code location of a target program as a key to synchronize.
Even though they use different parsers and we faced numerous location mismatches for corner cases,
we could synchronize control points of two analyzers by using the closest match
of their source-code locations rather than using their exact match.

\paragraph{Function-Level Dynamic Shortcut.}
A dynamic shortcut is activated when the current abstract state passes the
filter $\checker$.  Because SAFE and Jalangi are implemented in different languages, Scala and JavaScript,
respectively, we represent abstract states as JSON
objects and communicate between analyzers by passing JSON objects through a localhost
server.  If the filter admits dynamic shortcuts generously, the
analysis may suffer from frequent communications between static and dynamic
analyzers.  To adjust such a burden, $\tool$ supports only \textit{function-level} dynamic
shortcuts by activating dynamic shortcuts in function entries and deactivating them
in their corresponding function exits.

\paragraph{Termination.}
To guarantee the termination of static analysis using dynamic shortcuts, the
converter $\asconverter$ should pass an analysis element $(\view, \abselem)$
only when it terminates in a time bound $N$.  Since statically checking the
termination property is difficult, we simply perform {\sealed}
execution with a pre-determined time limit of 5 seconds.
When it times out, we treat it as a failure in conversion;
otherwise, we use the result of {\sealed} execution.

\section{Evaluation}\label{sec:eval}

We evaluate $\tool$ using the following research questions:
\begin{itemize}
\item \textbf{RQ1) Analysis Speed-up:} How much analysis time is reduced by
using dynamic shortcuts?
\item \textbf{RQ2) Precision Improvement:} How much analysis precision is
improved by using dynamic shortcuts?
\item \textbf{RQ3) Opaque Function Coverage:} How many opaque functions are
covered only by dynamic shortcuts?
\end{itemize}
We selected the official 306 tests of Lodash 4
(v.4.17.20)\footnote{https://github.com/lodash/lodash/blob/4.17.20/test/test.js}
used in the examples in Section~\ref{sec:motivation} as our evaluation target.
Recent work~\cite{value-refinement,
value-partitioning} also used the tests to evaluate their techniques.
Among them, we filtered out 37 tests that use JavaScript language
features SAFE does not support such as dynamic code generation using
\njscode{Function}, getters and setters, and browser-specific features like $\jscode{__proto__}$.
Thus, we used 269 out of 306 tests for the evaluation of $\tool$ and compared
its evaluation results with those of the baseline analyzer, SAFE.
For both SAFE and $\tool$, we used 400-depth, 10-length loop strings and
30-length call strings for precise analysis, and added some incomplete
models for opaque functions to soundly analyze Lodash tests.
We performed our experiments on a Ubuntu machine
equipped with 4.2GHz Quad-Core Intel Core i7 and 32GB of RAM.

\subsection{Analysis Speed-up}

\begin{figure}[t]
  \centering
  \vspace{2mm}
  \includegraphics[width=\linewidth]{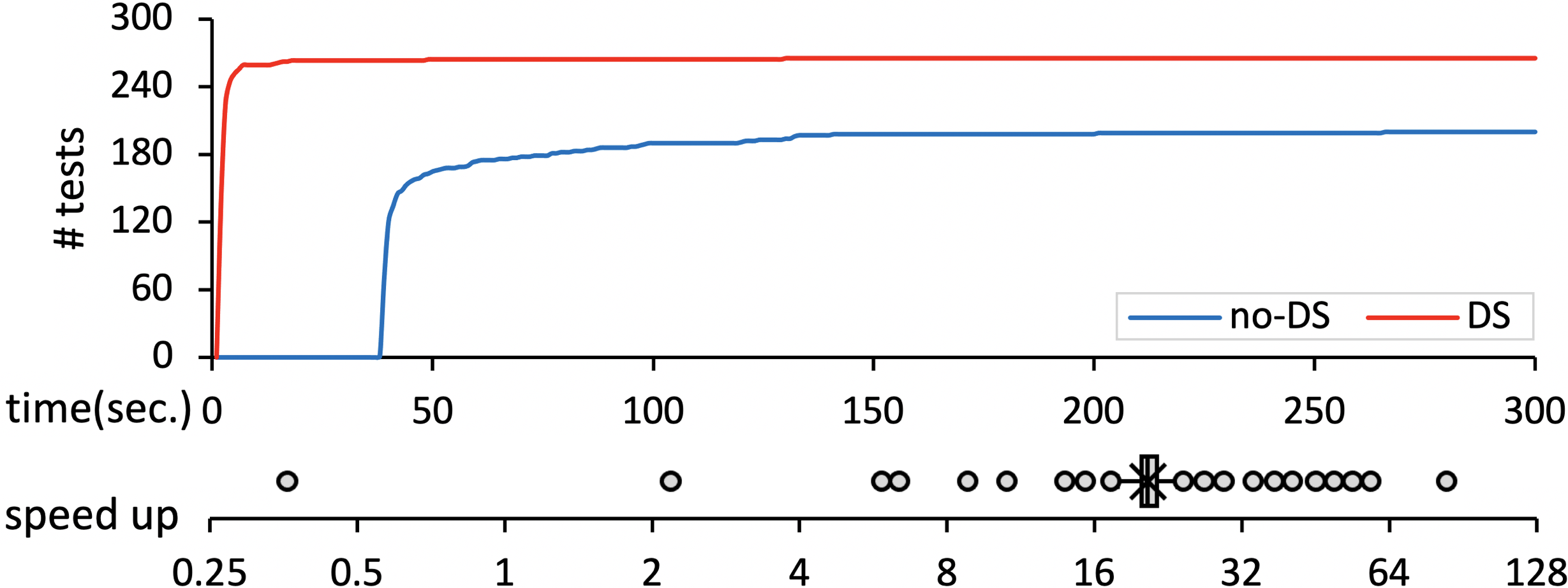}
  \vspace*{-1.5em}
  \caption{Analysis time for Lodash 4 \textit{original} tests without (no-DS)
  and with (DS) dynamic shortcuts within 5 minutes}
  \label{fig:conc-analysis-time}
\vspace*{-.5em}
\end{figure}

\begin{figure}[t]
  \centering
  \includegraphics[width=\linewidth]{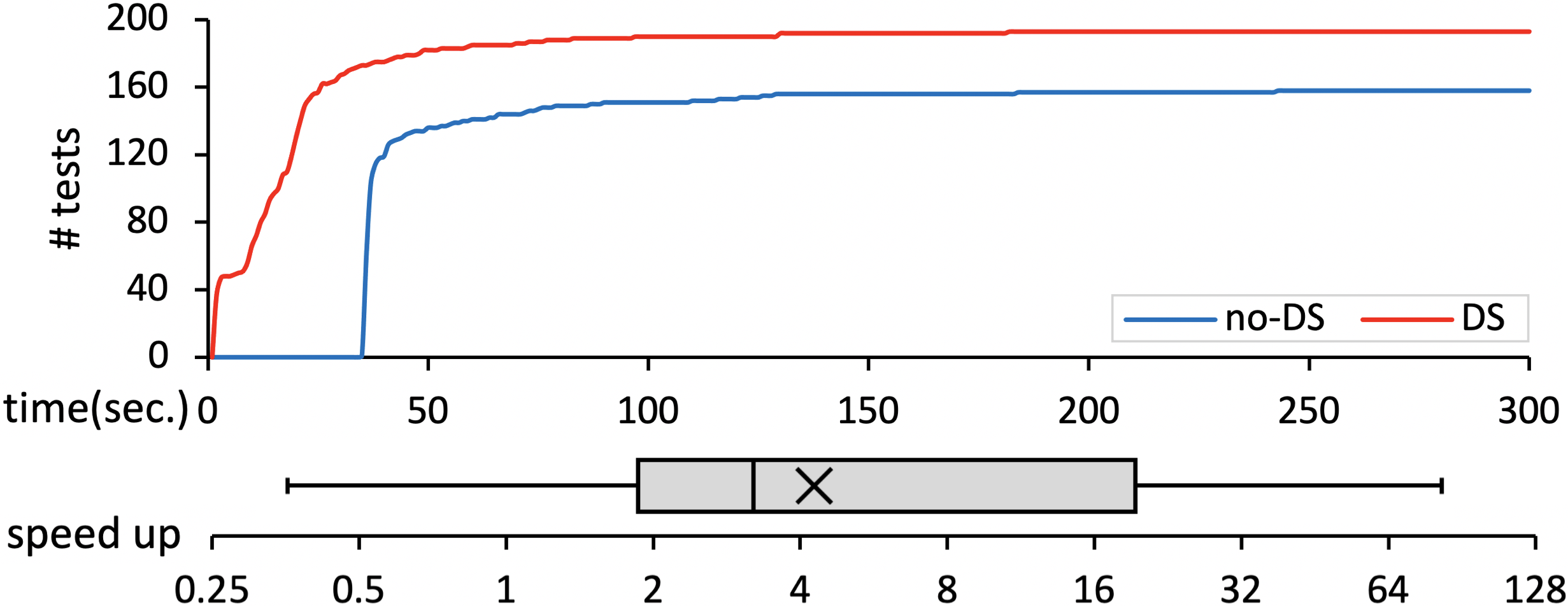}
  \vspace*{-1.5em}
  \caption{Analysis time for Lodash 4 \textit{abstracted} tests without (no-DS)
  and with (DS) dynamic shortcuts within 5 minutes}
  \label{fig:abs-analysis-time}
  \vspace*{-1em}
\end{figure}

We evaluated the effectiveness of dynamic shortcuts by static
analysis of 269 Lodash 4 tests with and without dynamic shortcuts.
Figure~\ref{fig:conc-analysis-time} depicts cumulative distribution charts for
their analysis time and a box plot in a logarithmic scale for speed up after
applying dynamic shortcuts.  In the upper chart, the $x$-axis is time and the
$y$-axis shows the number of tests within the time.  While the baseline analysis
(no-DS) finished analysis of 200 out of 269 tests within 5 minutes, our tool
(DS) finished analysis of 265 tests using dynamic shortcuts.  For finished
tests, the average analysis time is 49.46 seconds for no-DS and 3.21 seconds for
DS.  Among 200 tests analyzed by no-DS, one test is timeout in DS, thus
199 tests are analyzable by both analyzers. For them, we depict the box plot for
analysis speed up by dynamic shortcuts.  It shows that DS
outperforms no-DS up to 83.71$\x$ and 22.30$\x$ on
average.  Only for one test using $\jscode{_.sample}$, which
randomly samples a value from a given array, DS showed
0.36$\x$ speed of no-DS due to 24 times uses of dynamic shortcuts.

Note that since most tests use concrete values instead of
non-deterministic inputs, they can be analyzed by a few number of dynamic shortcuts.
In fact, among 269 tests, 259 tests are analyzed
by a single dynamic shortcut without using abstract semantics.
However, in real-world JavaScript programs, arguments of library
functions may include non-deterministic inputs.
To evaluate $\tool$ in a real-world setting,
we modified the tests to use abstract values.
We made abstract values by randomly selecting literals and replacing
one of them with its corresponding abstract value.
For example, if we select a numeric literal \jscode{42}, we modified it to the abstract numeric value
$\top_{\code{num}}$, which represents all the numeric values.
In the remaining section, we evaluated $\tool$ using the \textit{original} tests
and the \textit{abstracted} tests.

For abstracted tests as well, DS outperformed no-DS.
Figure~\ref{fig:abs-analysis-time} shows the analysis time of the abstracted tests.
Among 269 abstracted tests, no-DS finished analysis of 158 tests within 5 minutes,
but DS finished analysis of 193 tests.  For finished tests, the average analysis
time is 44.88 seconds for no-DS and 19.05 seconds for DS. Among 158 tests analyzed by no-DS, DS
timed-out for 2 tests.  For 156 tests analyzable by both analyzers,
DS outperformed no-DS up to 78.07$\x$ and 7.81$\x$ on average.
Except for 9 test cases, using dynamic shortcuts did show speed-ups.

\begin{figure}[t]
  \centering
  \vspace{2mm}
  \includegraphics[width=\linewidth]{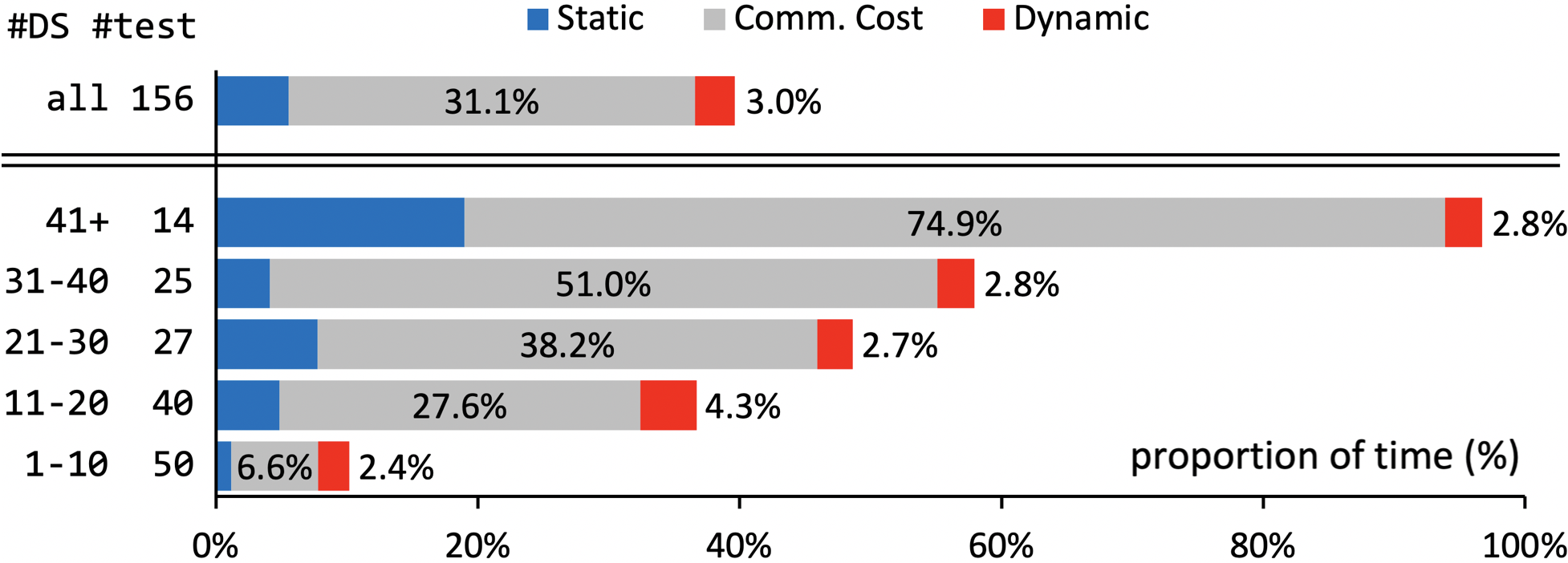}
  \vspace*{-1.5em}
  \caption{Analysis time ratio for 156 \textit{abstracted} tests}
  \label{fig:abs-analysis-ratio}
  \vspace*{-.5em}
\end{figure}

Unlike for the original tests, analysis of 156 abstracted tests invoked
20.35 dynamic shortcuts.  Because taking a dynamic shortcut
requires conversion between abstract states and {\sealed} values
and their exchanges between the static analyzer and the dynamic analyzer,
using dynamic shortcuts multiple times may incur more performance
overhead than performance benefits by using {\sealed} execution.
One conjecture is that the communication cost between the static
analyzer and the dynamic analyzer may be proportional to the number of
dynamic shortcuts.

To experimentally evaluate the conjecture, we investigated the relationship between
the communication cost (Comm. Cost) between analyzers and the number of dynamic shortcuts.
For 199 original tests, Comm. Cost was only
1.58\% compared to the analysis time of no-DS.  However, for 156
abstracted tests, Comm. Cost was 31.06\% compared to the analysis
time of no-DS.  Figure~\ref{fig:abs-analysis-ratio} presents the
analysis time ratio for 156 abstracted tests.
The $x$-axis represents the time ratio normalized by the total analysis time of
no-DS and the $y$-axis denotes the number of dynamic
shortcuts and the number of corresponding tests.
For all 156 tests, Comm. Cost is larger than
both the static analysis time (Static) and the dynamic analysis
time (Dynamic).  When dynamic shortcuts are performed less than 10 times,
Comm. Cost is modest compared to the baseline static
analysis time.  However, the more dynamic shortcuts are performed,
the less the performance benefits by using dynamic shortcuts.
Specifically, when dynamic shortcuts are performed more than 30 times,
Comm. Cost is even larger than half of cost of no-DS.
Based on this evaluation result, we believe that we can leverage
dynamic shortcuts by optimizing Comm. Cost between
the static analyzer and the dynamic analyzer.  One possible approach is to
reduce the sizes of JSON objects that represent abstract and sealed states by
representing only their updated parts.  Another approach could be to use a
communication system faster than a localhost server for passing JSON objects.

\begin{figure}[t]
  \centering
  \begin{subfigure}[t]{0.43\columnwidth}
    \includegraphics[width=\textwidth]{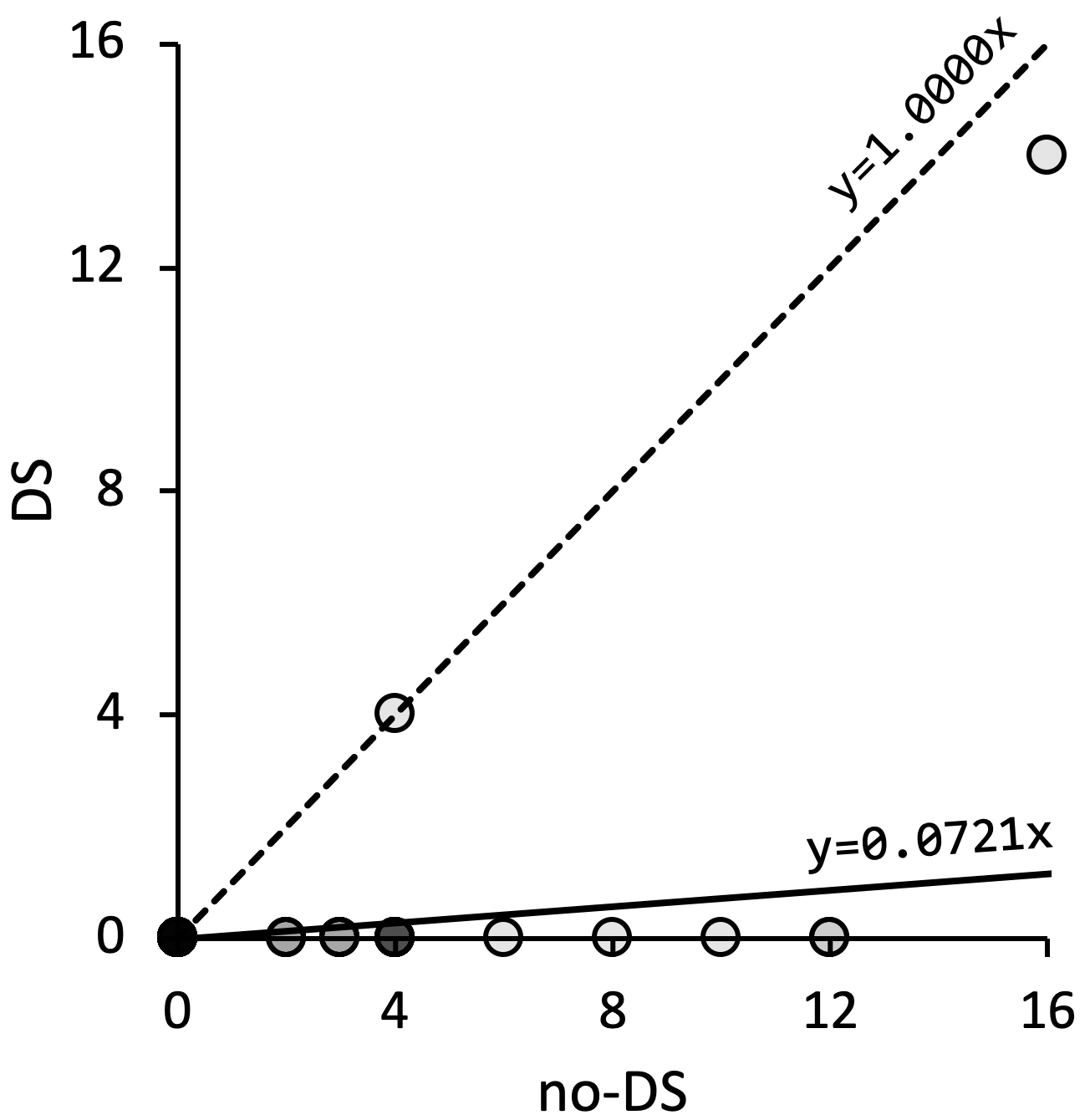}
    \caption{199 \textit{original} tests}
    \label{fig:conc-precision}
  \end{subfigure}
  \begin{subfigure}[t]{0.43\columnwidth}
    \includegraphics[width=\textwidth]{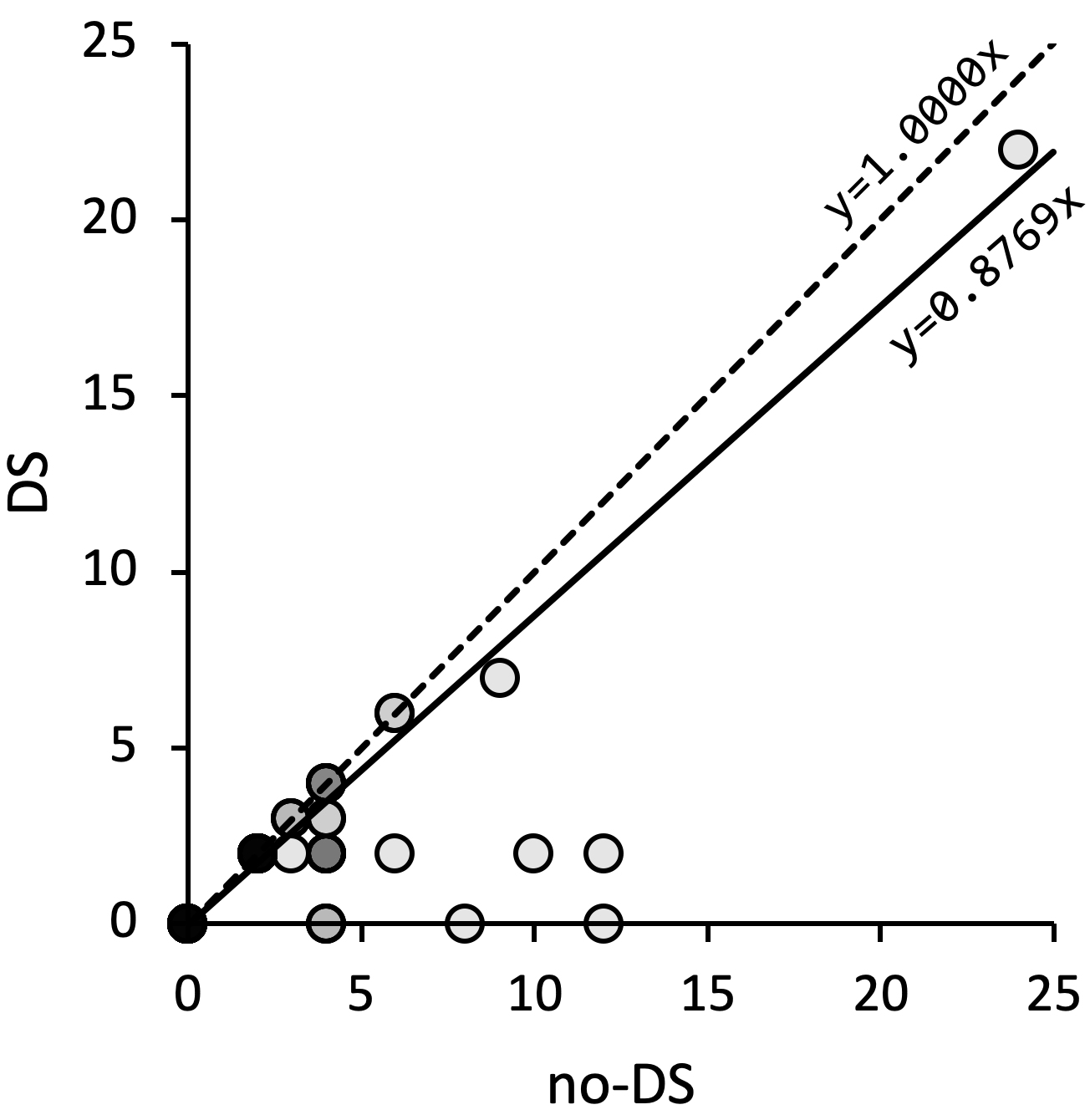}
    \caption{156 \textit{abstracted} tests}
    \label{fig:abs-precision}
  \end{subfigure}
  \vspace*{-1em}
  \caption{Failed assertions of analysis without (no-DS) and with (DS) dynamic shortcuts}
  \vspace*{-1.5em}
  \label{fig:precision}
\end{figure}

\begin{table*}[t]
  \vspace{2mm}
  \caption{Number of original (orig.) and abstracted (abs.) tests using dynamic shortcuts
only for each JavaScript built-in library}
  \label{table:func-replace}
  \vspace*{-1.5em}
  \centering
  \scriptsize
  \[
    \qquad
    \qquad
    \quad
    \begin{array}{c|l|c|c?c|l|c|c?c|l|c|c}

      \myhead{Object}       {Function}        {\# Replaced}

      \mysuff{1}{}          {Array    }  {204 / 205}{119 / 141} 	& \mysutf{1}{}             {String     }  { 20 /  20}{ 13 /  14} 	& \mysutf{1}{}       {Object        }  {265 / 265}{181 / 193} \mylinefff
      \mysuff{1}{}          {new Array}  {  0 /   0}{  0 /   7} 	& \mysuff{1}{}             {toString   }  {  0 /   0}{  0 /  14} 	& \mysutf{1}{}       {getPrototypeOf}  { 56 /  56}{ 34 /  35} \mylinefff
      \mysuff{1}{}          {isArray  }  {264 / 265}{181 / 193} 	& \mysuff{1}{}             {valueOf    }  {  0 /   0}{  0 /  20} 	& \mysbtt{1}{}       {create        }  {265 / 265}{193 / 193} \mylinefff
      \mysutf{1}{}          {concat   }  {265 / 265}{189 / 193} 	& \mysutt{1}{}             {charAt     }  {  8 /   8}{  6 /   6} 	& \mysutf{2}{Object} {defineProperty}  {265 / 265}{190 / 193} \mylinefff
      \mysutt{1}{}          {join     }  {265 / 265}{193 / 193} 	& \mysutt{1}{}             {charCodeAt }  { 15 /  15}{  8 /   8} 	& \mysbtt{1}{}       {freeze        }  {  1 /   1}{  1 /   1} \mylinefff
      \mysutt{1}{}          {pop      }  { 25 /  25}{ 14 /  14} 	& \mysutt{1}{}             {indexOf    }  {  2 /   2}{  1 /   1} 	& \mysutf{1}{}       {keys          }  {265 / 265}{191 / 193} \mylinefff
      \mysutf{2}{Array}     {push     }  {265 / 265}{186 / 193} 	& \mysutf{2}{String}       {match      }  { 26 /  26}{ 16 /  18} 	& \mysuff{1}{}       {toString      }  {264 / 265}{138 / 193} \mylinefff
      \mysutt{1}{}          {reverse  }  { 10 /  10}{  6 /   6} 	& \mysutf{1}{}             {replace    }  { 56 /  56}{ 31 /  37} 	& \mysutf{1}{}       {hasOwnProperty}  {265 / 265}{190 / 193} \mylinefft
      \mysutt{1}{}          {shift    }  {  3 /   3}{  2 /   2} 	& \mysutf{1}{}             {slice      }  {265 / 265}{191 / 193} 	& \mysbtt{1}{JSON}   {stringify}       {  1 /   1}{  1 /   1} \mylinefft
      \mysutt{1}{}          {slice    }  {265 / 265}{193 / 193} 	& \mysutt{1}{}             {split      }  {  5 /   5}{  2 /   2} 	& \mysutf{1}{}       {parseInt}        {  2 /   2}{  1 /   2} \mylinefff
      \mysutf{1}{}          {sort     }  { 69 /  69}{ 38 /  39} 	& \mysutf{1}{}             {substring  }  {214 / 214}{136 / 145} 	& \mysutf{1}{Global} {isNaN   }        { 15 /  15}{ 11 /  40} \mylinefff
      \mysutf{1}{}          {splice   }  { 25 /  25}{  9 /  12} 	& \mysutf{1}{}             {toLowerCase}  {215 / 215}{135 / 146} 	& \mysbtt{1}{}       {isFinite}        {  3 /   3}{  1 /   1} \mylinefft
      \mysutt{1}{}          {unshift  }  {  2 /   2}{  2 /   2} 	& \mysutf{1}{}             {toUpperCase}  { 11 /  11}{  6 /   7} 	& \mysbtt{1}{}       {RegExp    }      {265 / 265}{193 / 193} \mylinefff
      \mysutf{1}{}          {indexOf  }  { 94 /  94}{ 61 /  66} 	& \mysutt{1}{}             {fromCharCode} {  1 /   1}{  1 /   1}  &	\mysuff{2}{RegExp} {new RegExp}      {  0 /   0}{  0 /   1} \mylineftf
      \mysutf{1}{}          {every    }  { 92 /  92}{ 43 /  47} 	& \mysuff{1}{Date}         {new Date}     {  0 /   1}{  0 /   1} 	& \mysbtt{1}{}       {exec      }      {265 / 265}{193 / 193} \mylinettf
      \mysuff{1}{}          {ceil }      { 37 /  38}{ 20 /  21} 	& \mysutt{1}{}             {Number}       {  2 /   2}{  2 /   2} 	& \mysuff{1}{}       {test      }      {264 / 265}{185 / 193} \mylinefft
      \mysuff{1}{}          {floor}      { 16 /  18}{  8 /  10} 	& \mysutf{1}{Number}       {toFixed}      {  1 /   1}{  0 /   0} 	& \mysutf{1}{}       {Error     }      {  1 /   1}{  0 /   1} \mylinefff
      \mysuff{2}{Math}      {max  }      {264 / 265}{179 / 193} 	& \mysuff{1}{}             {valueOf}      {  0 /   0}{  0 /  28} 	& \mysuff{1}{Error}  {new RangeError}  {  0 /   0}{  0 /   2} \mylineftf
      \mysutf{1}{}          {min  }      { 64 /  64}{ 31 /  44} 	& \mysutt{1}{}             {toString}     {265 / 265}{193 / 193} 	& \mysuff{1}{}       {new TypeError }  {  0 /   0}{  0 /   7} \mylinefft
      \mysutt{1}{}          {pow  }      { 11 /  11}{  6 /   6} 	& \mysuff{1}{Function}     {apply   }     {263 / 265}{133 / 193} 	& \mysbtt{2}{Boolean}{Boolean}         {  3 /   3}{  2 /   2} \mylinefff
      \mysutt{1}{}          {round}      {  2 /   2}{  1 /   1} 	& \mysuff{1}{}             {call    }     {259 / 265}{ 50 / 193} 	& \mysuff{1}{}       {valueof}         {  0 /   0}{  0 /   7}
    \end{array}
  \]
\end{table*}

\subsection{Precision Improvement}

To evaluate the analysis precision improvement of dynamic shortcuts, we measured
the number of failed assertions produced by no-DS and DS.  Because both no-DS
and DS are sound, high (low) number of failed assertions denotes low (high)
analysis precision.

Figure~\ref{fig:precision} depicts the comparison of the analysis precision
between no-DS and DS.  The $x$-axis and the $y$-axis denote the number of failed
assertions produced by no-DS and DS, respectively.  For example, if both DS and
no-DS failed 4 assertions in an original test, the figure shows a circle at the
point (4, 4).  Since multiple circles can be at the same point if both
DS and no-DS failed the same number of assertions, we use darker
gray to denote a larger number of tests in a heat-map form. The darker the
circle is, the more tests it indicates.  The dotted line denotes the $y=x$ line
and all the circles are below or on the line, which means DS produces less or equal numbers
of assertions compared to no-DS for all tests.  On the other hand, the solid
line denotes the average improvement, which is the ratio of the total number of
failed assertions produced by no-DS to that produced by DS.  For 199 original
tests that are analyzable by both analyzers, Figure~\ref{fig:precision}(a) shows
that dynamic shortcuts reduced the number of failed assertions by 92.79\% on
average.  For 156 abstracted tests that are analyzable by both analyzers,
Figure~\ref{fig:precision}(b) shows that dynamic shortcuts successfully cut down
the number of failed assertions by 12.31\% on average.  Thus, on average,
dynamic shortcuts removed analysis of 92.79\% and 12.31\% failed assertions for
original and abstracted tests, respectively.

\vspace*{-1em}

\subsection{Opaque Function Coverage}
To evaluate how much manual modeling efforts of opaque functions
are reduced by dynamic shortcuts, we measured the number of tests for which
opaque functions are analyzed only by dynamic analysis not by
static analysis.  Table~\ref{table:func-replace} summarizes the result.
For 265 original tests and 193 abstracted tests that DS finished analysis, we measured the
number of tests that use only dynamic shortcuts instead of manual modeling
for each JavaScript built-in library function.  For each row,
\textbf{Object} column denotes a built-in object, \textbf{Function} a function
name, and \textbf{\#~Replaced} the number of tests successfully replacing manual
modeling via dynamic shortcuts over the total number of tests using the target function.
For example, the first row in the leftmost side describes that \jscode{Array} is used in
205 original tests and 141 abstracted tests.  Among them, 204 original
tests and 119 abstracted tests are successfully analyzed by using dynamic shortcuts
instead of manual modeling of \jscode{Array}.  Each filled cell describes
a fully replaceable case.  Therefore, dynamic
shortcuts effectively lessen the burden of manual modeling for JavaScript
built-in functions. For the original tests, 45 out of 63 built-in functions are replaceable
for them.  For the abstracted tests, 22 built-in functions are analyzed by only dynamic shortcuts.

\section{Related Work}\label{sec:related}

\paragraph{Combined Analysis}
The most related previous work is combined analysis that utilizes dynamic
analysis during Java static analysis introduced by \citet{concerto}.
They proved that their combined analysis is sound and showed that it could
significantly improve the precision and performance of Java static analysis by
evaluating their tool, \concerto.  However, their approach has several
limitations compared with dynamic shortcuts.  First, it syntactically
divides a given program to \textit{applications} parts for static
analysis and \textit{frameworks} parts for dynamic analysis.  Thus, it cannot
freely switch between static analysis and dynamic analysis.  It is
even impossible to perform both static and dynamic analysis of 
the same program part in different contexts.  In addition, while they
introduced \textit{mostly-concrete interpretation} similar to our
{\sealed} execution, it supports only a special \textit{unknown}
value that represents any possible value.  Thus, it cannot preserve
the precision of complex abstract domains~\cite{revisit-recency,
regex, weaklyAPLAS, weaklySPE} frequently used in JavaScript static analysis.
On the contrary, {\sealed} execution automatically detects when
to switch to static analysis to use abstract semantics for abstract values.
Finally, \concerto\ preserves the soundness when a program satisfies
the \textit{state separation hypothesis}.  It assumes that the states
of application parts and framework parts are not interrogated
or manipulated by each other.  While the assumption may be reasonable
for static analysis of Java applications using external libraries, it
is not satisfied for JavaScript programs in general.  Unlike their
approach, our approach does not have any assumptions between static and
dynamic analysis parts.

\paragraph{Concolic Execution}
Concolic execution~\cite{dart} is closely related to dynamic shortcuts because
it also leverages concrete execution for symbolic execution.
Symbolic execution~\cite{symbolic} is an execution of a program with symbolic values, and
it can be treated as an abstract interpretation with symbolic expressions and path constraints.
To resolve path constraints with symbolic expressions,
symbolic execution engines such as KLEE~\cite{klee} and SAGE~\cite{sage} utilize
Satisfiability Modulo Theory (SMT) solvers as back-end modules.
On the contrary, we formalized dynamic shortcuts as a technique to combine concrete execution with
a general abstract interpretation, not only with symbolic execution.
Thus, dynamic shortcuts are theoretically applicable to any kind of abstract interpretation,
including symbolic execution, and it is a more general definition of concolic execution.

\paragraph{Automatic Modeling}
For static analysis of JavaScript programs, modeling behaviors of built-in
libraries or host-dependent functions is necessary because they are opaque code.
Since manual modeling is error-prone and labor-intensive,
researchers~\cite{safewapi, safets} have utilized type information to
automatically model their behaviors.  However, type is not enough
to reflect complex semantics and side-effects.
To alleviate the problem, \citet{mimic} introduced a technique
to infer JavaScript code for opaque code using concrete execution.
They leveraged ES6 \jscode{Proxy} objects to collect partial execution traces
from opaque code and synthesized JavaScript code using the extracted behaviors.
Instead of synthesizing JavaScript code,
\citet{sra} presented a \textit{Sample-Run-Abstract (SRA)} approach for
on-demand modeling focusing on the current abstract states during static analysis
by sampling well-distributed concrete states.
However, all the previous work sacrifice the soundness of static analysis.
On the contrary, while dynamic shortcuts is not always applicable to opaque
functions, it is sound if it is applicable.

\paragraph{Pruning Analysis Scope}
Another approach to utilize dynamic analysis for JavaScript static analysis is
to prune the scope of analysis.  \citet{determinacy} proposed dynamic
determinacy analysis.  They specialized target source code with determinacy
facts so that static analysis can get benefits from elimination of \jscode{eval}
and constant property names.  \citet{blendedJS} introduced \textit{blended taint
analysis}, which specializes JavaScript dynamic
language features such as dynamic code generation or variadic function calls.
It first performs dynamic analysis to collect traces with concrete values used
in dynamic language features and restricts the semantics of features based on the
collected traces during static analysis.  \citet{battles, eha} utilize three points to reduce
analysis scope: initial states, dynamically loaded files, and event handlers.
Unfortunately, all the above approaches except~\cite{determinacy} do not preserve
soundness of static analysis unlike our approach using dynamic shortcuts.

\section{Conclusion}\label{sec:conclusion}
We presented a novel technique for JavaScript static analysis using
\textit{dynamic shortcuts}.  It can significantly accelerate static analysis
and lessen the modeling efforts for opaque code by freely leveraging high
performance of dynamic analysis for concretely executable program parts.
To maximize such benefits, we proposed \textit{{\sealed} execution}, which
performs concrete execution using {\sealed} values for abstract values.  We
formally defined static analysis using dynamic shortcuts in the abstract
interpretation framework and proved its soundness and termination.  We developed
$\tool$ as a prototype implementation of the proposed approach by extending a
combination of the state-of-the-art static and dynamic analyzers SAFE and
Jalangi.  Our tool accelerates the speed of static analysis 22.30$\x$ for
original tests and 7.81$\x$ for abstracted tests of Lodash 4 library.  Moreover,
it reduces the number of failed assertions by 12.31\% by using {\sealed}
execution instead of manual modeling for 22 opaque functions on average.

\section*{Acknowledgements}
This work was supported by National Research Foundation of
Korea (NRF) (Grants NRF-2017R1A2B3012020 and 2017M3C4A7068177).

\balance
\bibliographystyle{ACM-Reference-Format}
\bibliography{ref}

\end{document}